\numberwithin{equation}{section}
\setlist[enumerate]{label=\upshape(\arabic*),leftmargin=*}
\crefname{theorem}{Theorem}{Theorems}
\crefname{lemma}{Lemma}{Lemmas}
\crefname{corollary}{Corollary}{Corollaries}
\crefname{construction}{Construction}{Constructions}
\theoremstyle{plain}
\newtheorem{theorem}{Theorem}[section]
\newtheorem{lemma}[theorem]{Lemma}
\newtheorem{proposition}[theorem]{Proposition}
\newtheorem{corollary}[theorem]{Corollary}
\theoremstyle{definition}
\newtheorem{definition}{Definition}[section]
\newtheorem{example}{Example}[section]
\theoremstyle{remark}
\newtheorem{remark}{Remark}
\newcommand{\ifa}{if and only if\xspace}
\newcommand\qtq[1]{\quad\text{#1}\quad\xspace}
\newcommand{\onetoone}{$1$-to-$1$\xspace}
\newcommand{\twotoone}{$2$-to-$1$\xspace}
\newcommand{\thrtoone}{$3$-to-$1$\xspace}
\newcommand\mtoone{$m$-to-$1$\xspace}
\newcommand\ntoone{$n$-to-$1$\xspace}
\newcommand\mfq{$m$-to-$1$ on~$\mathbb{F}_{q}$\xspace}
\newcommand\mfqtwo{$m$-to-$1$ on $\mathbb{F}_{q^2}$\xspace}
\newcommand\mfield[2]{$#1$-to-$1$ on~$#2$\xspace}
\newcommand\mset[2]{$#1$-to-$1$ on~$#2$\xspace}
\newcommand{\tr}{\mathrm{Tr}}
\newcommand{\trtnt}{\mathrm{Tr}_{2^n/2}}
\newcommand{\trqnq}{\mathrm{Tr}_{q^n/q}}
\newcommand{\nm}{\mathrm{N}}
\newcommand{\nmqnq}{\mathrm{N}_{q^n/q}}
\newcommand{\nmqnqd}{\mathrm{N}_{q^n/q^d}}
\newcommand{\n}{\mathbb{N}}
\newcommand{\z}{\mathbb{Z}}
\newcommand{\f}{\mathbb{F}}
\newcommand{\ftwon}{\mathbb{F}_{2^n}}
\newcommand{\ftwonx}{\mathbb{F}_{2^n}[x]}
\newcommand{\fq}{\mathbb{F}_{q}}
\newcommand{\fqx}{\mathbb{F}_{q}[x]}
\newcommand{\fqstar}{\mathbb{F}_{q}^{*}}
\newcommand{\fqtwo}{\mathbb{F}_{q^2}}
\newcommand{\fqtwox}{\mathbb{F}_{q^2}[x]}
\newcommand{\fqn}{\mathbb{F}_{q^n}}
\newcommand{\fqnx}{\mathbb{F}_{q^n}[x]}
\newcommand{\fqnstar}{\mathbb{F}_{q^n}^{*}}
\newcommand{\fpn}{\mathbb{F}_{p^n}}
\journal{XXX}
\date{\today}
\begin{document}
\begin{frontmatter}
\title{Large class of many-to-one mappings over \\
quadratic extension of finite fields}
\tnotetext[t1]{
 The Magma codes involved in this paper can be downloaded from \href{http://dx.doi.org/10.13140/RG.2.2.20122.56008}{here}.
 This work was partially supported by 
 the Natural Science Foundation of Shandong (No.\ ZR2021MA061),
 the National Natural Science Foundation of China (No.\ 12461103),
 and NSERC of Canada (RGPIN-2023-04673).
}

\author[QF]{Yanbin Zheng}
\ead{zheng@qfnu.edu.cn}

\author[QF]{Meiying Zhang}
\ead{1401783670@qq.com}

\author[QF]{Yanjin Ding}
\ead{596239463@qq.com}

\author[XJ]{Zhengbang Zha}
\ead{zhazhengbang@163.com}

\author[OT]{Qiang Wang}
\ead{wang@math.carleton.ca}
\cortext[cor]{Corresponding author.}

\address[QF]{School of Mathematical Sciences, Qufu Normal University, Qufu 273165, China}

\address[XJ]{College of Mathematics and System Science, Xinjiang University, Urumqi 830017, China}

\address[OT]{School of Mathematics and Statistics, 
Carleton University, 1125 Colonel By Drive,\\ Ottawa, ON K1S 5B6, Canada}

\begin{abstract}
Many-to-one mappings and permutation polynomials over finite fields have important applications in cryptography and coding theory. In this paper, we study the many-to-one property of a large class of polynomials such as $f(x) = h(a x^q + b x + c) + u x^q + v x$, where $h(x) \in \mathbb{F}_{q^2}[x]$ and $a$, $b$, $c$, $u$, $v \in \mathbb{F}_{q^2}$. Using a commutative diagram satisfied by $f(x)$ and trace functions over finite fields, we reduce the problem whether $f(x)$ is a many-to-one mapping on $\mathbb{F}_{q^2}$ to another problem whether an associated polynomial $g(x)$ is a many-to-one mapping on the subfield $\mathbb{F}_{q}$. In particular, when $h(x) = x^{r}$ and $r$ satisfies certain conditions, we reduce $g(x)$ to polynomials of small degree or linearized polynomials. Then by employing the many-to-one properties of these low degree or linearized polynomials on $\mathbb{F}_{q}$, we derive a series of explicit characterization for $f(x)$ to be many-to-one on $\mathbb{F}_{q^2}$. On the other hand, for all $1$-to-$1$ mappings obtained in this paper, we  determine the inverses of these permutation polynomials. Moreover, we also explicitly construct involutions from $2$-to-$1$ mappings of this form. Our findings generalize and unify many results in the literature.
\end{abstract}

\begin{keyword}
 Permutations \sep
 Two-to-one mappings \sep
 Many-to-one mappings 
\MSC[2010]  11T06 \sep 11T71
\end{keyword}

\end{frontmatter}

\section{Introdution}

Let $q$ be a prime power, $\fq$ the finite field with $q$ elements, 
and $\fqx$ the ring of polynomials over~$\fq$. 
A polynomial $f(x) \in \fqx$ is called a \textit{permutation polynomial}
(PP) of $\fq$ if it induces a one-to-one mapping from $\fq$ to itself.
PPs of $\fq$ have been extensively studied; see for example 
\cite{Hou15,WangIndex19,WangInverse24}. 
Recently, two-to-one mappings over $\fq$ are used 
to construct cryptographic functions and linear codes  \cite{Ding153265,Ding162288,KLiHQ214263,2to1-MesQ19,MesQC23719,MesQCY233285}.  
This motivates the further theoretical study of two-to-one mappings and, 
more generally, many-to-one mappings. 
In order to unify and generalize several previous results 
in the literature, this paper studies the many-to-one property of 
$f(x) = h(a x^q + b x + c) + u x^q + v x$, 
where $h(x) \in \fqtwox$ and $a, b, c, u, v \in \fqtwo$.

We first review the progress of such class of PPs.
In order to derive new Kloosterman sums identities over $\ftwon$, 
Helleseth and Zinoviev \cite{HZ03} proved in 2003 that
$(x^2 + x + c)^{r} + x$ is a PP of $\ftwon$, 
where $\trtnt(c) = 1$ and $r \in \{-1, -2\}$.
Inspired by their work, many researchers 
began to study such class of PPs.
From 2007 to 2013, many PPs of $\fpn$ of the form 
$(x^{p^k} - x + c)^{r} + L(x)$ 
were presented in \cite{YD07,YDWP08,ZZH10,ZH12,LHT13},
where $c$ and $r$ satisfy certain conditions  
and $L(x)$ is a linearized polynomial.
In particular, PPs of $\fqtwo$ of the form 
\[
(x^{q} - x + c)^{\frac{q^2-1}{d} + 1} + x 
\]
were given in \cite{ZH12} with $d=2$ 
and in~\cite{LHT13} with $d=3$, where $c^q = - c$. 
In 2015, Yuan and Zheng~\cite{YuanZ15} used 
the additive version of the Akbary-Ghioca-Wang (AGW) criterion \cite{AGW} 
to construct PPs of $\fqtwo$ of the form 
\[
(x^{q} + bx + c)^{\frac{q^2-1}{d} + 1} + u x^q - bx,
\]
where $b^{q+1} = 1$, $c^q = b^q c$, 
$u \in \{0, 1\}$, and $d \in \{2, 3, 4, 6\}$.
Shortly thereafter, by employing the hybrid 
(first additive and then multiplicative) 
version of the AGW criterion, Zheng et al. \cite{ZhengFq2}  
studied PPs of $\fqtwo$ of more general form
\begin{equation}\label{eq:fabcphi}
(ax^q + bx + c)^{r} \phi((ax^q + bx + c)^{\frac{q^2-1}{d}}) 
+ u x^{q} + v x, 
\end{equation}
where $a, b, c, u, v \in \fqtwo$, 
$a^{q+1} = b^{q+1}$, $a c^q = b^q c$, 
and~$d$ is an arbitrary positive divisor of $q^2-1$. 
The main contribution of \cite{ZhengFq2} is to present 
three classes of $\phi(x)$ such that 
\cref{eq:fabcphi} permutes $\fqtwo$ for any $d$: 
$\deg(\phi) \le 2$, $\phi(x) = \sum_{k=0}^{d-1} x^k$, and $\phi(x) \in \fqx$.  
Very recently, Reis and Wang \cite{ReisWang24} 
studied the class of PPs of $\fqn$ of the form
$f(L(x)) + k(L(x)) M(x)$, where $f \in \fqnx$ is an arbitrary polynomial, 
$L, M\in \fqx$ are $q$-linearized polynomials 
such that $L(x)$ divides $x^{q^n}-x$, 
and $k \in \fqnx$ satisfies a generic condition. 

Another line of research is to focus on $h(x) = x^{r}$ 
with different choices of the exponent~$r$ 
such that a polynomial of this form is a PP of $\fqtwo$.
For instance, PPs of $\fqtwo$ of the form 
\begin{equation}\label{eq:1-1u1}
    (x^q - x + c)^{r} + u x^{q} + x  
\end{equation}
were investigated in 
\cite{TuZJ1531,TuZLH1534,ZhaH16,DZheng17,
WangWuL17,WangWu18,Gupta18,ZhaHZ18,XuLC22},
where $u \in \{0, 1\}$
and the exponent $r$  is of the form $r = q + 2$, $(q \pm 1)s + t$, and so on. 
Then some PPs of $\fqtwo$ of the form
\begin{equation}\label{eq:1-1uv}
    (x^q - x + c)^{r} + u x^{q} + v x 
\end{equation}
were constructed in
\cite{XuFZ19,DZheng19,KLiQZ20,LiCao2389,ChenK+25},
where $c, u, v \in \fqtwo$. Clearly, \cref{eq:1-1uv} 
is a generalization of \cref{eq:1-1u1}.

Similarly, PPs of $\fqn$ of the form  
\begin{equation}\label{eq:phi+v x}
    (x^{q^k} - x + c)^{r_1} + d (x^{q^k} - x + c)^{r_2} + v x
\end{equation}
with $d \in \{0, 1\}$ were presented in 
\cite{ZengZLL17,LLi18,LiuSZ18,KLiQZ20,LiCao2389}. In these cases, $h(x) = x^{r_1} + dx^{r_2}$. 
More recently, some PPs of $\fqtwo$ of the form  
\begin{equation}\label{eq:1bu}
    (x^q + b x + c)^{r_1} + d (x^q + b x + c)^{r_2} - (b/e) x 
\end{equation} 
and their inverses were given in \cite{WuY22,WuY24,WuYGL24},   
where $b, c \in \fqtwo$, $d \in \{0, 1\}$, and $e \in \fqstar$. 
For more information on inverses of PPs,  
we refer the readers to a recent survey 
\cite{WangInverse24} and references therein. 
It should be pointed out that some results 
mentioned above hold for any $c \in \fqtwo$.
However, an underlying assumption of most of the 
results is that~$c$ satisfies certain conditions.

The main methods used to characterize these PPs 
in the above literature can be summarized into two classes.
The first class is to prove that $f(x) = \alpha$ 
has (exactly, at least, or at most) one solution 
in $\fqtwo$ for any $\alpha \in \fqtwo$, 
which is mainly used in the early papers
\cite{HZ03,YD07,YDWP08,ZZH10,ZH12,LHT13,TuZJ1531,TuZLH1534,ZhaH16,
ZengZLL17,DZheng17,WangWuL17,WangWu18,Gupta18,ChenK+25}.
The second class is to use the AGW criterion, 
which reduces the permutation behavior of a polynomial $f(x)$ over $\fqtwo$ 
to the injectivity of $f(x)$ over certain subsets of $\fqtwo$ 
and the bijectivity of another associated polynomial $g(x)$ 
over a smaller subset of $\fqtwo$. 
In \cite{ZhaHZ18,LLi18,LiuSZ18,XuFZ19,XuLC22,LiCao2389,ChenK+25},
by applying the AGW criterion, the problem whether
\cref{eq:1-1uv} permutes $\fqtwo$ is converted into that
whether an associated polynomial $g(x)$ permutes 
the subset $\{x^q - x : x \in \fqtwo\}$.
In \cite{WuY22,WuY24}, by employing the AGW criterion, 
the problem whether \cref{eq:1bu} permutes $\fqtwo$ 
is reduced to that whether an associated polynomial $g(x)$ 
permutes the subfield $\fq$.

We now review the literature for many-to-one mappings.  
In 2019, Mesnager and Qu \cite{2to1-MesQ19} introduced 
the definition of \twotoone mappings and provided a 
systematic study of \twotoone mappings over finite fields. 
Later, some \twotoone polynomials similar to 
\cref{eq:phi+v x} and their involutions
were given in \cite{2to1-YuanZW21,MesYZ231315},
and these \twotoone polynomials were used to 
construct binary linear codes \cite{MesQCY233285}.  
From 2021 to 2023, the concept of \twotoone 
mappings was generalized to \ntoone in 
\cite{GAO211612,BartGT22194,nto1-NiuLQL23},
and some \ntoone polynomials similar to \cref{eq:phi+v x}
was presented in \cite{nto1-NiuLQL23}. 
Very recently, Zheng et al. \cite{Zhengmto1} 
introduced the definition of many-to-one 
(\mtoone for short) mappings between two finite sets, 
which unifies and generalizes the definitions in
\cite{2to1-MesQ19, GAO211612, BartGT22194, nto1-NiuLQL23}.
Then they provided a systematic study of \mtoone mappings 
over finite fields including their characterization, 
properties, and construction methods.

In this paper, we study the \mtoone polynomials of the form   
\[
f(x) = h(a x^q + b x + c) + u x^q + v x  \in \fqtwox
\]
under the definition of \mtoone mappings in \cite{Zhengmto1},
where $h(x) \in \fqtwox$, $a^{q+1} = b^{q+1}$, 
$a v \ne b u$, and 
$c$ is an arbitrary element of $\f_{q^2}$.

Let~$\xi$ be a primitive element of $\f_{q^2}$.  
First of all, we observe that if $a^{q+1} = b^{q+1}$,  
then $b = \xi^{(q-1)k} a^{q}$ for some 
$k \in \{1, 2, \ldots, q+1\}$ and thus 
\[
\xi^{k} (ax^q + bx) 
= (\xi^{qk} a^q x)^q + \xi^{qk} a^q x 
= \tr(\xi^{qk}a^q x),
\]
a trace function from $\fqtwo$ onto $\fq$. 
Similarly, if $a^{q+1} = b^{q+1}$, then $\xi^{k} (Ax^q + Bx)$ 
is also a trace function from $\fqtwo$ onto $\fq$.  
Then we construct a commutative diagram 
satisfied by $f(x)$ using these two trace functions. 
Combining this commutative diagram and a construction method 
of \mtoone mappings in \cite{Zhengmto1}, we arrive at the 
main theorem: $f(x)$ is \mfield{m}{\fqtwo} \ifa 
an associated polynomial $g(x)$ is \mtoone on the subfield $\fq$.

For the purpose of demonstrating our result, we consider the case $h(x) = x^{r}$ and $r$ satisfies certain conditions. In particular, we reduce $g(x)$ to polynomials of small degree or linearized polynomials. Through the classification of  \mtoone polynomials of degree $\leq 4$ and \mtoone mappings of linearized binomials on $\fq$, we derive a series of explicit characterization for $f(x)$ to be \mtoone on $\fqtwo$. 

Finally we focus on \onetoone and \twotoone mappings. When $f(x)$ is \onetoone, we derive a formula of $f^{-1}(x)$ in terms of $g^{-1}(x)$.  For all \onetoone mappings obtained in this paper, we can determine the inverses of these PPs. Moreover, we also explicitly construct involutions from \twotoone mappings of this form. Therefore our results unify and generalize  many previous results.

The rest of the paper is organized as follows.
\cref{Sec:mto1} recalls the definition of \mtoone mappings
in \cite{Zhengmto1} and reviews the \mtoone properties of 
polynomials of degree $\leq 4$ and linearized polynomials.
In \cref{sec:main}, we establish the commutative diagram 
satisfied by $f(x)$ and the above trace functions, 
which reduce the problem whether $f(x)$ is 
a \mtoone mapping on $\fqtwo$ to another problem 
whether an associated polynomial $g(x)$ is 
a \mtoone mapping on the subfield $\fq$. 
In \cref{sec:deg2,sec:deg3,sec:deg4,sec:binomials},  
we explicitly characterize all these \mtoone mappings 
$f(x)$ over $\fqtwo$, which are obtained from $h(x) = x^{r}$ 
such that the associated polynomial $g(x)$ is a polynomial 
of degree $\leq 4$ or a linearized binomial, respectively. 
In \cref{sec:inv}, we determine the inverses of all these \onetoone mappings 
and construct involutions from \twotoone mappings obtained in this paper.

We shall use the following standard notations in this paper.
The letter $\z$ will denote the set of all integers,
$\n$ the set of all positive integers,
$\# S$ the cardinality of a finite set $S$,
and $\varnothing$ the empty set containing no elements. 
The greatest common divisor of two integers 
$a$ and $b$ is written as $(a, b)$. 
Denote $a \bmod m$ as the smallest non-negative 
remainder obtained when~$a$ is divided by~$m$; 
that is, $\mathrm{mod}~m$ is a function 
from $\z$ to $\{0, 1, 2, \ldots, m-1\}$.
The trace and norm functions from $\fqn$ to 
$\fq$ are defined by
$\trqnq(x) = \sum_{i=0}^{n-1} x^{q^i}$ and
$\nmqnq(x) = x^{(q^n-1)/(q-1)}$, respectively.

\section{Many-to-one mappings on finite fields}\label{Sec:mto1}

We first recall the definition of many-to-one 
mappings introduced in \cite{Zhengmto1}, 
and then review the many-to-one properties of 
polynomials of degree $\leq 4$ and linearized polynomials. 

\subsection{Definition and construction of many-to-one mappings}

\begin{definition}[\cite{Zhengmto1}]\label{defn:mto1}
  Let $A$ be a finite set and 
  $m \in \z$ with $1 \le m \le \# A$. 
  Write $\# A = k m + r$, 
  where $k$, $r \in \z$ with $0 \le r < m$. 
  Let~$f$ be a mapping from~$A$ to another finite set~$B$. 
  Then~$f$ is called many-to-one, 
  or \mtoone for short, on~$A$ if 
  there are~$k$ distinct elements in~$B$ such that 
  each element has exactly~$m$ preimages in~$A$ under~$f$.
  The remaining~$r$ elements in~$A$ are called the
  \textit{exceptional elements} of~$f$ on~$A$,
  and the set of these~$r$ exceptional elements 
  is called the \textit{exceptional set} 
  of~$f$ on~$A$ and denoted by $E_{f}(A)$. 
  In particular, $E_{f}(A) = \varnothing$ 
  \ifa $r = 0$, i.e., $m \mid \# A$.
\end{definition} 

A polynomial $f(x) \in \fqx$ is called many-to-one, 
or \mtoone for short, on $\fq$ if the mapping 
$f \colon c \mapsto f(c)$ from $\fq$ to itself is \mset{m}{\fq}.
In particular, $f(x)$ is \mfield{1}{\fq} \ifa it permutes $\fq$.


\begin{example}
Let $f(x) = x^3 + x + 1$. Then $f$ maps $0,1,2,3,4$ 
to $1,3,1,1,4$ in~$\f_5$, respectively. 
Thus $f$ is \mfield{3}{\f_5} and the 
exceptional set $E_{f}(\f_5) = \{1, 4\}$.
\end{example}

\begin{example}\label{xnFq}
The monomial $x^n$ with $n \in \n$ is \mfield{(n, q-1)}{\fqstar}
and 
$E_{x^n}(\fqstar)= \varnothing$.
\end{example}

\begin{theorem}[{\cite[Construction~2]{Zhengmto1}}]\label{constr2}
Let $A$, $\bar{A}$, $S$, $\bar{S}$ be finite 
sets and $f \colon A \rightarrow \bar{A}$, 
$\bar{f} \colon  S \rightarrow \bar{S}$, 
$\lambda \colon A \rightarrow S$, 
$\bar{\lambda} \colon \bar{A} \rightarrow \bar{S}$ 
be mappings such that 
$\bar{\lambda} \circ f = \bar{f} \circ \lambda$, 
i.e., the following diagram is commutative:
\[
\xymatrix{
  A \ar[rr]^{f}\ar[d]_{\lambda}  &   &  
  \bar{A}  \ar[d]^{\bar{\lambda}} \\
  S \ar[rr]^{\bar{f}}            &   &  \bar{S}.}
\]
Suppose~$\lambda$ is surjective,
$\# \lambda^{-1}(s) = m_1 \, 
\# \bar{\lambda}^{-1}(\bar{f}(s))$,  
and $f$ is \mfield{m_1}{\lambda^{-1}(s)} 
for any $s \in S$ and a fixed $m_1 \in \n$, where
\begin{equation*}\label{eq:lambda-1s}
  \lambda^{-1}(s) \coloneqq \{a \in A : \lambda(a) = s\}    
  \qtq{and}
  \bar{\lambda}^{-1}(\bar{f}(s)) \coloneqq 
  \{b \in \bar{A} : \bar{\lambda}(b) = \bar{f}(s)\}.
\end{equation*}
Then for $1 \le m \le m_1 \, \#S$, $f$ is \mset{m}{A} \ifa 
$m_1 \mid m$, $\bar{f}$ is \mset{(m/m_1)}{S}, and
\begin{equation}\label{eq:SumPreim(s)=r}
    \sum_{s \in E_{\bar f}(S)} 
    \# \lambda^{-1}(s) = \# A \bmod{m},
\end{equation}
where $E_{\bar f}(S)$ is the exceptional set of $\bar{f}$ 
being \mset{(m/m_1)}{S}.
\end{theorem}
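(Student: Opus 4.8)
The plan is to funnel the whole argument through a single fibrewise identity, $\#f^{-1}(b)=m_1\,\#\bar f^{-1}(\bar\lambda(b))$ for all $b\in\bar A$, and then turn the division‑with‑remainder $\#A=m\,\lfloor\#A/m\rfloor+(\#A\bmod m)$ into a counting statement over $S$. To prove the identity I would begin from $A=\bigsqcup_{s\in S}\lambda^{-1}(s)$, valid since $\lambda$ is surjective, so that $f^{-1}(b)=\bigsqcup_{s\in S}\bigl(f^{-1}(b)\cap\lambda^{-1}(s)\bigr)$; commutativity forces this intersection to be empty unless $\bar f(s)=\bar\lambda(b)$. For $s\in\bar f^{-1}(\bar\lambda(b))$, the hypotheses are that $f$ is \mfield{m_1}{\lambda^{-1}(s)} and that $\#\lambda^{-1}(s)=m_1\,\#\bar\lambda^{-1}(\bar f(s))$; the latter shows $m_1\mid\#\lambda^{-1}(s)$, so the exceptional set is empty and $f(\lambda^{-1}(s))$ consists of exactly $\#\bar\lambda^{-1}(\bar f(s))$ elements, each with $m_1$ preimages in $\lambda^{-1}(s)$. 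Since $f(\lambda^{-1}(s))\subseteq\bar\lambda^{-1}(\bar f(s))$ by commutativity and the two sets have equal cardinality, they coincide; as $\bar\lambda(b)=\bar f(s)$ this gives $b\in f(\lambda^{-1}(s))$, hence the intersection has size exactly $m_1$, and summing over $s\in\bar f^{-1}(\bar\lambda(b))$ proves the identity. I would also record its two consequences: $\im f=\bar\lambda^{-1}(\im\bar f)$, and every nonempty fibre of $f$ has cardinality divisible by $m_1$.

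Next comes the bookkeeping. Assuming $m_1\mid m$ and writing $n=m/m_1$ (so that $1\le n\le\#S$, because $m\le m_1\#S$), set $T_n=\{\bar b\in\bar S:\#\bar f^{-1}(\bar b)=n\}$, $P=\{b\in\bar A:\#f^{-1}(b)=m\}$, and $D=S\setminus\bigsqcup_{\bar b\in T_n}\bar f^{-1}(\bar b)$. By the fibrewise identity, $b\in P$ exactly when $\bar\lambda(b)\in T_n$, so $P=\bigsqcup_{\bar b\in T_n}\bar\lambda^{-1}(\bar b)$; substituting $\#\lambda^{-1}(s)=m_1\,\#\bar\lambda^{-1}(\bar f(s))$ and expanding both sides over the fibres of $\lambda$ and of $\bar f$ then gives the key identity $\#A-m\,\#P=\sum_{s\in D}\#\lambda^{-1}(s)$. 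By \cref{defn:mto1}, $f$ is \mset{m}{A} iff $\#P=\lfloor\#A/m\rfloor$, i.e.\ iff $\#A-m\,\#P<m$; and $\bar f$ is \mset{n}{S} iff $\#D<n$, in which case $D$ is exactly $E_{\bar f}(S)$. So the theorem becomes a comparison of these two inequalities through the key identity.

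The ``if'' direction is then immediate: if $\bar f$ is \mset{n}{S}, then $D=E_{\bar f}(S)$, and \cref{eq:SumPreim(s)=r} together with the key identity gives $\#A-m\,\#P=\#A\bmod m<m$, so $f$ is \mset{m}{A}. For the ``only if'' direction I would first note $\#A=\sum_{s\in S}\#\lambda^{-1}(s)\ge m_1\#S\ge m$, since each $\#\lambda^{-1}(s)$ is a positive multiple of $m_1$; hence $f$ has at least one fibre of size exactly $m$, and because all fibre sizes are multiples of $m_1$, this forces $m_1\mid m$, which legitimises the notation $n=m/m_1$. Then $\sum_{s\in D}\#\lambda^{-1}(s)=\#A-m\,\#P=\#A\bmod m<m=m_1n$, and since every summand is at least $m_1$ this forces $\#D<n$, i.e.\ $\bar f$ is \mset{n}{S}; finally $D=E_{\bar f}(S)$ turns the key identity into \cref{eq:SumPreim(s)=r}.

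I expect the main obstacle to be precisely the ``only if'' direction: one must extract the \emph{exact} statement that $\bar f$ is \mset{n}{S}, not just $m_1\mid m$ or some weaker many‑to‑one property of $\bar f$. The lever for this is the uniform lower bound $\#\lambda^{-1}(s)\ge m_1$ (which holds because $\lambda$ is surjective and $\#\lambda^{-1}(s)$ is a positive multiple of $m_1$) — this is exactly what upgrades the inequality $\#A-m\,\#P<m$ into $\#D<n$. Everything else, in particular the short count behind $\#A-m\,\#P=\sum_{s\in D}\#\lambda^{-1}(s)$, is elementary cardinality bookkeeping once the decomposition $A=\bigsqcup_{s}\lambda^{-1}(s)$ and the commutativity relation are in hand.
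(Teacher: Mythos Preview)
The paper does not supply its own proof of this theorem: it is quoted verbatim from \cite{Zhengmto1} and used as a black box, with no \texttt{proof} environment following the statement. So there is nothing in the present paper to compare your proposal against.

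That said, your argument is correct and self-contained. The fibrewise identity $\#f^{-1}(b)=m_1\,\#\bar f^{-1}(\bar\lambda(b))$ is derived cleanly, and the translation of ``$f$ is \mset{m}{A}'' into ``$\#A-m\,\#P<m$'' (and likewise for $\bar f$) is exactly what \cref{defn:mto1} gives once you observe that the number of size-$m$ fibres is always at most $\lfloor\#A/m\rfloor$. The key counting identity $\#A-m\,\#P=\sum_{s\in D}\#\lambda^{-1}(s)$ follows by grouping $\sum_{s\notin D}\#\lambda^{-1}(s)=\sum_{\bar b\in T_n}\sum_{s\in\bar f^{-1}(\bar b)}m_1\,\#\bar\lambda^{-1}(\bar b)=m\sum_{\bar b\in T_n}\#\bar\lambda^{-1}(\bar b)=m\,\#P$, just as you indicate. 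In the ``only if'' direction the two crucial observations --- that $\#A\ge m_1\,\#S\ge m$ forces $k\ge1$ so $m_1\mid m$, and that the uniform lower bound $\#\lambda^{-1}(s)\ge m_1$ upgrades $\sum_{s\in D}\#\lambda^{-1}(s)<m_1 n$ to $\#D<n$ --- are both valid, and the latter is indeed the decisive step. Nothing is missing.
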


This construction method converts the problem 
whether~$f$ is \mset{m}{A} into another problem 
whether $\bar{f}$ is \mset{(m/m_1)}{S}.
It is a generalization of the AGW criterion.

\subsection{Polynomials of degree~2 or~4}

\begin{lemma}[{\cite[Corollary~2.8]{AFF}}]\label{deg2pps}
  Let $f(x) = a x^{2} + b x + c \in \fqx$.
  Then~$f(x)$ is \mfield{1}{\fq} \ifa 
  \textup{(1)} $a = 0$ and $b \ne 0$, or 
  \textup{(2)} $a \ne 0$, $b = 0$, and $q$ is even.
\end{lemma}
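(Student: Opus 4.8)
The plan is to prove both implications directly by elementary arguments, splitting according to whether $a = 0$ and the parity of $q$. For the ``if'' direction, if $a = 0$ and $b \ne 0$ then $x \mapsto bx + c$ is an affine bijection of $\fq$; and if $a \ne 0$, $b = 0$, and $q$ is even, then the Frobenius map $x \mapsto x^{2}$ is a bijection of $\fq$ (with inverse $x \mapsto x^{q/2}$), so $f(x) = a x^{2} + c$ is a composition of bijections and hence permutes $\fq$.

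For the ``only if'' direction, assume $f$ permutes $\fq$ and rule out everything outside the two listed cases. First, if $a = 0$ then $f(x) = bx + c$, which is injective on $\fq$ only when $b \ne 0$ (a constant map is not injective since $q \ge 2$), giving case~(1). So assume $a \ne 0$. If $q$ is odd, complete the square: $f(x) = a\bigl(x + b/(2a)\bigr)^{2} + \bigl(c - b^{2}/(4a)\bigr)$, so $\im(f)$ is an affine image of the set of squares of $\fq$. Since $x \mapsto x^{2}$ is \twotoone on $\fqstar$ by \cref{xnFq} (because $(2, q-1) = 2$), the set of squares together with $0$ has size $(q+1)/2 < q$, so $f$ is not surjective, a contradiction. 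Hence $q$ must be even.

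It remains to show that if $a \ne 0$ and $q$ is even then $b = 0$. I would study differences: for any $x, z \in \fq$, using that $\fq$ has characteristic $2$,
\[
  f(x + z) - f(x) = a\bigl((x+z)^{2} - x^{2}\bigr) + b z = a z^{2} + b z = z(a z + b).
\]
If $b \ne 0$, then taking $z = b/a \ne 0$ gives $f(x + b/a) = f(x)$ for every $x \in \fq$, so $f$ is at least \twotoone and not a permutation, a contradiction. Therefore $b = 0$, and we are in case~(2), which completes the proof.

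Since every step is an elementary identity or a short counting argument, there is no serious obstacle; the only point that requires a little care is the characteristic-$2$ computation above, where the cross term of $(x+z)^{2}$ vanishes and the surviving linear term is $bz$ rather than $2bz$ as in odd characteristic, which is precisely what forces $b = 0$ instead of permitting a nonzero $b$.
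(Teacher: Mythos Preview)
Your proof is correct. Note, however, that the paper does not supply its own proof of this lemma: it is quoted verbatim as a known result from \cite[Corollary~2.8]{AFF}, so there is no argument in the paper to compare against. Your elementary treatment (affine/Frobenius bijections for sufficiency; completing the square in odd characteristic and the difference $f(x+z)-f(x)=az^{2}+bz$ in characteristic~$2$ for necessity) is a standard self-contained derivation and is entirely adequate.
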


\begin{lemma}[{\cite[Page~7890]{2to1-MesQ19}}]\label{deg2-2to1}
  Let $f(x) = a x^{2} + b x + c \in \fqx$.
  Then~$f(x)$ is \mfield{2}{\fq} \ifa 
  \textup{(1)} $q = 2$ and $a = b = 0$,
  \textup{(2)} $q$ is even and $a b \ne 0$, or 
  \textup{(3)} $q$ is odd and $a \ne 0$.
\end{lemma}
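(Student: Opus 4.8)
The statement to prove is \cref{deg2-2to1}: a quadratic polynomial $f(x) = ax^2 + bx + c \in \fqx$ is \mfield{2}{\fq} exactly in the three listed cases. The plan is to analyze the fiber structure of $f$ over $\fq$ by separating the characteristic. The additive constant $c$ is irrelevant (it only translates the image), so I may assume $c = 0$ throughout and work with $f(x) = ax^2 + bx$. Since \mfield{2}{} means every element of $\fq$ that is hit is hit exactly twice except for $r = q \bmod 2$ exceptional points, I should first note that $2 \mid q$ forces $q$ even and $r=0$ (no exceptional elements), whereas $q$ odd forces $r=1$ (exactly one exceptional element). This dichotomy drives the case split.

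First I would handle $a = 0$. Then $f(x) = bx + c$ is linear; it is \onetoone if $b \ne 0$ (by \cref{deg2pps}) and constant if $b = 0$. A constant map on $\fq$ is \mfield{q}{\fq}, which coincides with being \twotoone precisely when $q = 2$; this gives case (1). If $b \ne 0$ the map is \onetoone, hence not \twotoone unless $q \le$ something trivial — but \onetoone and \twotoone are mutually exclusive for $q \ge 2$, so $a=0,\ b\ne 0$ is excluded. Next suppose $a \ne 0$ and $q$ is odd. Completing the square, $f(x) = a\big(x + b/(2a)\big)^2 + (c - b^2/(4a))$, so after an affine change of variable on the source and target, $f$ has the same fiber structure as $x \mapsto x^2$ on $\fq$. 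For $q$ odd the squaring map sends $0$ to $0$ (a single preimage) and is \twotoone on $\fqstar$ (each nonzero square has its two distinct square roots), so $x^2$ is \twotoone on $\fq$ with the single exceptional element $0$; hence $f$ is \twotoone, establishing case (3), and note this holds for \emph{all} $a \ne 0$ with no condition on $b$.

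The remaining regime is $a \ne 0$ and $q$ even, which is the subtle case and the main obstacle. Here $2a = 0$, so completing the square is unavailable and the behavior genuinely depends on $b$. I would use the standard fact that in characteristic $2$ the map $x \mapsto x^2$ is a bijection (the Frobenius), so write $f(x) = a x^2 + bx = a\big(x^2 + (b/a) x\big)$ and, since multiplication by $a$ is a bijection, reduce to studying $x \mapsto x^2 + \beta x$ with $\beta = b/a$. If $\beta = 0$ (i.e.\ $b = 0$) this is the Frobenius, hence \onetoone, not \twotoone (for $q \ge 2$); if $\beta \ne 0$ (i.e.\ $ab \ne 0$) I would observe $x^2 + \beta x = x^2 + \beta x$ is additive (a $\f_2$-linearized polynomial) with kernel $\{x : x^2 = \beta x\} = \{0, \beta\}$, a subgroup of order $2$; an additive map on $\fq$ with kernel of size $2$ is exactly \twotoone on $\fq$ (fibers are cosets of the kernel, all of size $2$, and $q$ is even so $r = 0$). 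That yields case (2). Finally I would check the converse direction is already covered: the three cases partition the possibilities $\{a = 0\} \sqcup \{a \ne 0,\ q \text{ even}\} \sqcup \{a \ne 0,\ q \text{ odd}\}$ up to the sub-split on $b$ in the even case, and in every sub-case not listed (namely $a=0,b\ne0$; and $a\ne0,\ b=0,\ q$ even) the map is \onetoone, hence not \twotoone. The one place demanding care is confirming that "additive with $2$-element kernel" is equivalent to \twotoone in the sense of \cref{defn:mto1} when $q$ is even — this is immediate since $2 \mid q$ forces $r = 0$ and all nonempty fibers of a group homomorphism have the common size $\#\ker = 2$.
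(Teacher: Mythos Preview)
Your argument is correct: the case split by characteristic, completing the square for odd $q$, and the linearized-kernel argument for even $q$ cleanly cover all cases and match \cref{defn:mto1}. The paper itself gives no proof of this lemma---it is quoted verbatim from \cite[Page~7890]{2to1-MesQ19}---so your self-contained elementary proof is a genuine addition rather than a paraphrase; there is nothing to compare against.
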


Combining \cref{deg2pps,deg2-2to1}  
yields the following result.

\begin{theorem}\label{deg2mto1}
  Let $f(x) = a x^{2} + b x + c \in \fqx$ and $1 \le m \le q$. 
  Then $f(x)$ is \mfq \ifa one of the following holds:
\begin{enumerate}[\upshape(1)]
  \item $m = 1$, $a = 0$, and $b \ne 0$; 
  \item $m = 1$, $a \ne 0$, $b = 0$, and $q$ is even;
  \item $m = 2$, $a \ne 0$, $b \ne 0$, and $q$ is even;
  \item $m = 2$, $a \ne 0$, and $q$ is odd;
  \item $m = q$, $a = 0$, and $b = 0$.
\end{enumerate}
\end{theorem}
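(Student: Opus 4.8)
The plan is to derive \cref{deg2mto1} purely by bookkeeping from \cref{deg2pps,deg2-2to1}, using the trivial observation that a quadratic $f(x)=ax^2+bx+c$ over $\fq$ can only be \mfq for $m\in\{1,2,q\}$. First I would argue this range restriction: if $a=b=0$ then $f$ is constant, so it is $q$-to-$1$; if $a=0$ and $b\neq 0$ then $f$ is affine and $1$-to-$1$; and if $a\neq 0$ then $f$ has degree $2$, so every fiber $f^{-1}(\alpha)$ has at most $2$ elements, forcing $m\le 2$. Since $m=1$ is permutation and $m=2$ is genuinely two-to-one, no other value of $m$ is possible once $a\neq 0$. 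This disposes of all cases $m\ge 3$ except the constant case $m=q$.

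Next I would handle the three remaining regimes separately. For $m=1$: by \cref{deg2pps}, $f$ is $1$-to-$1$ iff either ($a=0$, $b\neq 0$) or ($a\neq 0$, $b=0$, $q$ even), which are items (1) and (2). For $m=2$: by \cref{deg2-2to1}, $f$ is $2$-to-$1$ iff ($q=2$, $a=b=0$), or ($q$ even, $ab\neq 0$), or ($q$ odd, $a\neq 0$). The last two are items (3) and (4). The first sub-case, $q=2$ with $a=b=0$, is the constant map $f\equiv c$ on $\f_2$, which is $2$-to-$1$ on a $2$-element set; but this already coincides with item (5) ($m=q=2$, $a=b=0$), so I would note the overlap and fold it into (5) rather than list it twice. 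For $m=q$: $f$ is $q$-to-$1$ on $\fq$ iff $f$ is constant iff $a=b=0$, giving item (5).

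Finally I would check that the five listed cases are exhaustive and (apart from the noted $q=2$ coincidence between the constant case viewed as $2$-to-$1$ and as $q$-to-$1$) mutually exclusive, so that the ``if and only if'' is clean: given any $(a,b)$ and any admissible $m$, exactly one of (1)--(5) applies, except that when $q=2$ the constant polynomial satisfies both the statement ``$f$ is $2$-to-$1$'' and ``$f$ is $q$-to-$1$'' because $q=2$; this is consistent since the claim is an equivalence for each fixed $m$. The only mild subtlety — hardly an obstacle — is making sure the degenerate small-field case $q=2$ is accounted for exactly once in the final list, which is why item (5) is phrased as $m=q$ rather than as a separate $m=2$ entry; everything else is immediate from the two cited lemmas.
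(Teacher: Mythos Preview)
Your proposal is correct and matches the paper's approach exactly: the paper gives no proof beyond the single sentence ``Combining \cref{deg2pps,deg2-2to1} yields the following result,'' and your write-up simply spells out that combination, including the easy observation that $a\neq 0$ forces $m\le 2$ and that $a=b=0$ gives the constant ($m=q$) case.
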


All \twotoone polynomials of degree~$4$ over any $\fq$
were determined by Mesnager and Qu \cite{2to1-MesQ19}.
The case~$q$ is even will be used in this paper.

\begin{theorem}[{\cite[Theorem~35]{2to1-MesQ19}}]\label{deg4-2to1-p=2}
  Let $q = 2^n$ and $f(x) = x^4 + a_3 x^3 + a_2 x^2 + a_1 x \in \fqx$. 
  Then $f(x)$ is \mfield{2}{\fq} \ifa one of the following holds:
  \begin{enumerate}[\upshape(1)]
    \item $a_3 = a_1 = 0$ and $a_2 \ne 0$;
    \item $a_3 = 0$, $a_1 \ne 0$, and $\tr_{2^n/2}(a_2^3/a_1^2) \ne \tr_{2^n/2}(1)$;
    \item $a_3 \ne 0$, $a_2^2 = a_1 a_3$, and $n$ is odd.
  \end{enumerate}
\end{theorem}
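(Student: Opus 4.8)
The plan is to exploit that being $2$-to-$1$ on $\mathbb{F}_q$ is preserved under pre- and post-composition with affine bijections of $\mathbb{F}_q$; since $f(0)=0$ and, when $a_3\ne 0$, we may replace $f(x)$ by $a_3^{-4}f(a_3x)$ without changing the relation $a_2^2=a_1a_3$, we may assume $a_3=1$ in that regime. The argument then splits into the linearized regime $a_3=0$ and the regime $a_3=1$. In the case $a_3=0$, $f(x)=x^4+a_2x^2+a_1x=(x^2)^2+a_2x^2+a_1x$ is $\mathbb{F}_2$-linearized, so $f$ is $2$-to-$1$ exactly when $|\ker f|=2$, i.e. $\dim_{\mathbb{F}_2}\ker f=1$. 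Since $x\mapsto x^2$ is a bijection of $\mathbb{F}_q$, when $a_1=0$ we get $\ker f=\{0\}$ if $a_2=0$ (so $f$ is a PP) and $\ker f=\{0,a_2^{1/2}\}$ if $a_2\ne 0$, giving case~(1); when $a_1\ne 0$ the nonzero elements of $\ker f$ are precisely the $\mathbb{F}_q$-roots of $x^3+a_2x+a_1$, and a one-dimensional kernel forces this cubic to have exactly one root in $\mathbb{F}_q$, whereupon the classical count of roots of a cubic over $\mathbb{F}_{2^n}$ — a unique root iff $\mathrm{Tr}_{2^n/2}(a_2^3/a_1^2)\ne\mathrm{Tr}_{2^n/2}(1)$, the sub-case $a_2=0$ being the statement that $x\mapsto x^3$ is a bijection iff $3\nmid q-1$ iff $n$ is odd — yields case~(2).

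For $a_3=1$, set $P(s)\coloneqq s^3+s^2+a_2s+a_1$ and $b\coloneqq a_2+a_1^{1/2}$. Using that squaring is additive, $f(x)=f(y)$ with $x\ne y$ is equivalent to $s\coloneqq x+y\ne 0$ and $xy=P(s)$; hence for each $s\in\mathbb{F}_q^{*}$ the quantity $N(s)\coloneqq\#\{x:f(x)=f(x+s)\}$ equals the number of $\mathbb{F}_q$-roots of the separable quadratic $x^2+sx+P(s)$, so $N(s)\in\{0,2\}$, and $N(s)=2$ iff $\mathrm{Tr}(P(s)/s^2)=0$, which — after absorbing $\mathrm{Tr}(a_1/s^2)=\mathrm{Tr}(a_1^{1/2}/s)$ — reads $\mathrm{Tr}(s+b/s)=\mathrm{Tr}(1)$. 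Combining the second-moment identity $\sum_{\alpha}(\#f^{-1}(\alpha))^2=q+\sum_{s\ne 0}N(s)$ with $\#\mathbb{F}_q=q$ and $2\mid q$, I would reduce to the reformulation: \emph{$f$ is $2$-to-$1$ on $\mathbb{F}_q$ if and only if every fibre of $f$ has size at most $2$ and $\sum_{s\in\mathbb{F}_q^{*}}N(s)=q$.}

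The crux is controlling fibres of size $\ge 3$. Since $f'(x)=x^2+a_1$ has the single root $a_1^{1/2}$, the polynomial $f(x)-\alpha$ has a repeated root only for $\alpha=\alpha_0\coloneqq f(a_1^{1/2})$, and one computes $f(x)-\alpha_0=(x+a_1^{1/2})^2\bigl(x^2+x+(a_1+a_2)\bigr)$. If $b\ne 0$ (equivalently $a_1\ne a_2^2$) then $a_1^{1/2}$ is not a root of the quadratic cofactor, so $\#f^{-1}(\alpha_0)=3$ when $\mathrm{Tr}(a_1+a_2)=0$ and $\#f^{-1}(\alpha_0)=1$ when $\mathrm{Tr}(a_1+a_2)=1$; in either case $f$ is not $2$-to-$1$ by \cref{defn:mto1}. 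Hence $f$ being $2$-to-$1$ forces $b=0$, i.e. $a_1=a_2^2$. Conversely, when $a_1=a_2^2$ the cofactor equals $(x+a_2)(x+a_2+1)$ (because $\mathrm{Tr}(a_2^2+a_2)=0$), so $f(x)-\alpha_0=(x+a_2)^3(x+a_2+1)$ and $\#f^{-1}(\alpha_0)=2$; moreover a fibre $\{r_1,r_2,r_3,r_4\}$ of size $4$ would, on taking $s=r_1+r_2$ and applying Vieta to $f(x)-\alpha$, yield $f(x)-\alpha=(x^2+sx+P(s))(x^2+(1+s)x+P(1+s))$ with both factors split over $\mathbb{F}_q$, i.e. $N(s)=N(1+s)=2$ for a suitable $s\in\mathbb{F}_q^{*}$, forcing $\mathrm{Tr}(s)=\mathrm{Tr}(1)$ and $\mathrm{Tr}(s)=0$ simultaneously — impossible for $n$ odd. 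Thus when $a_1=a_2^2$ all fibres have size at most $2$, and $\sum_{s\ne 0}N(s)=2\,\#\{s\in\mathbb{F}_q^{*}:\mathrm{Tr}(s)=\mathrm{Tr}(1)\}=q-2\,[\,n\text{ even}\,]$, which equals $q$ precisely when $n$ is odd; this gives case~(3). I expect this fibre-size analysis — isolating $\alpha_0$ as the only site of a large fibre and reading off the trichotomy there, plus excluding size-$4$ fibres — together with the degenerate instances ($a_1=0$, so $a_1^{1/2}=0$ and $\alpha_0=0$, handled identically; and small $q$, handled directly) to be the main obstacle, the remaining steps being linear algebra over $\mathbb{F}_2$ and routine quadratic computations.
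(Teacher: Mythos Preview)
The paper does not supply its own proof of this statement; it is quoted verbatim as \cite[Theorem~35]{2to1-MesQ19} and used later as a black box. So there is nothing in the paper to compare your argument against.

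Your proof is essentially correct and self-contained. The linearized case $a_3=0$ is handled cleanly via $|\ker f|$, and the reduction to the root-count of $x^3+a_2x+a_1$ matches the condition in the statement; the ``classical'' trace criterion you invoke is indeed the standard one for cubics over $\mathbb{F}_{2^n}$. For $a_3\ne 0$ your second-moment reformulation (every fibre has size at most $2$ and $\sum_{s\ne 0}N(s)=q$) is valid, and your analysis of the unique non-separable fibre $\alpha_0$ correctly forces $b=0$, i.e.\ $a_2^2=a_1a_3$.

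One small point you should make explicit: when $b=0$ you exclude fibres of size $4$ for odd $n$, but you also need to rule out fibres of size exactly $3$ for $\alpha\ne\alpha_0$. This is immediate, since $f(x)-\alpha$ is then separable of degree $4$ with sum of roots equal to $a_3=1\in\mathbb{F}_q$, so three roots in $\mathbb{F}_q$ would force the fourth in $\mathbb{F}_q$ as well; but as written your ``all fibres have size at most $2$'' claim skips this step. With that one line added, the argument is complete.
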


\subsection{Polynomials of degree~3}

\begin{theorem}[{\cite[Corollaries~2.10 and 2.14]{AFF}}]\label{deg3-1to1}
  Let $f(x) = a x^{3} + b x^{2} + c x + d \in \fqx$ with $a \neq 0$.
  Then~$f(x)$ is \mfield{1}{\fq} \ifa
  \textup{(1)} $q \equiv 0 \pmod{3}$, $b = 0$, and $(-ac)^{\frac{q-1}{2}} \neq 1$; or 
  \textup{(2)} $q \equiv 2 \pmod{3}$ and $b^{2} = 3ac$.
\end{theorem}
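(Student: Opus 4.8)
The plan is to rely on the elementary criterion that $f(x)=ax^{3}+bx^{2}+cx+d$ with $a\neq0$ permutes $\fq$ if and only if the difference‑quotient polynomial
\[
F(x,y):=\frac{f(x)-f(y)}{x-y}=a\bigl(x^{2}+xy+y^{2}\bigr)+b(x+y)+c
\]
has no zero $(x,y)\in\fq\times\fq$ with $x\neq y$; equivalently, every $\fq$‑rational point of the plane conic $F=0$ must lie on the diagonal $x=y$, on which $F(x,x)=3ax^{2}+2bx+c$. I would then argue separately in the three residue classes of $q$ modulo $3$: the class $q\equiv0\pmod 3$ (characteristic $3$) by a direct substitution, and the other two classes by counting $\fq$‑points of the conic $F=0$.

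For $q\equiv0\pmod 3$ I would use the identity $x^{2}+xy+y^{2}=(x-y)^{2}$, so $F=a(x-y)^{2}+b(x+y)+c$; after the bijective change of variables $u=x-y$, $v=x+y$ on $\fq^{2}$, the off‑diagonal zeros of $F$ become the solutions of $au^{2}+bv+c=0$ with $u\neq0$. If $b\neq0$ every $u\neq0$ produces such a solution, so $f$ is not a permutation; if $b=0$ an off‑diagonal zero exists exactly when $-c/a$ is a nonzero square, i.e. $(-ac)^{(q-1)/2}=1$. This yields precisely condition~(1): $q\equiv0\pmod 3$, $b=0$, and $(-ac)^{(q-1)/2}\neq1$.

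Next, for $q\not\equiv0\pmod 3$ and characteristic $\neq2$, I would analyse the conic $F=0$ directly. Its symmetric‑matrix discriminant is $a(3ac-b^{2})/4$, so $F=0$ is smooth exactly when $b^{2}\neq3ac$; a smooth plane conic over $\fq$ has exactly $q+1$ points in $\mathbb{P}^{2}$, of which $1+\eta(-3)$ lie at infinity (solutions of $t^{2}+t+1=0$) and $1+\eta(b^{2}-3ac)$ lie on the diagonal (roots of $3ax^{2}+2bx+c$), where $\eta$ is the quadratic character of $\fq$ with $\eta(0)=0$. Hence $F$ has exactly $q-1-\eta(-3)-\eta(b^{2}-3ac)\geq q-3>0$ off‑diagonal zeros (as the characteristic is at least $5$, so $q\geq5$), and $f$ is not a permutation when $b^{2}\neq3ac$. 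When $b^{2}=3ac$, completing the square gives $F=a\bigl(x'^{2}+x'y'+y'^{2}\bigr)$ with $x'=x+b/(3a)$ and $y'=y+b/(3a)$, whose $\fq$‑zeros all lie on $\{x'=y'\}$, hence on the diagonal, if and only if $t^{2}+t+1$ is irreducible over $\fq$; by the classical equivalence $\eta(-3)=1\iff q\equiv1\pmod 3$, this is equivalent to $q\equiv2\pmod 3$. So in this range $f$ is \onetoone if and only if $b^{2}=3ac$ and $q\equiv2\pmod 3$, which is condition~(2); in particular no cubic permutes $\fq$ when $q\equiv1\pmod 3$.

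The last subcase, characteristic $2$ with $q\not\equiv0\pmod3$, I would treat by hand. Dividing $f$ by $a$ and setting $z=1/(x+y)$, the criterion becomes $\tr_{q/2}\!\bigl(1+(b/a)z+(c/a)z^{2}\bigr)=1$ for all $z\in\fqstar$; since every element of $\fq$ is a square and the trace is Frobenius‑invariant, $\tr_{q/2}\!\bigl((c/a)z^{2}\bigr)=\tr_{q/2}\!\bigl(\sqrt{c/a}\,z\bigr)$, so the condition reads $\tr_{q/2}(1)+\tr_{q/2}\!\bigl((b/a+\sqrt{c/a})z\bigr)=1$ identically on $\fqstar$. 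This forces $b/a+\sqrt{c/a}=0$, i.e. $b^{2}=ac=3ac$, and then $\tr_{q/2}(1)=1$, i.e. $\fq=\f_{2^{n}}$ with $n$ odd, equivalently $q\equiv2\pmod 3$; the trivial case $q=2$ is immediate and consistent. Thus again $f$ is \onetoone iff $b^{2}=3ac$ and $q\equiv2\pmod3$. The hard part will be the conic point count in characteristic $\geq5$: pinning down the two correction terms $\eta(-3)$ (points at infinity) and $\eta(b^{2}-3ac)$ (points on the diagonal) exactly, and, in the degenerate case $b^{2}=3ac$, correctly deciding whether $F=0$ splits into two $\fq$‑rational lines or a Galois‑conjugate pair of lines; this dichotomy, controlled precisely by whether $\eta(-3)=1$ (that is, by $q\bmod 3$), is exactly what separates the permutation regime (condition~(2)) from the non‑permutation regime.
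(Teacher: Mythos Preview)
The paper does not supply a proof of this statement; it is quoted without argument from the cited reference. So there is no in-paper proof to compare against, and the relevant question is simply whether your argument is correct.

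It is. The difference-quotient criterion reduces the problem to showing that $F(x,y)=a(x^{2}+xy+y^{2})+b(x+y)+c$ has no off-diagonal zero, and your three-way case split handles this cleanly. In characteristic~$3$ the substitution $(u,v)=(x-y,x+y)$ is bijective (determinant~$2$), and the analysis of $au^{2}+bv+c=0$ with $u\neq0$ is correct. In characteristic~$\ge5$ the projective conic $F=0$ has determinant $(a/4)(3ac-b^{2})$, so it is smooth precisely when $b^{2}\neq3ac$; your counts of $1+\eta(-3)$ points at infinity and $1+\eta(b^{2}-3ac)$ diagonal points are right, and the bound $q-1-\eta(-3)-\eta(b^{2}-3ac)\ge q-3>0$ kills the smooth case. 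The degenerate case $b^{2}=3ac$ is reduced by translation to $x'^{2}+x'y'+y'^{2}=0$, and the dichotomy governed by $\eta(-3)$ (equivalently $q\bmod 3$) is exactly the classical one. In characteristic~$2$ your passage to the trace condition via $s=x+y$, $p=xy$, $z=1/s$ is correct, and the identity $\tr_{q/2}(w^{2})=\tr_{q/2}(w)$ collapses the linear-plus-quadratic term as you claim; the conclusion $b^{2}=ac$ and $n$ odd is exactly $b^{2}=3ac$ with $q\equiv2\pmod3$.

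Two small points worth making explicit in a final write-up: in the smooth-conic step you are using that every smooth plane conic over $\fq$ has exactly $q+1$ rational points (Chevalley--Warning gives a point, hence an $\fq$-isomorphism with $\mathbb{P}^{1}$); and you should note that the infinity and diagonal correction sets are disjoint, since $[1{:}1{:}0]$ lies off the conic when the characteristic is not~$3$. With those stated, the proof is complete.
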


\begin{theorem}[{\cite{2to1-MesQ19}}]\label{deg3-2to1}
  Any polynomial $f(x) \in \fqx$ of degree~$3$ is not \mfield{2}{\fq} if $q \ge 7$.
\end{theorem}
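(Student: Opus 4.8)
The plan is to exploit two elementary facts about a cubic $f$ over a field: an element with exactly two preimages under $f$ must be a \emph{critical value} (a value $\alpha$ for which $f(x)-\alpha$ has a repeated root), and $f$ has at most two critical values. Since a \twotoone map on $\fq$ has, by \cref{defn:mto1} with $m=2$, exactly $\lfloor q/2\rfloor$ elements with exactly two preimages, this is incompatible with $q\ge 7$.

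Write $f(x)=ax^{3}+bx^{2}+cx+d$ with $a\ne 0$; no further normalization is needed. The first step is the fibre lemma: if $\alpha\in\fq$ satisfies $\#f^{-1}(\alpha)=2$, then the degree-$3$ polynomial $f(x)-\alpha$ has exactly two distinct roots in $\fq$, and the only way this can happen is $f(x)-\alpha=a(x-\beta)^{2}(x-\gamma)$ with $\beta\ne\gamma$ in $\fq$. Indeed, an irreducible factor of $f(x)-\alpha$ over $\fq$ of degree $2$ or $3$ would contribute no roots in $\fq$ while the complementary factor has degree $\le 1$, so $f(x)-\alpha$ would then have at most one root in $\fq$; hence $f(x)-\alpha$ splits into linear factors over $\fq$, and exactly two distinct roots among its three roots (with multiplicity) forces a double root at some $\beta$. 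Consequently $f'(\beta)=0$ and $\alpha=f(\beta)$. Thus the map $\beta\mapsto f(\beta)$, restricted to the roots of $f'$ in $\fq$, surjects onto the set of $\alpha$ with $\#f^{-1}(\alpha)=2$, so this set has cardinality at most $\#\{\beta\in\fq:f'(\beta)=0\}$. If $f'$ is not the zero polynomial, this is at most $\deg f'\le 2$. The only remaining possibility, $f'\equiv 0$, forces $\operatorname{char}(\fq)=3$ and $b=c=0$, i.e.\ $f(x)=ax^{3}+d$; but then $f$ is a bijection of $\fq$ since $t\mapsto t^{3}$ is the Frobenius automorphism, so no element has two preimages. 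In every case at most $2$ elements of $\fq$ have exactly two preimages under $f$.

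Finally, if $f$ were \twotoone on $\fq$ then, writing $q=2k+r$ with $r\in\{0,1\}$, exactly $k=\lfloor q/2\rfloor$ elements of $\fq$ would have precisely two preimages; for $q\ge 7$ one has $\lfloor q/2\rfloor\ge 3$, contradicting the bound just established, and the theorem follows. I do not anticipate any real obstacle: the only step needing care is the fibre lemma together with the degenerate characteristic-$3$ subcase, the rest being bookkeeping. It is worth noting that $q=5$ is a genuine exception — for instance $f(x)=x^{3}+2x$ is \twotoone on $\f_5$, its two doubly covered values $2$ and $3$ being exactly the critical values $f(1)$ and $f(4)$ — so the hypothesis $q\ge 7$ is sharp.
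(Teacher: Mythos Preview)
Your proof is correct. The paper itself does not prove this statement; it merely cites it from \cite{2to1-MesQ19}, so there is no in-paper argument to compare against. Your critical-value argument is clean and self-contained: the key observation that a fibre of size exactly~$2$ for a cubic forces a repeated root, hence comes from a zero of $f'$, together with the bound $\deg f'\le 2$ and the handling of the degenerate $f'\equiv 0$ case in characteristic~$3$, gives the result immediately. The sharpness example over $\f_5$ is also correct.
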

 
\begin{lemma}[{\cite[Theorem~3.1]{nto1-NiuLQL23}}]\label{deg3-3to1-3n}
  Let $f(x) = x^{3} + b x^{2} + c x \in \f_{3^n}[x]$.  
  Then $f(x)$ is \mfield{3}{\f_{3^n}} \ifa $b = 0$ and $-c$ is a square in $\f_{3^n}^*$.
\end{lemma}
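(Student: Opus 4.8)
The plan is to reduce the claim to a fiber-size count. Since $3\mid \#\f_{3^n}$, \cref{defn:mto1} shows that the cubic $f$ is \mfield{3}{\f_{3^n}} precisely when $\#f^{-1}(\alpha)\in\{0,3\}$ for every $\alpha\in\f_{3^n}$, i.e.\ the equation $f(x)=\alpha$ never has exactly one or exactly two roots in $\f_{3^n}$. A degree-$3$ polynomial has exactly two roots in $\f_{3^n}$ iff it has a repeated root together with a distinct simple root, so everything is controlled by the common zeros of $f(x)-\alpha$ and $f'(x)=3x^2+2bx+c=-bx+c$. The one genuinely characteristic-$3$ feature to keep in mind is that the familiar substitution $x\mapsto x-b/3$ removing the quadratic term is unavailable, so the repeated-root analysis must be carried out directly on $f$.

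First I would dispose of the case $b\neq 0$ and show $f$ is not \mfield{3}{\f_{3^n}}. Here $f'(x)=-bx+c$ has the single root $x_0=c/b\in\f_{3^n}$. Using $x^3-y^3=(x-y)^3$ (valid since the characteristic is $3$) together with $x^2-y^2=(x-y)(x+y)$, I factor
\[
f(x)-f(y)=(x-y)\bigl[(x-y)^2+b(x+y)+c\bigr],
\]
specialize $y=x_0$, and simplify $b(x+x_0)+c$ to $b(x-x_0)$ (because $bx_0=c$ and $2x_0=-x_0$). This yields $f(x)-f(x_0)=(x-x_0)^2\bigl(x-(x_0-b)\bigr)$, so the value $f(x_0)$ has exactly the two preimages $x_0$ and $x_0-b$, distinct since $b\neq 0$. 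Hence $f$ has a fiber of size $2$ and cannot be \mfield{3}{\f_{3^n}}; in particular any \mfield{3}{\f_{3^n}} polynomial of this shape must have $b=0$.

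It remains to treat $b=0$, where $f(x)=x^3+cx$ is $\f_3$-linearized; then every nonempty fiber is a coset of $\ker f$, so $f$ is \mfield{3}{\f_{3^n}} iff $\#\ker f=3$. Since $\ker f=\{x:x(x^2+c)=0\}=\{0\}\cup\{x:x^2=-c\}$, this kernel has size $1$ when $c=0$ (so $f$ permutes) and, when $c\neq 0$, size $3$ exactly if $x^2=-c$ is solvable and size $1$ otherwise; thus $\#\ker f=3$ iff $-c$ is a nonzero square in $\f_{3^n}^{*}$. Combining the two cases gives the stated equivalence. I do not foresee a real obstacle: the only point demanding care is the characteristic-$3$ failure of the depressed-cubic normalization, which is exactly why the computation in the case $b\neq 0$ is done by hand rather than after simplifying $f$.
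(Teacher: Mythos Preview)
Your argument is correct. The paper does not supply its own proof of this lemma; it simply quotes the result from \cite[Theorem~3.1]{nto1-NiuLQL23}, so there is nothing in the paper to compare against line by line. Your treatment is a clean self-contained proof: the case $b\neq 0$ exhibits an explicit fiber of size~$2$ via the factorization $f(x)-f(x_0)=(x-x_0)^2\bigl(x-(x_0-b)\bigr)$, and the case $b=0$ reduces to the kernel size of the additive map $x\mapsto x^3+cx$, which is exactly the content of \cref{thm:EndKer}. Both steps are valid, and your remark that the depressed-cubic substitution $x\mapsto x-b/3$ is unavailable in characteristic~$3$ correctly explains why the $b\neq 0$ case must be handled directly.
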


To deal with the case $q \neq 3^n$, we need the following lemmas.

\begin{lemma}[{\cite{FF}}]\label{lem:ax2+bx+c}
Let $f(x) = ax^2 + bx + c \in \ftwonx$ with $a \neq 0$. 
Then $f(x)$ has two distinct roots in $\ftwon$ \ifa
$b \neq 0$ and $\trtnt(ac/b^2) = 0$.
\end{lemma}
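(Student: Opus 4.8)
The statement to prove is the classical criterion for a quadratic over a field of characteristic $2$ to split: if $f(x) = ax^2 + bx + c$ with $a \neq 0$, then $f$ has two distinct roots in $\ftwon$ if and only if $b \neq 0$ and $\trtnt(ac/b^2) = 0$. The plan is to reduce to the normalized equation $y^2 + y = \gamma$ by an affine substitution, and then invoke the standard fact that $y^2+y = \gamma$ is solvable in $\ftwon$ exactly when $\trtnt(\gamma) = 0$, with the two solutions $y$ and $y+1$ being distinct.

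First I would dispose of the case $b = 0$: then $f(x) = ax^2 + c = a(x + \sqrt{c/a})^2$ since squaring is a bijection on $\ftwon$, so $f$ has a single (double) root and never two distinct roots; this matches the claimed criterion since $b \neq 0$ fails. So assume $b \neq 0$. Now substitute $x = (b/a)\,y$; then $f(x) = a(b/a)^2 y^2 + b(b/a) y + c = (b^2/a)(y^2 + y) + c = (b^2/a)\bigl(y^2 + y + ac/b^2\bigr)$. Since $b^2/a \neq 0$, the roots of $f$ in $\ftwon$ correspond bijectively (via $x \mapsto (a/b)x$) to the roots of $y^2 + y = ac/b^2$ in $\ftwon$.

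Next I would record the elementary lemma that, over $\ftwon$, the equation $y^2 + y = \gamma$ has a solution in $\ftwon$ if and only if $\trtnt(\gamma) = 0$: one direction follows because the additive map $y \mapsto y^2 + y$ has image equal to the kernel of $\trtnt$ (its kernel is $\f_2$, so the image has index $2$, and $\trtnt(y^2+y) = \trtnt(y^2) + \trtnt(y) = \trtnt(y)+\trtnt(y) = 0$ shows the image lies in that kernel, hence equals it by counting); the converse is then immediate. Moreover, if $y_0$ is one solution, then $y_0 + 1$ is the other, and $y_0 \neq y_0 + 1$ in characteristic $2$, so there are always exactly two distinct solutions when any solution exists. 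Combining this with the reduction of the previous paragraph yields: $f$ has two distinct roots in $\ftwon$ iff $b \neq 0$ and $\trtnt(ac/b^2) = 0$.

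I do not expect any serious obstacle here — the argument is a textbook additive-character computation, and indeed the authors cite \cite{FF} for it. The only point requiring a little care is making sure the substitution $x = (b/a)y$ is invertible (it is, since $a, b \neq 0$) so that the root counts, including the distinctness of the two roots, transfer faithfully between the two equations; and checking that the additive map $y \mapsto y^2+y$ on $\ftwon$ really has image exactly $\ker(\trtnt)$ rather than something smaller, which is the index-$2$ counting argument sketched above.
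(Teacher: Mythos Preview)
Your argument is correct and is exactly the standard textbook reduction to the Artin--Schreier equation $y^2+y=\gamma$ followed by the image/kernel count for the additive map $y\mapsto y^2+y$. Note that the paper does not supply its own proof of this lemma at all---it simply cites \cite{FF}---so there is nothing to compare against; your write-up is the proof one finds in that reference.
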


\begin{lemma}[{\cite[Theorem~5.48]{FF}}]\label{lem:charsumx2}
  Let $f(x) = a x^2 + b x + c \in \fqx$
  with $q$ odd and $a \ne 0$. 
  Let $\eta$ be the quadratic character of $\fq$ 
  with the standard convention $\eta(0) = 0$. Then
  \[
  \sum_{e \in \fq} \eta(f(e)) =
  \begin{cases}
    -\eta(a) & \mbox{if } b^2 \neq 4 a c, \\
    (q-1)\eta(a) & \mbox{if } b^2 = 4 a c.
  \end{cases}
  \]
\end{lemma}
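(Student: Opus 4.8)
The plan is to reduce the sum to a canonical shape by completing the square and then to evaluate the resulting sum using the elementary count of square roots. Since $q$ is odd, $2a$ is invertible in $\fq$, so setting $d = (4ac - b^2)/(4a)$ we have $f(x) = a\bigl(x + b/(2a)\bigr)^2 + d$, and the shift $e \mapsto e - b/(2a)$ is a bijection of $\fq$; hence $\sum_{e \in \fq} \eta(f(e)) = \sum_{e \in \fq} \eta(a e^2 + d)$, where $d = 0$ exactly when $b^2 = 4ac$. I would first dispose of the case $d = 0$: then $\eta(ae^2) = \eta(a)\eta(e)^2$ equals $\eta(a)$ for every $e \ne 0$ and $0$ for $e = 0$, so the sum is $(q-1)\eta(a)$, as claimed.

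For $d \ne 0$, the key step is to rewrite the sum as a sum over the values taken by $ae^2 + d$. For any $s \in \fq$ one has $\#\{e \in \fq : e^2 = s\} = 1 + \eta(s)$ (the standard count: $1$ if $s = 0$, $2$ if $s$ is a nonzero square, $0$ otherwise, consistent with the convention $\eta(0) = 0$). Writing $t = ae^2 + d$, the number of $e$ with $ae^2 + d = t$ is therefore $1 + \eta\bigl((t-d)/a\bigr)$, and
\[
\sum_{e \in \fq} \eta(ae^2 + d)
= \sum_{t \in \fq} \bigl(1 + \eta((t-d)/a)\bigr)\,\eta(t)
= \sum_{t \in \fq} \eta(t) \;+\; \eta(a)\sum_{t \in \fq} \eta\bigl(t(t-d)\bigr),
\]
where I used $\eta(1/a) = \eta(a)^{-1} = \eta(a)$. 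The first sum vanishes because $\eta$ is a nontrivial multiplicative character.

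It remains to show $\sum_{t \in \fq} \eta(t(t-d)) = -1$. The $t = 0$ term is $0$, and for $t \ne 0$ one factors $\eta(t(t-d)) = \eta(t^2)\,\eta(1 - d/t) = \eta(1 - d/t)$; as $t$ runs over $\fqstar$ the quantity $1 - d/t$ runs over every element of $\fq$ except $1$ (here $d \ne 0$ is used), so the sum equals $\sum_{s \in \fq,\, s \ne 1} \eta(s) = -\eta(1) = -1$. Substituting back gives $\sum_{e \in \fq} \eta(ae^2 + d) = -\eta(a)$, which finishes the case $b^2 \ne 4ac$. The argument is essentially routine; the only points that need care are the bookkeeping of the convention $\eta(0) = 0$ in the identity $\#\{e : e^2 = s\} = 1 + \eta(s)$, and the reindexing $t \mapsto 1 - d/t$ in the final step — which is precisely where the hypotheses $q$ odd and $d \ne 0$ enter, since in characteristic $2$ neither the quadratic character nor the completion of the square is available.
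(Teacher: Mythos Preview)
Your proof is correct. The paper itself does not supply a proof of this lemma; it simply quotes the result as \cite[Theorem~5.48]{FF} (Lidl--Niederreiter, \emph{Finite Fields}). Your argument---completing the square, reducing to $\sum_e \eta(ae^2+d)$, and then handling the $d\ne 0$ case via the identity $\#\{e:e^2=s\}=1+\eta(s)$ together with the reparametrisation $t\mapsto 1-d/t$---is exactly the classical proof given in that reference, so there is nothing to compare.
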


\begin{theorem}\label{deg3-3to1}
Let $f(x) = a x^{3} + b x^{2} + c x + d \in \fqx$ with $a \neq 0$.
Then $f(x)$ is \mfield{3}{\fq} \ifa one of the following holds:
\begin{enumerate}[\upshape(1)]
    \item $q \equiv 0 \pmod{3}$, $b = 0$, and $(-ac)^{\frac{q-1}{2}} = 1$;
    \item $q \equiv 1 \pmod{3}$ and $b^2 = 3ac$;
    \item $q = 5$ and $b^2 + 2 a c =  \pm 2 a^2$.
\end{enumerate}
\end{theorem}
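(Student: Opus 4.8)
The plan is to reduce to a cubic in one variable and count roots, handling the characteristic-$3$ case separately from the generic case, and then dealing with the small sporadic field $\f_5$ by a direct argument. Since $a \neq 0$, after the substitution $x \mapsto x - b/(3a)$ when $3 \nmid q$ (which is a bijection of $\fq$ and does not change the \mtoone type), we may assume $b = 0$ in that case and study $f(x) = a x^3 + c x + d$; the constant $d$ is irrelevant to the \mtoone property, so really we look at $ax^3 + cx$. When $3 \mid q$, the substitution is unavailable, but then $x \mapsto x^3$ is the Frobenius, hence a bijection, so $f(x) = a x^3 + b x^2 + c x + d$ is \mtoone on $\fq$ iff $b x^{2/3}\cdots$ — more precisely, one writes $f(x) = (a^{1/3} x + \text{lower})^3 + \cdots$; I would instead argue directly that in characteristic $3$, $ax^3 + bx^2 + cx + d$ composed with the inverse Frobenius reduces to an affine-plus-$x^3$ shape, and the relevant count is governed by $b = 0$ together with the number of roots of $a x^3 + c x$, which by additivity (it is a $3$-linearized-type expression only when $b=0$) is an $\f_3$-linear map whose kernel has size $1$ or $3$ according to whether $-c/a$ is a nonsquare or a square; this recovers parts (1) and (via \cref{deg3-1to1}) the complement.

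For the main case $3 \nmid q$, after reduction we must determine when $g(x) = ax^3 + cx$ is \mtoone on $\fq$ with $m = 3$. Write $\#\fq = q$ and note $q \equiv 1$ or $2 \pmod 3$. The fibers of $g$: for $\alpha \in \fq$, the equation $ax^3 + cx = \alpha$ is a cubic, so each fiber has $0$, $1$, $2$, or $3$ elements. For $g$ to be \mfield{3}{\fq} we need (writing $q = 3k + r$) exactly $k$ fibers of size $3$ and the rest accounting for the remaining $r$ elements. The key structural fact is that $x \mapsto ax^3 + cx$ is odd: $g(-x) = -g(x)$, so fibers come in the symmetric pattern $g^{-1}(-\alpha) = -g^{-1}(\alpha)$. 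First I would dispose of $q \equiv 2 \pmod 3$: then cubing is a bijection on $\fq$ (since $(3, q-1) = 1$), so $g(x) = ax^3 + cx$; if $c = 0$ it is \onetoone hence not \thrtoone, and if $c \neq 0$ one shows the number of $\alpha$ with $|g^{-1}(\alpha)| = 3$ is too small — in fact, using \cref{deg3-1to1}(2), $g$ is \onetoone precisely when $b^2 = 3ac$, i.e. (with $b=0$) when $c=0$, and otherwise a short argument with the derivative $g'(x) = 3ax^2 + c$ and \cref{lem:charsumx2} bounds the number of critical values, showing \thrtoone is impossible unless $q$ is tiny, leaving $q = 5$ as the only exception to examine. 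For $q \equiv 1 \pmod 3$, the cube map is \thrtoone on $\fqstar$ (\cref{xnFq} with $(3, q-1) = 3$), and I expect the condition $b^2 = 3ac$ — i.e., $c = 0$ after reduction — to be exactly what forces $g(x) = ax^3$, which is \thrtoone on $\fqstar$ and fixes $0$, hence \mfield{3}{\fq} with exceptional set $\{d\}$ of size $1 = q \bmod 3$. The converse — that $c \neq 0$ with $q \equiv 1 \pmod 3$ prevents \thrtoone except for $q = 5$ — is where the real work lies.

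The main obstacle is precisely this converse bound: showing that for $q \equiv 1 \pmod 3$ and $c \neq 0$ (equivalently $b^2 \neq 3ac$), the polynomial $ax^3 + cx$ fails to be \thrtoone on $\fq$ for all $q > 5$. The approach I would take is a counting argument via character sums. The number of $\alpha \in \fq$ having exactly three preimages under $g$ equals $\frac{1}{?}$ of a sum of Jacobi-type sums controlled by the discriminant of $ax^3 + cx - \alpha$, which is quadratic in $\alpha$; the count of $\alpha$ for which this cubic splits completely involves $\sum_\alpha \eta(\text{disc})$-style terms, and \cref{lem:charsumx2} is exactly the tool to evaluate a quadratic character sum in $\alpha$. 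One shows the number of fully-split fibers is $\frac{q + O(\sqrt q)}{6}$ rather than the $\frac{q-1}{3}$ needed, giving a contradiction once $q$ is large enough; then a finite check over the small remaining fields pins down $q = 5$ as the unique genuine exception, where $b^2 + 2ac = \pm 2a^2$ is the explicit transcription of "$ax^3 + cx$ (after the shift) is \thrtoone on $\f_5$" found by brute force. I would double-check the $\f_5$ condition against the Magma computation referenced in the paper and make sure the shift $x \mapsto x - b/(3a)$ translates "$c = 0$ in new coords" back into the stated $b^2 = 3ac$ and, for $\f_5$ specifically, into $b^2 + 2ac = \pm 2a^2$ (using $3 \equiv -2$, $-1 \equiv 4$ in $\f_5$ to reconcile the two forms).
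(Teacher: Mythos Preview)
Your overall skeleton matches the paper's: reduce by the shift $x\mapsto x-b/(3a)$ when $3\nmid q$ to study $g(x)=x^3+\alpha x$ with $\alpha=(3ac-b^2)/(3a^2)$; invoke \cref{deg3-3to1-3n} for characteristic~$3$; note that $\alpha=0$ gives $x^3$, which is \thrtoone exactly when $q\equiv 1\pmod 3$; and treat $q=5$ by direct check. The substantive step is the converse for $q\ge 7$, $3\nmid q$: why $\alpha\ne 0$ prevents $g$ from being \thrtoone.

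Here your tactic diverges from the paper's. You propose to range over image values $\beta$ and count fibers of size~$3$ via the discriminant of $x^3+\alpha x-\beta$ together with \cref{lem:charsumx2}. For odd $q$ this does work, and is in fact exact rather than $q/6+O(\sqrt q)$: the discriminant $-4\alpha^3-27\beta^2$ is quadratic in~$\beta$, the character sum is $-\eta(-3)$, and after the bookkeeping $N_0=N_2+2N_3$ one obtains $N_3=(q-3N_2-\eta(-3))/6$, which cannot equal $\lfloor q/3\rfloor$ once $q>5$. The paper instead ranges over preimages $x$: if $g$ is \thrtoone then for every non-exceptional $x$ the factorization $g(x+y)-g(x)=y(y^2+3xy+3x^2+\alpha)$ forces the quadratic in~$y$ to have two distinct nonzero roots, so $-3x^2-4\alpha$ is a nonzero square; hence $\sum_x\eta(-3x^2-4\alpha)\ge q-4$, while \cref{lem:charsumx2} gives $-\eta(-3)\in\{\pm 1\}$ when $\alpha\ne 0$. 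Same lemma, dual variable; the paper's version avoids the fiber-count bookkeeping and the distinction between ``discriminant a square and cubic splits'' versus ``discriminant a square and cubic irreducible''.

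There is a genuine gap at even $q$. Your discriminant/quadratic-character argument has no analogue in characteristic~$2$, and you supply none. The paper handles $q=2^n$ separately via \cref{lem:ax2+bx+c}: the quadratic becomes $y^2+xy+x^2+\alpha$, which has two roots in $\ftwon$ iff $x\ne 0$, $\trtnt(1+\alpha/x^2)=0$, and $x^2\ne\alpha$; a short case split on the parity of~$n$ then shows this cannot hold for all non-exceptional $x$ unless $\alpha=0$. You need to add this argument.

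Two minor points: the characteristic-$3$ paragraph is tangled (no $x^{2/3}$ is needed; just cite \cref{deg3-3to1-3n} directly), and $q=5$ lies in the $q\equiv 2\pmod 3$ branch rather than the $q\equiv 1$ branch your phrasing suggests.
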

\begin{proof}
The case (1) follows from \cref{deg3-3to1-3n}.
If $3 \nmid q$, then 
\[
f \Big(x-\frac{b}{3a}\Big) 
= a \Big(x^3 + \frac{3ac-b^2}{3a^2} x\Big) 
    + f\Big(\frac{-b}{3a}\Big).
\]
Hence $f(x)$ is \mfield{3}{\fq} \ifa $g(x) := x^3 + \alpha x$ 
is \mfield{3}{\fq}, where $\alpha = (3ac-b^2)/3a^2$. 
When $q=5$, it is easy to verify that $g(x)$ is \mfield{3}{\fq} 
\ifa $\alpha = \pm 1$, i.e., $b^2 + 2 a c =  \pm 2 a^2$.  
    
Since there is no definition of \thrtoone on $\f_2$, 
it suffices to show that when $q = 2^n$ and $n \ge 2$
or $3 \nmid q$ and $q \ge 7$ is odd, 
$g(x)$ is \mfield{3}{\fq} \ifa (2) hold. 
The sufficiency is an immediate consequence of \cref{xnFq}. 
To prove the necessity, suppose $g(x)$ is \mfield{3}{\fq} 
and the exceptional set of $g(x)$ on $\fq$ is $E_g(\fq)$.
Since $3 \nmid q$, we have $\# E_g(\fq) = 1$ or $2$.
Then for each $x \in \fq \setminus E_g(\fq)$, 
$g(x+y) = g(x)$, i.e., $y (y^2 + 3x y + 3x^2 + \alpha) = 0$, 
has exactly three distinct solutions for the variable~$y$. 
Clearly, $y = 0$ is a solution. 
Thus for each $x \in \fq \setminus E_g(\fq)$,
\begin{equation}\label{eq:y2+xy}
  y^2 + 3x y + 3x^2 + \alpha  = 0   
\end{equation}
has exactly two distinct nonzero solutions for the variable $y$. 

When $q = 2^n$ and $n \ge 2$, \cref{eq:y2+xy} 
is reduced to $y^2 + \alpha = 0$ if $x = 0$, 
which has  only one solution in $\ftwon$.
So $x = 0 \in E_g(\ftwon)$. 
By \cref{lem:ax2+bx+c},  
for each $x \in \ftwon \setminus E_g(\ftwon)$, 
 \cref{eq:y2+xy}, i.e.,
$y^2 + x y + x^2 + \alpha  = 0$, 
has exactly two distinct nonzero solutions in $\ftwon$ \ifa 
\begin{equation}\label{eq:tr=0anex2}
  \trtnt(1 + \alpha/x^2) = 0  \qtq{and} 
  x^2 + \alpha \ne 0.
\end{equation}
If $n$ is even, then $2^n \equiv 1 \pmod{3}$, 
and so $E_g(\ftwon) = \{ 0 \}$.  
Since $x^2$ permutes $\ftwon^*$, for each 
$x \in \ftwon \setminus E_g(\ftwon) = \ftwon^*$,
\cref{eq:tr=0anex2} holds \ifa $\alpha = 0$. 
If $n \ge 3$ is odd, then $2^n \equiv 2 \pmod{3}$, 
so $\# E_g(\ftwon) = 2$. 
Suppose $E_g(\ftwon) = \{0, e\}$, 
where $e \in \fqstar$. Then
\[
\{ x^2 : x \in \ftwon \setminus E_g(\ftwon)\} 
= \ftwon \setminus \{0, e^2\}.
\]
For each $x \in \ftwon \setminus E_g(\ftwon)$,
\cref{eq:tr=0anex2} holds \ifa
(i) $\alpha = 0$ and $\trtnt(1) = n = 0$; 
or (ii) $\alpha = e^2$ and $\trtnt(1 + e^2/x^2) = 0$. 
Obviously, the case (i) contradicts that $n$ is odd. 
For the case (ii), $\{1 + e^2/x^2 : x \in \ftwon 
\setminus E_g(\ftwon)\} = \ftwon \setminus \{0, 1\}$.
Because half of the elements in $\ftwon \setminus \{0, 1\}$ 
satisfy $\trtnt(x_0) = 1$, there are half of 
$x \in \ftwon \setminus E_g(\ftwon)$ such that 
$\trtnt(1 + e^2/x^2) = 1$, a contradiction. 
Hence $n$ is even and so $\alpha = 0$. That is, (2) holds. 

When $3 \nmid q$ and $q \ge 7$ is odd,
multiplying the both sides of \cref{eq:y2+xy} with $4$ yields 
$(2y + 3x)^2 = -3x^2 - 4 \alpha$.
Then for each $x \in \fq \setminus E_g(\fq)$, 
$-3x^2 - 4 \alpha$ is a nonzero square, since \cref{eq:y2+xy}
has two distinct solutions for the variable $y$. Hence 
\[
\sum_{x \in \fq} \eta(-3x^2-4 \alpha) \ge
\begin{cases}
q-2 & \text{if~} q \equiv 1 \pmod{3}, \\
q-4 & \text{if~} q \equiv 2 \pmod{3},
\end{cases}
\]
and so $\sum_{x \in \fq} \eta(-3x^2-4 \alpha) \ge 3$.
Let $d = -4 (-3) (-4\alpha) = -3 \cdot 4^2 \alpha$. 
Since $3 \nmid q$ and $q$ is odd, 
we get $d = 0$ \ifa $\alpha = 0$. 
By \cref{lem:charsumx2},
\[
\sum_{x \in \fq} \eta(-3x^2-4 \alpha) =
\begin{cases}
-\eta(-3) & \text{if}~ \alpha \neq 0, \\
(q-1)\eta(-3) & \text{if}~ \alpha = 0.
\end{cases}
\]
If $\alpha \neq 0$, then 
$\sum_{x \in \fq} \eta(-3x^2-4 \alpha) = - \eta(-3) \le 1$, 
a contradiction. Thus $\alpha = 0$.
Then $g(x) = x^3 + \alpha x = x^3$. 
If $q \equiv 2 \pmod{3}$, then $(3, q-1) = 1$,
and so $g(x)$ is \mfield{1}{\fq}, 
contrary to that $g(x)$ is \mfield{3}{\fq}.
Hence $q \equiv 1 \pmod{3}$. Then (2) holds. 
\end{proof}

\begin{remark}
The definition of many-to-one mappings used in this paper is a 
generalization of the \ntoone mappings given in \cite{nto1-NiuLQL23}.
Hence \cite[Theorem~3.4]{nto1-NiuLQL23} 
is a special case of \cref{deg3-3to1}.
We note that the proof of \cref{deg3-3to1} is very 
similar to that of \cite[Theorem~3.4]{nto1-NiuLQL23}.  However, 
our proof is a little bit more detailed and we use explicit formula 
for the involved character sum  rather than an estimate. 
\end{remark}

\subsection{Linearized polynomials}

The \mtoone property of linearized polynomials over $\fq$
can be characterized by the number of their roots in $\fq$,
which is a special case of the next result. 

\begin{lemma}[{\cite{Zhengmto1}}]\label{thm:EndKer}
Let $f$ be an endomorphism of a finite group~$G$ and
$\ker(f) = \{ x \in G : f(x) = e\}$,
where $e$ is the identity of~$G$. 
It is easy to verify that 
$\{ x \in G : f(x) = f(a)\} = a \ker(f)$ 
for any $a \in G$. Hence $f$ is \mset{m}{G} and
$E_f(G) = \varnothing$, where $m = \# \ker(f)$.
\end{lemma}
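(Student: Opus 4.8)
The plan is to establish the coset identity $f^{-1}(f(a)) = a\ker(f)$ first and then translate it into the bookkeeping of \cref{defn:mto1}. For the identity, fix $a \in G$. Since $f$ is a homomorphism, $f(a^{-1}x) = f(a)^{-1}f(x)$ for every $x \in G$, so $f(x) = f(a)$ \ifa $f(a^{-1}x) = e$ \ifa $a^{-1}x \in \ker(f)$ \ifa $x \in a\ker(f)$. This proves both inclusions at once. Note that $\ker(f)$ is a subgroup of $G$ (it contains $e$ and is closed under the group operation and inversion because $f$ is a homomorphism), so $a\ker(f)$ is a genuine left coset.

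Next I would count. The map $k \mapsto ak$ is a bijection from $\ker(f)$ onto $a\ker(f)$, so every fibre $f^{-1}(f(a)) = a\ker(f)$ has cardinality $\#\ker(f) = m$. As $a$ ranges over $G$, the sets $f^{-1}(b)$ with $b \in \im(f)$ are nonempty, pairwise disjoint, and cover $G$; hence $\#G = \sum_{b \in \im(f)} \# f^{-1}(b) = m \cdot \#\im(f)$. In particular $m \mid \#G$, and writing $\#G = km + r$ with $0 \le r < m$ as in \cref{defn:mto1} forces $r = 0$ and $k = \#\im(f)$. Moreover $e \in \ker(f) \subseteq G$ gives $1 \le m \le \#G$, so \cref{defn:mto1} applies: there are exactly $k = \#\im(f)$ elements of $G$ (namely those of $\im(f)$) each having exactly $m$ preimages, and the remaining $r = 0$ exceptional elements give $E_f(G) = \varnothing$. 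This yields that $f$ is \mset{m}{G} with empty exceptional set, which is the assertion of the lemma.

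There is no real obstacle here; the whole argument is elementary Lagrange-type coset counting. The only point deserving a line of care is verifying that the remainder $r$ in the Euclidean division $\#G = km + r$ vanishes, which is precisely the divisibility $m \mid \#G$ obtained from the partition $G = \bigsqcup_{b \in \im(f)} f^{-1}(b)$ into cosets of $\ker(f)$; everything else is immediate. (Alternatively, one could simply invoke \cref{constr2} with $S = \bar S$ a one-element set and $\lambda$ constant, but the direct coset argument is shorter and self-contained.)
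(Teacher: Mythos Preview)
Your argument is correct and is precisely the routine coset verification that the paper alludes to; the paper itself offers no separate proof beyond the hint ``It is easy to verify that $\{x\in G:f(x)=f(a)\}=a\ker(f)$'' embedded in the lemma statement, so your write-up simply spells out what the authors left implicit.
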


We next give an explicit criterion for arbitrary 
linearized binomials to be \mtoone.

\begin{proposition}\label{thm:mto1_bxqs-cxqt}
Let $L(x) = b x^{q^s} - c x^{q^t}\in \fqnx$ and 
$m = q^{(s-t,\, n)}$, where $b \neq 0$ and $0 \leq t \leq s$. 
If $\nm_{q^n/m}(c/b) \neq 1$, then $L(x)$ is \mfield{1}{\fqn}. 
If $\nm_{q^n/m}(c/b)    = 1$, then $L(x)$ is \mfield{m}{\fqn}.
\end{proposition}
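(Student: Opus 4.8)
The plan is to read the answer off from the additive structure of $L$. First I would observe that, since $b \ne 0$, the map $L(x) = b x^{q^s} - c x^{q^t}$ is an additive endomorphism of the group $(\fqn, +)$ — each monomial $x^{q^i}$ is additive — so \cref{thm:EndKer} applies directly and yields that $L$ is \mfield{\#\ker(L)}{\fqn} with empty exceptional set. The whole proposition then reduces to the statement that $L$ has a single root in $\fqn$ when $\nm_{q^n/m}(c/b) \ne 1$ and has exactly $m$ roots when $\nm_{q^n/m}(c/b) = 1$. Before counting roots I would dispose of the degenerate case $s = t$: there $L(x) = (b-c) x^{q^s}$, which is a bijection of $\fqn$ if $b \ne c$ and the zero map if $b = c$, while $m = q^{(0,n)} = q^n$ and $\nm_{q^n/m}(c/b) = c/b$, so the statement is immediate; from now on assume $s > t$.

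To count the nonzero roots I would argue as follows. Zero is always a root, and a nonzero $x$ is a root \ifa $b x^{q^s} = c x^{q^t}$; dividing by $x^{q^t}$ and using $q^s - q^t = q^t(q^{s-t}-1)$ turns this into $\bigl(x^{q^t}\bigr)^{q^{s-t}-1} = c/b$. Since $x \mapsto x^{q^t}$ (a power of the Frobenius) permutes $\fqnstar$, the substitution $y = x^{q^t}$ is a bijection between the nonzero roots of $L$ and the solutions $y \in \fqnstar$ of $y^{\,q^{s-t}-1} = c/b$.

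Then I would apply \cref{xnFq} over the field $\fqn$ with exponent $q^{s-t}-1$, together with the standard identity $(q^{s-t}-1,\, q^n-1) = q^{(s-t,n)}-1 = m-1$: the power map $y \mapsto y^{q^{s-t}-1}$ is \mfield{(m-1)}{\fqnstar} with empty exceptional set, and its image is the unique subgroup of $\fqnstar$ of order $(q^n-1)/(m-1)$, namely $\{z \in \fqnstar : z^{(q^n-1)/(m-1)} = 1\}$. Consequently the equation $y^{q^{s-t}-1} = c/b$ has either $0$ or exactly $m-1$ solutions, and it is solvable \ifa $(c/b)^{(q^n-1)/(m-1)} = 1$, which — because $m-1 = q^{(s-t,n)}-1$ — is precisely $\nm_{q^n/m}(c/b) = 1$. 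Combining the two cases: if $\nm_{q^n/m}(c/b) \ne 1$ then $0$ is the only root and $\#\ker(L) = 1$, so $L$ is \mfield{1}{\fqn}; if $\nm_{q^n/m}(c/b) = 1$ then there are $m-1$ further roots and $\#\ker(L) = m$, so $L$ is \mfield{m}{\fqn}. Here $m = q^{(s-t,n)}$ divides $q^n$ since $(s-t,n) \mid n$, so $m \le \#\fqn$ and \cref{defn:mto1} applies throughout.

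No step is a genuine obstacle; the points that need a little care are peeling off the case $s = t$ at the outset (the substitution argument needs $s > t$ so that the exponent $q^{s-t}-1$ is positive), using the \emph{bijective} substitution $y = x^{q^t}$ so that the common factor $x^{q^t}$ is not miscounted, and translating ``$c/b$ lies in the image of the power map'' into the displayed norm condition by means of $(q^a-1, q^b-1) = q^{(a,b)}-1$.
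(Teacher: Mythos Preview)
Your proof is correct and follows essentially the same route as the paper: reduce to counting roots via \cref{thm:EndKer}, strip off the $x^{q^t}$ factor (the paper does this by writing $L = b(x^{q^{s-t}} - b^{-1}cx)\circ x^{q^t}$, you do it via the bijective substitution $y = x^{q^t}$), and then invoke \cref{xnFq} together with $(q^a-1, q^b-1) = q^{(a,b)}-1$ to turn solvability of $y^{q^{s-t}-1}=c/b$ into the norm condition. Your explicit treatment of the degenerate case $s=t$ is slightly more careful than the paper's.
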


\begin{proof}
Clearly, $L(x) = b(x^{q^{s-t}} - b^{-1}c x)\circ x^{q^t}$.
Thus it is \mfield{m}{\fqn} \ifa
$L_r(x) := x^{q^r} - a x$ is \mfield{m}{\fqn}, 
where $r = s - t$ and $a = b^{-1} c$.
By \cref{xnFq}, $x^{q^r-1}$ is \mfield{d}{\fqnstar},
where $d = q^{(r,\, n)}-1$. 
According to \cref{thm:EndKer},
$L_r(x)$ is \mfield{m}{\fqn} \ifa 
it has exactly~$m$ distinct roots in $\fqn$, i.e., 
$x^{q^r-1} = a$ has exactly $m-1$ distinct solutions in $\fqnstar$. 
Hence $m-1 = 0$ if $a \notin (\fqnstar)^{q^r-1}$   
and $m-1 = d$ if $a \in (\fqnstar)^{q^r-1}$. 
That is, $m = 1$ if $a^{\frac{q^n-1}{d}} \neq 1$  
and $m = q^{(r,\, n)}$ if $a^{\frac{q^n-1}{d}} = 1$, 
since $(\fqnstar)^{q^r-1} = (\fqnstar)^{d}$, 
a cyclic group of order $(q^n-1)/d$.
\end{proof}

\begin{remark}
Let $L(x) = x^{q^r} - a x - b \in \fqnx$.
Coulter and Henderson {\cite{CoulterH04}}
determined the conditions under which 
$L(x)$ has roots in $\fqn$
and provided an expression of these roots.
Hence \cref{thm:mto1_bxqs-cxqt} can also be obtained 
by the method in {\cite[Theorem~3]{CoulterH04}}.  See also related work in Wu \cite{Wu-L-bi} and  Tuxanidy-Wang \cite{TuxanidyWang14}.
\end{remark}

\section{Main result}\label{sec:main}

We start with two lemmas and then give the main result.

\begin{lemma}\label{lem:abAB}
Let $a, b, u, v \in \f_{q^2}$, $A = bu - av$, and $B = au^{q} - bv^{q}$. 
Let~$\xi$ be a primitive element of $\f_{q^{2}}$.
Then the following statements hold:
\begin{enumerate}[\upshape(1)]
  \item $A^{q+1} - B^{q+1} = (a^{q+1} - b^{q+1})(v^{q+1} - u^{q+1})$;
  \item $ax^{q} +bx$ is \mfield{q}{\f_{q^2}}
        \ifa $a^{q+1} = b^{q+1}$;
  \item $a^{q+1} = b^{q+1}$ \ifa $b = \xi^{(q-1)k}a^{q}$
        for some $k \in \{1, 2, \ldots, q+1\}$;
  \item If $b = \xi^{(q-1)k}a^{q}$ for some $k \in \{1, 2, \ldots, q+1\}$, 
        then $B \xi^{k} = (A \xi^{k})^q$ and $A \xi^{k} = (B \xi^{k})^q$.
\end{enumerate}
\end{lemma}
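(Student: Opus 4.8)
The plan is to verify all four statements by elementary $\fqtwo$-arithmetic, using throughout that $t^{q^2}=t$ for every $t\in\fqtwo$ and that the norm map $t\mapsto t^{q+1}$ takes values in $\fq$. For (1) I would simply expand: since $A^q=b^qu^q-a^qv^q$ and $B^q=a^qu-b^qv$, one has $A^{q+1}=A\cdot A^q=b^{q+1}u^{q+1}-a^qb\,uv^q-ab^qu^qv+a^{q+1}v^{q+1}$ and $B^{q+1}=B\cdot B^q=a^{q+1}u^{q+1}-ab^qu^qv-a^qb\,uv^q+b^{q+1}v^{q+1}$. The two cross terms are identical in the two products and cancel in the difference, which then collapses to $u^{q+1}(b^{q+1}-a^{q+1})+v^{q+1}(a^{q+1}-b^{q+1})=(a^{q+1}-b^{q+1})(v^{q+1}-u^{q+1})$.

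For (2), note that $x\mapsto ax^q+bx$ is an additive endomorphism of $(\fqtwo,+)$, so by \cref{thm:EndKer} it is $m$-to-$1$ with $m$ the number of elements of its kernel; hence it is \mfield{q}{\fqtwo} exactly when that kernel has $q$ elements. When $a\ne 0$, a nonzero $x$ lies in the kernel \ifa $x^{q-1}=-b/a$; by \cref{xnFq} the map $x\mapsto x^{q-1}$ is $(q-1)$-to-$1$ from $\fqtwostar$ onto its image, the subgroup of $(q+1)$-st roots of unity, so this equation has $q-1$ solutions — making the kernel size $q$ — precisely when $(-b/a)^{q+1}=1$, and since $(-1)^{q+1}=1$ in every characteristic this is the condition $a^{q+1}=b^{q+1}$; otherwise the kernel is $\{0\}$ and the map is \mfield{1}{\fqtwo}. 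The case $a=0$, $b\ne 0$ gives a bijection with $a^{q+1}\ne b^{q+1}$ (the degenerate $a=b=0$ being excluded by context). For (3), if $a=0$ both sides force $b=0$; if $a\ne 0$, put $c=b/a^q$, so that $b^{q+1}=c^{q+1}(a^q)^{q+1}=c^{q+1}(a^{q+1})^q=c^{q+1}a^{q+1}$ because $a^{q+1}\in\fq$, and therefore $a^{q+1}=b^{q+1}$ \ifa $c^{q+1}=1$ \ifa $c\in\{\xi^{(q-1)k}:1\le k\le q+1\}$, the unique order-$(q+1)$ subgroup of $\fqtwostar$ generated by $\xi^{q-1}$; this is exactly $b=\xi^{(q-1)k}a^q$ for some such $k$.

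For (4), the computation to carry out is $a^qB-bA^q=a^q(au^q-bv^q)-b(b^qu^q-a^qv^q)=(a^{q+1}-b^{q+1})u^q$, the terms $a^qbv^q$ cancelling. The hypothesis $b=\xi^{(q-1)k}a^q$ gives $a^{q+1}=b^{q+1}$ by (3), so $a^qB=bA^q$. If $a=0$ then $b=0$ and $A=B=0$, so the assertion is trivial; if $a\ne 0$ then $B=(b/a^q)A^q=\xi^{(q-1)k}A^q=\xi^{-k}(\xi^kA)^q$, i.e. $B\xi^k=(A\xi^k)^q$, and applying the $q$-power map once more together with $A\xi^k\in\fqtwo$ yields $A\xi^k=(B\xi^k)^q$.

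I do not expect a real obstacle here; every step is routine finite-field arithmetic. The only points deserving attention are the appeal to \cref{thm:EndKer} to convert the \mfield{q}{\fqtwo} condition in (2) into a count of kernel elements, the identification of the image of $x\mapsto x^{q-1}$ on $\fqtwostar$ with the group of $(q+1)$-st roots of unity, and the (characteristic-free) fact $(-1)^{q+1}=1$ used to rewrite $(-b/a)^{q+1}$ as $b^{q+1}/a^{q+1}$.
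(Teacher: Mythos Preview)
Your proof is correct and follows essentially the same route as the paper for (1) and (3); for (2) the paper simply cites \cref{thm:mto1_bxqs-cxqt} rather than redoing the kernel count via \cref{thm:EndKer}, and for (4) the paper substitutes $a=\xi^{(q-1)k}b^q$ (obtained by raising $b=\xi^{(q-1)k}a^q$ to the $q$-th power) directly into $B$ instead of first proving the identity $a^qB-bA^q=(a^{q+1}-b^{q+1})u^q$. Your remark that the degenerate case $a=b=0$ is excluded by context is appropriate, since the paper's own citation of \cref{thm:mto1_bxqs-cxqt} for (2) implicitly assumes $a\ne 0$.
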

\begin{proof}
(1)
The desired result follows from
\[\begin{split}
 A^{q+1}
 &= (bu-av)^{q}(bu-av)\\
 &= b^{q+1} u^{q+1} - a b^q u^q v  
    - a^q b u v^q + a^{q+1} v^{q+1}  
    \quad\text{and} \\
 B^{q+1}
 &= (au^q-bv^q)^q(au^q-bv^q) \\
 &= a^{q+1} u^{q+1} - a b^q u^q v 
  - a^q b u v^q + b^{q+1} v^{q+1}.
\end{split}\]
(2) follows from \cref{thm:mto1_bxqs-cxqt}.
(3) If $b = \xi^{(q-1)k} a^{q}$, then $b^{q+1} = a^{q+1}$.
On the other hand, if $a^{q+1} = b^{q+1}$ and $a = 0$, 
then $b = 0 = \xi^{(q-1)k} a^{q}$.
If $a \ne 0$, then $(b/a^{q})^{q+1} =1$.
Note that $\xi^{(q-1)k}$ with $1 \le k \le q+1$ 
are exactly all the solutions of $x^{q+1} = 1$ in $\f_{q^2}$.
Thus $b/a^{q} = \xi^{(q-1)k}$ for some $1 \le k \le q+1$. 
(4) Since $b = \xi^{(q-1)k}a^{q}$, 
we get $a = \xi^{(q-1)k}b^{q}$ and so
\[
    B = \xi^{(q-1)k} b^{q} u^{q} - \xi^{(q-1)k} a^{q} v^{q}
      = \xi^{(q-1)k} (b u - a v)^{q}
      = \xi^{(q-1)k} A^{q}. \qedhere
\]
\end{proof}

\begin{lemma}\label{lem:abAB2}
Let $a, b, c, u, v \in \f_{q^2}$ satisfy 
$a^{q+1} = b^{q+1}$ and $a v \ne b u$. 
Let $A = bu - av$ and $B = au^{q} - bv^{q}$. 
Then the following statements hold:
\begin{enumerate}[\upshape(1)]
  \item $A^{q+1} = B^{q+1}$, $a^q A = b B^q$, and $a^q B = b A^q$;
  \item $a b A B \ne 0$ and $a^q/b = b^q/a = A^q/B = B^q/A$;
  \item $A c^q = B^q c$ if and  only if $a^q c = b c^q$ or $a c^q = b^q c$;
\end{enumerate}
\end{lemma}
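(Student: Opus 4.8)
The plan is to collapse all three equations onto a single common condition on $c$, using the ratios recorded in part~(2). By part~(2) we have $abAB \ne 0$ together with the chain $a^q/b = b^q/a = A^q/B = B^q/A$; I would denote this common value by $\mu$, so that $\mu \in \fqtwostar$ and, in particular, $a^q = \mu b$, $b^q = \mu a$, and $B^q = \mu A$. (If one prefers not to invoke part~(2), the same $\mu$ can be extracted directly: $A = bu - av \ne 0$ by hypothesis, $a,b \ne 0$ since $a^{q+1} = b^{q+1}$ forces $a = 0 \Leftrightarrow b = 0$, and then $a^q A = b B^q$ from part~(1) forces $B \ne 0$ and gives $a^q/b = B^q/A =: \mu$.)

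Next I would rewrite each of the three equations in the statement by substituting one of these relations and cancelling a nonzero factor. Substituting $B^q = \mu A$ into $A c^q = B^q c$ and dividing by $A \ne 0$ turns it into $c^q = \mu c$. Substituting $a^q = \mu b$ into $a^q c = b c^q$ and dividing by $b \ne 0$ turns it into $c^q = \mu c$. Substituting $b^q = \mu a$ into $a c^q = b^q c$ and dividing by $a \ne 0$ turns it into $c^q = \mu c$. Hence each of the three equations is equivalent to $c^q = \mu c$, and in particular $A c^q = B^q c$ holds if and only if $a^q c = b c^q$, and equally well if and only if $a c^q = b^q c$; this proves~(3), and incidentally shows that the disjunction in the statement is not strict, since both disjuncts are equivalent to one another.

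There is essentially no obstacle here once part~(2) (or the short substitute above) is available: the only point requiring care is that the three cancellations by $A$, $a$, $b$ are legitimate, which is precisely what $abAB \ne 0$ guarantees. Should one want the argument fully self-contained and independent of part~(1)--(2), the value $\mu$ can also be produced from \cref{lem:abAB}(3)--(4): writing $b = \xi^{(q-1)k} a^q$ gives $a^q/b = \xi^{-(q-1)k}$, while $B = \xi^{(q-1)k} A^q$ gives $B^q = \xi^{q(q-1)k} A$, and the identity $q(q-1)k + (q-1)k = k(q^2-1)$ shows $\xi^{q(q-1)k} = \xi^{-(q-1)k}$, so that $a^q/b = B^q/A$ as before and the rest of the argument is unchanged.
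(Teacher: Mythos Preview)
Your proof is correct and follows essentially the same approach as the paper: both arguments use the identity $a^q A = b B^q$ from part~(1) (equivalently, your common ratio $\mu = a^q/b = B^q/A$) to reduce $A c^q = B^q c$ to $a^q c = b c^q$. The paper's version is terser, writing $b(A c^q - B^q c) = A(b c^q - a^q c)$ and cancelling $bA \ne 0$; your explicit observation that both disjuncts $a^q c = b c^q$ and $a c^q = b^q c$ collapse to the same condition $c^q = \mu c$ is a nice clarification that the paper leaves implicit.
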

\begin{proof}
(1) $A^{q+1} = B^{q+1}$ follows from \cref{lem:abAB}. 
If $a^{q+1} = b^{q+1}$, then
\[
  a^q A 
  = a^q b u - a^{q+1}v
  = a^q b u - b^{q+1}v
  = b(a^q u - b^q v)
  = b B^q.
\]
Similarly, $a^q B = b A^q$.
The hypothesis and (1) imply (2).
By (1), we have $b B^q c = a^q A c$ and so 
$b(A c^q - B^q c) = A(b c^q -a^q c)$. 
Then (3) holds by $b A \ne 0$.
\end{proof}

From the above lemmas, if $a^{q+1} = b^{q+1}$,  
then $b = \xi^{(q-1)k}a^{q}$ and thus 
\[
\xi^{k} (ax^q + bx) 
= (\xi^{qk} a^q x)^q + \xi^{qk} a^q x 
= \tr(\xi^{qk}a^q x),
\]
a trace function from $\fqtwo$ onto $\fq$. 
Similarly, if $a^{q+1} = b^{q+1}$, then $\xi^{k} (Ax^q + Bx)$ 
is also a trace function from $\fqtwo$ onto $\fq$. 
Moreover, there are $(q^2 - 1) (q + 1) q^2 (q^2 - 1)$ 
choices for $a, b, u, v \in \fqtwo$ such that 
$a^{q+1} = b^{q+1}$ and $a v \ne b u$.
After these preparations, we can now give 
the main result of this paper.

\begin{theorem}\label{thm:main_gen}
Let $a, b, c, u, v \in \f_{q^2}$ satisfy $a^{q+1} = b^{q+1}$ 
and $a v \ne b u$. Let $h(x) \in \fqtwox$ and
\begin{equation*}\label{eq:f_gen}
  f(x) = h(a x^q + b x + c) + u x^q + v x.
\end{equation*}
Assume $\xi$ is a primitive element of $\f_{q^2}$ 
and $1 \leq k \leq q+1$ such that $b = \xi^{(q-1)k} a^{q}$.
If $1 \le m \le q$, then $f(x)$ is \mfqtwo 
\ifa $m$ divides~$q$ and 
\begin{equation*}\label{eq:g_gen}
  g(x) 
  :=  A \xi^{k} h(\xi^{-k} x + c)^q  
    + B \xi^{k} h(\xi^{-k} x + c)  
    + (u^{q+1} - v^{q+1})x \in \fqx
\end{equation*}
is \mfq, where $A = b u - a v$ and $B = a u^{q} - b v^{q}$.
\end{theorem}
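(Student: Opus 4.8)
The plan is to deduce the theorem from the generalized AGW criterion \cref{constr2}, applied to a commutative square built out of the two trace functions singled out just before the statement. Concretely, I would take $A = \bar{A} = \fqtwo$, $S = \bar{S} = \fq$, let $f$ denote the evaluation map of $f(x)$, put $\bar{f} := g$, and choose the vertical maps to be
\[
  \lambda(x) = \xi^{k}\bigl(a x^{q} + b x\bigr) = \tr(\xi^{qk} a^{q} x), \qquad \bar{\lambda}(y) = \xi^{k}\bigl(A y^{q} + B y\bigr) = \tr(\xi^{k} B y),
\]
where the second equalities use $b = \xi^{(q-1)k} a^{q}$ and \cref{lem:abAB}(4), respectively. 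Since $a \neq 0$ and $B \neq 0$ by \cref{lem:abAB2}, both $\lambda$ and $\bar{\lambda}$ are surjective $\fq$-linear maps $\fqtwo \to \fq$, so each of their fibers has exactly $q$ elements and $\ker\lambda = \{ z \in \fqtwo : a z^{q} + b z = 0\}$.

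The first main step is to verify the commutativity $\bar{\lambda} \circ f = g \circ \lambda$. On a fiber $\lambda^{-1}(s)$ one has $a x^{q} + b x = \xi^{-k} s$, hence $h(a x^{q} + b x + c) = h(\xi^{-k} s + c)$ is constant there, and additivity of $\bar{\lambda}$ gives
\[
  \bar{\lambda}\bigl(f(x)\bigr) = A \xi^{k} h(\xi^{-k} s + c)^{q} + B \xi^{k} h(\xi^{-k} s + c) + \bar{\lambda}\bigl(u x^{q} + v x\bigr).
\]
The identity that closes the diagram is $\bar{\lambda}(u x^{q} + v x) = (u^{q+1} - v^{q+1})\,\lambda(x)$; it follows by expanding $(u x^{q}+vx)^{q} = u^{q} x + v^{q} x^{q}$ and invoking the two elementary identities $A u^{q} + B v = b(u^{q+1}-v^{q+1})$ and $A v^{q} + B u = a(u^{q+1}-v^{q+1})$, which drop out of $A = bu - av$ and $B = au^{q} - bv^{q}$. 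Thus $\bar{\lambda}(f(x)) = g(\lambda(x))$ for all $x$; in particular $g$ maps $\fq$ into $\fq$ (and, as $A \xi^{k} = (B \xi^{k})^{q}$ by \cref{lem:abAB}(4), reducing $g$ modulo $x^{q}-x$ yields an element of $\fqx$), so that asking whether $g$ is \mfq makes sense.

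The second main step is to check the remaining hypotheses of \cref{constr2} with $m_{1} = 1$. Surjectivity of $\lambda$ is already noted, and $\#\lambda^{-1}(s) = q = 1 \cdot \#\bar{\lambda}^{-1}(\bar{f}(s))$ since every fiber of $\bar{\lambda}$ also has $q$ elements. Finally, $f$ is $1$-to-$1$ on each $\lambda^{-1}(s)$: if $x_{1}, x_{2} \in \lambda^{-1}(s)$ with $f(x_{1}) = f(x_{2})$, then, the $h$-term being constant, $u z^{q} + v z = 0$ for $z := x_{1} - x_{2} \in \ker\lambda$; but $z \in \ker\lambda$ forces $a z^{q} = -bz$, whence $u z^{q} + v z = -(A/a)\,z$, which is nonzero unless $z = 0$ because $A \neq 0$. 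Hence \cref{constr2} applies and gives, for $1 \le m \le q$: $f$ is \mfqtwo \ifa $g$ is \mfq and $\sum_{s \in E_{g}(\fq)} \# \lambda^{-1}(s) = q^{2} \bmod m$, that is (each fiber having $q$ elements), $q\,\#E_{g}(\fq) = q^{2} \bmod m$.

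It remains to show that, for $1 \le m \le q$ and $g$ being \mfq, the last equation is equivalent to $m \mid q$. If $m \mid q$, then $q^{2} \bmod m = 0$ and $\#E_{g}(\fq) = q \bmod m = 0$, so it holds. Conversely, if $f$ is \mfqtwo, then $q\,\#E_{g}(\fq) = q^{2} \bmod m < m \le q$; were $\#E_{g}(\fq) \ge 1$ we would get $q\,\#E_{g}(\fq) \ge q \ge m$, a contradiction, so $\#E_{g}(\fq) = 0$, i.e.\ $m \mid q$. This yields the stated equivalence. The main difficulty here is not any single deep step but the bookkeeping: fixing the precise normalization of $\lambda$ and $\bar{\lambda}$ so that exactly the norm $u^{q+1}-v^{q+1}$ (and not some twist of it) appears as the linear coefficient of $g$; noticing that the fiberwise multiplicity $m_{1}$ equals $1$, so that the many-to-one behavior of $f$ is inherited wholesale from $g$; and handling the exceptional-set condition of \cref{constr2} carefully enough to see that $m \mid q$ is precisely what forces $E_{f}(\fqtwo)$ and $E_{g}(\fq)$ to vanish together.
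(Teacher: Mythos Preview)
Your proposal is correct and follows essentially the same route as the paper: the same commutative square with $\lambda(x)=\xi^{k}(ax^{q}+bx)$ and $\bar\lambda(y)=\xi^{k}(Ay^{q}+By)$, the same identity $\bar\lambda(ux^{q}+vx)=(u^{q+1}-v^{q+1})\lambda(x)$ to close the diagram, the same fiberwise injectivity argument via $A\neq 0$, and the same appeal to \cref{constr2} with $m_{1}=1$. Your final step showing that the exceptional-set condition forces $m\mid q$ is in fact slightly cleaner than the paper's: you observe directly that $q\cdot\#E_{g}(\fq)=q^{2}\bmod m<m\le q$ rules out $\#E_{g}(\fq)\ge 1$, whereas the paper writes $q=\ell m+t$ and runs a short case split on whether $t^{2}<m$, $t^{2}=m$, or $t^{2}>m$.
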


\begin{proof}
\cref{lem:abAB} implies $A \xi^{k} = (B \xi^{k})^q$.
Clearly, $u^{q+1} - v^{q+1} \in \fq$. Thus $g(x) \in \fqx$.
 
In order to apply \cref{constr2}, 
we need to construct a commutative diagram. Let
\[
  \lambda(x) = \xi^{k} (ax^q + bx) 
  \qtq{and}
  \bar{\lambda}(x) = \xi^{k} (Ax^q + Bx),
\]
where $a b A B \neq 0$ by \cref{lem:abAB2}. 
Because $b = \xi^{(q-1)k}a^q$,  
$\lambda(x) = \xi^{qk} a^q x + (\xi^{qk} a^q x)^q$ 
is a trace function from $\fqtwo$ onto $\fq$ as explained earlier. 
Hence $\lambda(\fqtwo) = \fq$.
Similarly, $\bar{\lambda}(\fqtwo) = \fq$.
Next we prove that $\bar{\lambda}\circ f = g \circ \lambda$, 
i.e., the following diagram is commutative:
  \begin{equation*}
    \xymatrix{
    \f_{q^2} \ar[r]^{f} \ar[d]_{\lambda}   
    & \f_{q^2} \ar[d]^{\bar{\lambda}} \\
    \fq \ar[r]^{g}       & \fq.
    }
  \end{equation*}
Let $\theta(x) = ax^q + bx$, 
$\bar{\theta}(x) = Ax^q + Bx$,
and $\psi(x)= u x^q +v x$. Then
\begin{equation*}\label{eq_bar_phi2}
\begin{split}
\bar{\theta} \circ \psi &= A \psi(x)^q +B\psi(x) \\
&= (bu -av)(u x^q +v x)^q +(au^q -bv^q)(u x^q +v x) \\
&= (bu -av)(u^qx +v^qx^q) +(au^q -bv^q)(u x^q +v x)  \\
&= (u^{q+1}-v^{q+1})ax^q +(u^{q+1}-v^{q+1})bx \\
&= (u^{q+1}-v^{q+1})(ax^q +bx)\\
&= (u^{q+1}-v^{q+1})\theta(x).
\end{split}
\end{equation*}
Since $f(x) = h(\theta(x) + c) + \psi(x)$, we have
\[ \begin{split}
\bar{\theta} \circ f
&=   A \big( h(\theta + c) + \psi \big)^q 
   + B \big( h(\theta + c) + \psi \big) \\
&= A h(\theta + c)^q + B h(\theta + c) + A \psi^q + B \psi  \\
&= A h(\theta + c)^q + B h(\theta + c) + (u^{q+1}-v^{q+1}) \theta  \\
&= \big( A h(x + c)^q + B h(x + c) 
    + (u^{q+1} - v^{q+1})x \big) \circ \theta\\
&= \bar{f} \circ \theta,
\end{split} \]
where $\bar{f}(x) = Ah(x+c)^q + Bh(x+c) + (u^{q+1}-v^{q+1})x$. 
Therefore,
\begin{equation}\label{eq:Lf=gL}
\bar{\lambda} \circ f
= \xi^{k} x \circ \bar{\theta} \circ f 
= \xi^{k} x \circ \bar{f} \circ \theta
= \xi^{k} x \circ \bar{f}(\xi^{-k} x) \circ \xi^{k} \theta 
= g \circ \lambda.
\end{equation}

Since $\lambda$ and $\bar{\lambda}$ are trace functions,  
we get $\#\lambda^{-1}(s)  = q = \#\bar{\lambda}^{-1}(g(s))$.
Note that $a A \neq 0$ and $f(x)$ can be rewritten as 
$f(x) = H(\lambda(x)) - a^{-1} A x$, where 
$H(x) = h(\xi^{-k} x + c) + \xi^{-k} a^{-1} u x$.
For any $x \in \lambda^{-1}(\alpha)$, 
i.e., $\lambda(x) = \alpha$, we have
$f(x) = H(\alpha) - a^{-1} A x$.
Thus $f(x)$ is \onetoone on $\lambda^{-1}(\alpha)$ 
for each $\alpha \in \fq$.
By \cref{constr2}, for $1 \le m \le q$, 
$f(x)$ is \mfqtwo \ifa $g(x)$ is \mfq and 
\begin{equation}\label{eq:qt=t2}
  q \cdot \# E_{g}(\fq) = \# \fqtwo  \bmod{m}.
\end{equation}

Let $q = \ell m + t$ with $0 \le t < m \le q$. 
Then $q^2 = (\ell m)^2 + 2 t \ell m + t^2$. 
Thus \cref{eq:qt=t2} is equivalent to 
$q t = t^2 \bmod{m}$.
Assume $t \ge 1$. If $t^2 < m$, then $q = t$, contrary to $t < q$.
If $t^2 = m$, then $q t = 0$, contrary to $t \ge 1$ and $q \ge 2$. 
If $t^2 > m$, then $q t = t^2 \bmod{m} < m$, contrary to $q \ge m$.
Thus \cref{eq:qt=t2} holds \ifa $t = 0$, i.e., $m \mid q$.
Then the result follows from \cref{constr2}. 
\end{proof}

This theorem reduces the problem whether $f(x)$ is \mfield{m}{\fqtwo} 
to that whether $g(x)$ is \mtoone on the subfield $\fq$.
Suppose $\sigma(x)$ is \mset{1}{\fq}. 
Then the composition $\sigma(g(x))$ 
is \mset{m}{\fq} \ifa $g(x)$ is \mset{m}{\fq}. 
Hence, if we can determine the \mtoone property of~$g(x)$ 
or $\sigma(g(x))$ on $\fq$, 
we can obtain the \mtoone property of $f(x)$ on $\fqtwo$.
To demonstrate our main result, we choose special classes of $h(x)$ 
such that $g(x)$ or $\sigma(g(x))$ is a polynomial of degree $\leq 4$
or a linearized binomial, whose \mtoone properties are fully 
classified in \cref{Sec:mto1}. 
For simplicity, we only consider $h(x) = x^{r}$ 
with different types of the exponent~$r$ in the forthcoming sections.
In this case, the above theorem is reduced to the following form.

\begin{theorem}\label{thm:main-mto1}
Let $a, b, c, u, v \in \f_{q^2}$ satisfy $a^{q+1} = b^{q+1}$ 
and $a v \ne b u$. Let $r \in \n$ and
\begin{equation*}\label{eq:f}
  f_r(x) = (a x^q + b x + c)^{r} + u x^{q} + v x.
\end{equation*}
Assume $\xi$ is a primitive element of\, $\f_{q^2}$ 
and $1 \leq k \leq q+1$ such that $b = \xi^{(q-1)k} a^{q}$.
If $1 \le m \le q$, then $f_{r}(x)$ is \mfqtwo 
\ifa $m$ divides~$q$ and 
\begin{equation*}\label{eq:g}
  g_r(x) :=  A \xi^{k} (\xi^{-k} x + c)^{q r}
           + B \xi^{k} (\xi^{-k} x + c)^{r}
           + (u^{q+1} - v^{q+1})x \in \fqx
\end{equation*}
is \mfq, where $A = bu - av$ and $B = a u^{q} - b v^{q}$.
\end{theorem}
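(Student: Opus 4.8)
The plan is to obtain this statement as an immediate specialization of \cref{thm:main_gen}. Taking $h(x) = x^{r}$ with $r \in \n$ (so that $h(x) \in \fqtwox$, as required there), we have
\[
  f(x) = h(ax^{q} + bx + c) + ux^{q} + vx = (ax^{q} + bx + c)^{r} + ux^{q} + vx = f_{r}(x),
\]
so all hypotheses of \cref{thm:main_gen} hold verbatim, and it only remains to check that the associated polynomial $g(x)$ appearing there coincides with the $g_{r}(x)$ defined in the present statement.

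To see this, first note that for any $x \in \fqtwo$ the element $\xi^{-k} x + c$ lies in $\fqtwo$, since $\xi, c \in \fqtwo$; hence $h(\xi^{-k} x + c) = (\xi^{-k} x + c)^{r} \in \fqtwo$. Because the Frobenius map $y \mapsto y^{q}$ is a ring endomorphism of $\fqtwo$, it is simultaneously additive and multiplicative, so
\[
  h(\xi^{-k} x + c)^{q} = \big((\xi^{-k} x + c)^{r}\big)^{q} = (\xi^{-k} x + c)^{qr}.
\]
Substituting these two identities into
\[
  g(x) = A\xi^{k}\, h(\xi^{-k} x + c)^{q} + B\xi^{k}\, h(\xi^{-k} x + c) + (u^{q+1} - v^{q+1})x
\]
yields exactly $g_{r}(x) = A\xi^{k}(\xi^{-k} x + c)^{qr} + B\xi^{k}(\xi^{-k} x + c)^{r} + (u^{q+1} - v^{q+1})x$. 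That $g_{r}(x) \in \fqx$ is already part of the conclusion of \cref{thm:main_gen}; alternatively it follows directly, since $A\xi^{k} = (B\xi^{k})^{q}$ by \cref{lem:abAB}(4) makes the first two summands a trace over $\fq$, while $u^{q+1} - v^{q+1} \in \fq$.

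With $g(x) = g_{r}(x)$ established, the conclusion of \cref{thm:main_gen} reads precisely: for $1 \le m \le q$, $f_{r}(x)$ is \mfqtwo \ifa $m$ divides $q$ and $g_{r}(x)$ is \mfq. This is the desired statement, so nothing further is needed. There is no substantial obstacle here; the only point deserving (minor) care is the computation of the $q$-th power, where one must use that raising to the $q$-th power commutes with both addition and multiplication on $\fqtwo$, which is exactly what turns the exponent on the first summand into $qr$ rather than anything more intricate.
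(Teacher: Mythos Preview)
Your proposal is correct and matches the paper's approach exactly: the paper does not give a separate proof of \cref{thm:main-mto1} but simply presents it as the specialization of \cref{thm:main_gen} obtained by taking $h(x)=x^{r}$. Your verification that $g(x)$ from \cref{thm:main_gen} coincides with $g_{r}(x)$ is precisely the (trivial) check the paper leaves implicit.
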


This result reduces the problem whether $f_r(x)$ is \mfield{m}{\fqtwo} 
to that whether $g_r(x)$ is \mfield{m}{\fq}.
In the following sections, we will consider different choices  
of the parameter~$r$ such that $g_r(x)$ or $\sigma(g_r(x))$ 
behaves like a polynomial of degree~$\le 4$ or a linearized binomial.
For ease of notations, we always let
$A = bu - av$ and $B = a u^{q} - b v^{q}$ 
in the rest of the paper.

\section{Degree 2}\label{sec:deg2}

In this section, we study the case that $g_r(x)$ or $g_r^2(x)$ 
defined in \cref{thm:main-mto1}
behaves like a polynomial of degree~$2$. 
This case will happen if 
$r=2$, $2q$, $q+1$, $(q^2+q)/2$, or $q+t+1$.

\begin{theorem}\label{thm:r=2}
  Let $a, b, c, u, v \in \f_{q^2}$ satisfy 
  $a^{q+1} = b^{q+1}$ and $a v \ne b u$. Let  
  \[
    \alpha = a^q B^q + b B  \qtq{and}  
    \beta  = 2 B c + 2 B^q c^q + u^{q+1} - v^{q+1}.
  \]  
If $r = 2$ and $1 \le m \le q$, then $f_{r}(x)$ is \mfqtwo \ifa 
one of the following holds:
\begin{enumerate}[\upshape(1)]
  \item $m = 1$, $\alpha = 0$, and $\beta \ne 0$; 
  \item $m = 1$, $\alpha \ne 0$, $\beta = 0$, and $q$ is even;
  \item $m = 2$, $\alpha \ne 0$, $\beta \ne 0$, and $q$ is even;
  \item $m = q$, $\alpha = 0$, and $\beta = 0$.
\end{enumerate}
\end{theorem}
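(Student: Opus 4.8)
The plan is to specialize \cref{thm:main-mto1} to $r=2$ and then recognize the associated polynomial $g_2(x)$ as a quadratic over $\fq$ whose two relevant coefficients are governed precisely by $\alpha$ and $\beta$, at which point the classification of \mtoone quadratics in \cref{deg2mto1} finishes the argument. By \cref{thm:main-mto1}, for $1\le m\le q$ the polynomial $f_2(x)$ is \mfqtwo \ifa $m\mid q$ and
\[
  g_2(x) = A\xi^{k}(\xi^{-k}x+c)^{2q} + B\xi^{k}(\xi^{-k}x+c)^{2} + (u^{q+1}-v^{q+1})x
\]
is \mfq, so everything reduces to understanding $g_2$ on the subfield.

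First I would expand $g_2(x)$ into the form $\alpha_0x^2+\beta_0x+\gamma_0$. For $x\in\fq$ one has $(\xi^{-k}x+c)^{2q}=(\xi^{-qk}x+c^q)^2$, and \cref{lem:abAB}(4) gives $A\xi^{k}=(B\xi^{k})^q$; writing $w=B\xi^{k}$ and expanding both squares, a direct collection of terms (using $w\xi^{-k}=B$, $w^{q}\xi^{-qk}=B^{q}$, and $b=\xi^{(q-1)k}a^{q}$) yields
\begin{align*}
  \alpha_0 &= B^{q}\xi^{-qk}+B\xi^{-k} = (a^{q}\xi^{qk})^{-1}\big(a^{q}B^{q}+bB\big) = (a^{q}\xi^{qk})^{-1}\alpha, \\
  \beta_0  &= 2B^{q}c^{q}+2Bc+u^{q+1}-v^{q+1} = \beta,
\end{align*}
with the constant term $\gamma_0$ irrelevant to the \mtoone property. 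Since $a\ne 0$ by \cref{lem:abAB2}(2), the scalar $a^{q}\xi^{qk}$ is nonzero, so $\alpha_0=0$ exactly when $\alpha=0$; and that $\alpha_0,\beta\in\fq$ is already guaranteed by \cref{thm:main-mto1}.

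Finally I would feed $g_2(x)=\alpha_0x^2+\beta x+\gamma_0$ into \cref{deg2mto1}: for $1\le m\le q$, $g_2$ is \mfq only when $m\in\{1,2,q\}$, namely $m=1$ with ($\alpha_0=0$, $\beta\ne 0$) or ($\alpha_0\ne 0$, $\beta=0$, $q$ even); $m=2$ with ($\alpha_0\ne 0$, $\beta\ne 0$, $q$ even) or ($\alpha_0\ne 0$, $q$ odd); and $m=q$ with $\alpha_0=\beta=0$. Translating through $\alpha_0=0\Leftrightarrow\alpha=0$ and imposing the extra condition $m\mid q$ from \cref{thm:main-mto1} deletes the ``$q$ odd, $\alpha_0\ne 0$'' branch (there $g_2$ is $2$-to-$1$ but $2\nmid q$) and excludes every $m\notin\{1,2,q\}$, leaving exactly the four listed cases. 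The computation is entirely routine; the only place demanding care is the bookkeeping in the expansion of $g_2$ --- confirming that its leading coefficient differs from $\alpha$ only by the nonzero factor $a^{q}\xi^{qk}$ while its linear coefficient is $\beta$ on the nose --- together with the final check that intersecting the quadratic classification with $m\mid q$ reproduces cases (1)--(4) and leaves no gaps.
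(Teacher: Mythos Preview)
Your proposal is correct and follows essentially the same approach as the paper: both expand $g_2(x)$ to a quadratic over $\fq$ with leading coefficient a nonzero multiple of $\alpha$ (the paper writes it as $\xi^{-k}b^{-1}\alpha$, you as $(a^{q}\xi^{qk})^{-1}\alpha$, which agree since $a^{q}\xi^{qk}=b\xi^{k}$) and linear coefficient exactly $\beta$, then invoke \cref{thm:main-mto1} together with \cref{deg2mto1}. Your explicit remark that the ``$q$ odd, $\alpha_0\ne 0$'' branch is killed by the $m\mid q$ condition is a nice bit of bookkeeping that the paper leaves implicit.
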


\begin{proof}
Clearly, $B \xi^{k} (\xi^{-k} x + c)^2 
= \xi^{-k} B x^2 + 2Bc x + \xi^{k} B c^2$.
By \cref{lem:abAB,lem:abAB2}, we get
$A \xi^{k} = (B \xi^{k})^q$, $abAB \ne 0$,
and $\xi^{(1-q)k} = a^q/b$. 
Then for any $x \in \fq$,  
\begin{equation}\label{eq:Axi2q}
\begin{aligned}
  A \xi^{k} (\xi^{-k} x + c)^{2q}
  &= (B \xi^{k} (\xi^{-k} x + c)^2)^q \\
  &= \xi^{-qk} B^q x^2 + 2 B^q c^q x + (\xi^{k} B c^2)^q \\
  &= \xi^{-k}\xi^{(1-q)k} B^q x^2 + 2 B^q c^q x + (\xi^{k} B c^2)^q\\
  &= \xi^{-k} (a^q/b) B^q x^2 + 2 B^q c^q x + (\xi^{k} B c^2)^q.
\end{aligned}
\end{equation}
Thus for any $x \in \fq$ we have 
\[
  g_r(x) = \xi^{-k}b^{-1} \alpha x^2 + \beta x 
     + \xi^{k} Bc^2 + (\xi^{k} Bc^2)^q  \in \fqx.
\]
Then the result follows from
\cref{thm:main-mto1} and \cref{deg2mto1}.
\end{proof}


According to the relationship between the exponents $r$ and $q r$, 
we obtain the next result by switching the roles of $A$ and $B$ 
in the proof of \cref{thm:r=2}.

\begin{theorem}\label{thm:r=2q}
Let $a, b, c, u, v \in \f_{q^2}$ satisfy $a^{q+1} = b^{q+1}$ and $a v \ne b u$. Let  
\[ 
  \alpha = a^q A^q + b A  \qtq{and} 
  \beta  = 2 A c + 2 A^q c^q + u^{q+1} - v^{q+1}.
\]
If $r = 2q$ and $1 \le m \le q$, then $f_{r}(x)$ is \mtoone 
\ifa one of the four cases in \cref{thm:r=2} holds.
\end{theorem}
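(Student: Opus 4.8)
The plan is to mirror the proof of \cref{thm:r=2} with the roles of $A$ and $B$ interchanged, exploiting the symmetry coming from the fact that $r = 2q$ differs from $r = 2$ precisely by composing the exponent with the Frobenius $x \mapsto x^q$. Concretely, in \cref{thm:main-mto1} the associated polynomial is
\[
  g_r(x) = A \xi^{k} (\xi^{-k} x + c)^{qr} + B \xi^{k} (\xi^{-k} x + c)^{r} + (u^{q+1} - v^{q+1}) x,
\]
and when $r = 2q$ we have $qr = 2q^2 \equiv 2 \pmod{q^2-1}$, so on $\fqtwo$ (and in particular for the values $\xi^{-k} x + c$ we care about) the exponents $r$ and $qr$ swap: $(\xi^{-k}x+c)^{qr} = (\xi^{-k}x+c)^{2}$ and $(\xi^{-k}x+c)^{r} = (\xi^{-k}x+c)^{2q}$. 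Hence $g_{2q}(x)$ is obtained from the expression for $g_2(x)$ by swapping the coefficients $A$ and $B$.

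First I would make the substitution $qr \equiv 2$ explicit and write $g_{2q}(x) = A\xi^k(\xi^{-k}x+c)^2 + B\xi^k(\xi^{-k}x+c)^{2q} + (u^{q+1}-v^{q+1})x$. Then, exactly as in \cref{eq:Axi2q}, I would expand $A\xi^k(\xi^{-k}x+c)^2 = \xi^{-k}Ax^2 + 2Acx + \xi^k A c^2$ and, using $B\xi^k = (A\xi^k)^q$, $abAB \neq 0$, and $\xi^{(1-q)k} = a^q/b$ from \cref{lem:abAB,lem:abAB2}, compute
\[
  B\xi^k(\xi^{-k}x+c)^{2q} = \bigl(A\xi^k(\xi^{-k}x+c)^2\bigr)^q = \xi^{-k}(a^q/b)A^q x^2 + 2A^q c^q x + (\xi^k A c^2)^q.
\]
Adding these and collecting terms gives, for all $x \in \fq$,
\[
  g_{2q}(x) = \xi^{-k} b^{-1}\bigl(a^q A^q + bA\bigr) x^2 + \bigl(2Ac + 2A^q c^q + u^{q+1}-v^{q+1}\bigr) x + \xi^k A c^2 + (\xi^k A c^2)^q,
\]
i.e.\ $g_{2q}(x) = \xi^{-k}b^{-1}\alpha\, x^2 + \beta x + \gamma$ with $\alpha = a^q A^q + bA$, $\beta = 2Ac + 2A^q c^q + u^{q+1}-v^{q+1}$, and $\gamma = \xi^k A c^2 + (\xi^k A c^2)^q \in \fq$. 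Note $\xi^{-k}b^{-1}\alpha \in \fq$ since $g_{2q} \in \fqx$ by \cref{thm:main-mto1}, and this leading coefficient vanishes exactly when $\alpha = 0$. Applying \cref{thm:main-mto1} together with the degree-$2$ classification \cref{deg2mto1} then yields the four listed cases verbatim, completing the proof.

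The only point requiring any care is the exponent reduction $qr \bmod (q^2-1)$: one must note that $\xi^{-k}x + c$ ranges over a subset of $\fqtwo$, so that for such arguments $(\xi^{-k}x+c)^{2q^2} = (\xi^{-k}x+c)^{2}$ holds (the zero element is handled trivially since both sides are $0$). Once this identification is in place, the computation is line-for-line the one in \cref{thm:r=2} with $A \leftrightarrow B$, so I do not anticipate a genuine obstacle — the statement "switching the roles of $A$ and $B$" in the theorem's preamble is exactly what the argument formalizes.
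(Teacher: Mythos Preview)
Your proposal is correct and follows exactly the approach the paper indicates: the paper's own justification is the single sentence ``we obtain the next result by switching the roles of $A$ and $B$ in the proof of \cref{thm:r=2}'', and you have spelled out precisely this swap via the exponent reduction $qr = 2q^2 \equiv 2 \pmod{q^2-1}$ and the identity $B\xi^k = (A\xi^k)^q$. There is nothing to add.
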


\cref{thm:r=2} generalizes \cite[Theorem~4.6]{XuLC22}, 
and \cref{thm:r=2q} generalizes \cite[Proposition~1]{LLi18} 
and \cite[Proposition~3]{LiCao2389},  
where $m = a = -b = v = 1$ and $u \in  \{0, 1\}$.

\begin{theorem}\label{thm:r=q+1}
Let $a, b, c, u, v \in \f_{q^2}$ satisfy
$a^{q+1} = b^{q+1}$ and $a v \ne b u$. Let  
\[ 
  \alpha = A+B   \qtq{and} 
  \beta = \alpha (a^q c + b c^q) + b (u^{q+1} - v^{q+1}).
\]
If $r = q + 1$ and $1 \le m \le q$, then $f_{r}(x)$ 
is \mfqtwo \ifa one of the following holds:
\begin{enumerate}[\upshape(1)]
  \item $m = 1$, $\alpha = 0$, and $\beta \ne 0$;
  \item $m = 1$, $\alpha \ne 0$, $\beta = 0$, and $q$ is even;
  \item $m = 2$, $\alpha \ne 0$, $\beta \ne 0$, and $q$ is even;
  \item $m = q$, $\alpha = 0$, and $\beta = 0$.
\end{enumerate}
\end{theorem}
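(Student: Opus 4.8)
Proof proposal. The plan is to apply \cref{thm:main-mto1} with $r=q+1$ and then show that, on the subfield $\fq$, the associated polynomial $g_r(x)$ reduces to a polynomial of degree at most $2$ whose leading and linear coefficients are governed by $\alpha$ and $\beta$; the classification in \cref{deg2mto1}, combined with the divisibility condition $m\mid q$, then yields exactly the four listed cases.

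First I would invoke \cref{thm:main-mto1}: with $r=q+1$, $f_{r}(x)$ is \mfqtwo \ifa $m\mid q$ and
\[
  g_r(x)=A\xi^{k}(\xi^{-k}x+c)^{q(q+1)}+B\xi^{k}(\xi^{-k}x+c)^{q+1}+(u^{q+1}-v^{q+1})x
\]
is \mfq. The key observation is that for $x\in\fq$ the element $y:=\xi^{-k}x+c$ lies in $\fqtwo$, so $y^{q+1}$ is the norm of $y$ from $\fqtwo$ to $\fq$; in particular $y^{q+1}\in\fq$, hence $y^{q(q+1)}=y^{q^{2}+q}=y^{q+1}$. Therefore, for every $x\in\fq$,
\[
  g_r(x)=(A+B)\,\xi^{k}y^{q+1}+(u^{q+1}-v^{q+1})x=\alpha\,\xi^{k}y^{q+1}+(u^{q+1}-v^{q+1})x .
\]

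Next I would expand $y^{q+1}=(\xi^{-k}x+c)(\xi^{-qk}x+c^{q})$ using $x^{q}=x$, multiply through by $\xi^{k}$, and collect powers of $x$, so that $g_r(x)$ becomes a polynomial in $x$ of degree at most $2$ with coefficients in $\fq$ (membership in $\fq$ is guaranteed by \cref{thm:main-mto1}; alternatively $\alpha\xi^{k}=\tr(B\xi^{k})\in\fq$ by \cref{lem:abAB}). Its leading coefficient equals $\alpha\xi^{-qk}$, which is zero \ifa $\alpha=0$. Multiplying its coefficient of $x$ by $b\neq 0$ and using $b\xi^{(1-q)k}=a^{q}$ — which follows from $b=\xi^{(q-1)k}a^{q}$ — turns that coefficient into $\alpha(a^{q}c+bc^{q})+b(u^{q+1}-v^{q+1})=\beta$, so the coefficient of $x$ is zero \ifa $\beta=0$. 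Now I would apply \cref{deg2mto1} to this quadratic together with the requirement $m\mid q$: cases (1), (2), (3), (5) of \cref{deg2mto1} translate into cases (1)--(4) of the statement respectively, while case (4) of \cref{deg2mto1} ($m=2$ with $q$ odd) cannot occur here because then $m\nmid q$. Note that the divisibility $m\mid q$ is automatically consistent with the surviving cases: it is vacuous for $m=1$ and $m=q$, and it forces $q$ even precisely in the $m=2$ case, matching the stated condition.

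I expect the only delicate point to be the exponent bookkeeping modulo $q^{2}-1$ — specifically the identity $y^{q(q+1)}=y^{q+1}$ (i.e., that $y^{q+1}$ is a norm, hence fixed by Frobenius) and the reduction $\xi^{-qk}=\xi^{(1-q)k}\xi^{-k}\in\fq$ — after which the remainder is the routine substitution and the appeal to \cref{deg2mto1}.
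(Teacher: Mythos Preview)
Your proposal is correct and follows essentially the same approach as the paper: both arguments note that $(\xi^{-k}x+c)^{q+1}\in\fq$ for $x\in\fq$ so that the two summands in $g_r$ collapse to $\alpha\,\xi^{k}(\xi^{-k}x+c)^{q+1}$, expand this as a quadratic in $x$, identify the quadratic and linear coefficients with $\alpha$ and $b^{-1}\beta$ via the relation $b\,\xi^{(1-q)k}=a^{q}$, and then invoke \cref{deg2mto1} together with $m\mid q$. Your explicit remark that case~(4) of \cref{deg2mto1} is eliminated by the divisibility constraint is a detail the paper leaves implicit.
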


\begin{proof}
Let $Q(x) = (\xi^{-k} x + c)^{q+1}$.
Recall that $\xi^{(1-q)k} = a^q b^{-1}$. 
Then for any $x \in \fq$,  
\begin{equation}\label{eq:qx}
  \begin{aligned}
      Q(x)
      & = (\xi^{-k} x + c)^q (\xi^{-k}x + c) \\
      & = (\xi^{-qk} x + c^q) (\xi^{-k}x + c) \\
      & = \xi^{-(q+1)k} x^2 + ( \xi^{-qk} c + \xi^{-k} c^q ) x + c^{q+1} \\
      & = \xi^{-2k} \xi^{(1-q)k}  x^2 + \xi^{-k} (c \xi^{(1-q)k} + c^q) x + c^{q+1} \\
      & = \xi^{-2k} a^q b^{-1}  x^2 + \xi^{-k} (a^q b^{-1} c +  c^q) x + c^{q+1} \in \fqx, 
  \end{aligned}
\end{equation}
and so $(Q(x))^q = Q(x)$. Thus, for any $x \in \fq$, we have 
\begin{align*}
  g_r(x) 
  & = (A + B) \xi^{k} Q(x) + (u^{q+1} - v^{q+1})x \\
  & = \alpha \xi^{-k} a^q b^{-1} x^2 +  b^{-1} \beta x + \gamma,
\end{align*}
where $\gamma = \xi^{k} (A+B) c^{q+1}$.
Then the theorem follows from \cref{thm:main-mto1} and \cref{deg2mto1}.  
\end{proof}

\cref{thm:r=q+1} generalizes \cite[Proposition~4.13]{WuY24} 
where $m = a = 1$, $u = 0$, and $q$ is even.

\begin{theorem}\label{thm:q2+qdiv2}
Let $q$ be even and $a, b, c, u, v \in \f_{q^2}$
satisfy $a^{q+1} = b^{q+1}$ and $a v \ne b u$. Let 
\[ 
  \alpha = a^q (A+B)^2 + b(u^{q+1} + v^{q+1})^2 
  \qtq{and} 
  \beta = (A+B)^2 (a^q c + b c^q).
\]
If $r = (q^2+q)/2$ and $1 \le m \le q$, 
then $f_{r}(x)$ is \mfqtwo \ifa one of the following holds:
\begin{enumerate}[\upshape(1)]
  \item $m = 1$, $\alpha = 0$, and $\beta \neq 0$;
  \item $m = 1$, $\alpha \neq 0$, and $\beta = 0$;
  \item $m = 2$, $\alpha \ne 0$, and $\beta \ne 0$;
  \item $m = q$, $\alpha = 0$, and $\beta = 0$.
\end{enumerate}
\end{theorem}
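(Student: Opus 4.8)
The plan is to follow the template of \cref{thm:r=2,thm:r=q+1}: apply \cref{thm:main-mto1} to transfer the question from $f_r(x)$ on $\fqtwo$ to the associated polynomial $g_r(x)$ on $\fq$, show that a $1$-to-$1$ transform of $g_r(x)$ restricts on $\fq$ to a polynomial of degree~$2$, and then invoke \cref{deg2mto1}. Since $q$ is even, $x\mapsto x^2$ is the Frobenius automorphism of $\fq$, hence $1$-to-$1$ on $\fq$; by the discussion following \cref{thm:main_gen}, $g_r(x)$ is \mfq if and only if $g_r(x)^2$ is \mfq, so it suffices to identify the map $x\mapsto g_r(x)^2$ on $\fq$.

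The main step is to square $g_r$. Write $y = \xi^{-k}x + c$, so that $g_r(x) = A\xi^{k} y^{qr} + B\xi^{k} y^{r} + (u^{q+1}-v^{q+1})x$. Because $2r = q^2 + q$ and $y^{q^2} = y$, we have $y^{2r} = y^{q^2+q} = y^{q+1}$; moreover $y^{q+1}\in\fq$, so $y^{2qr} = (y^{q^2+q})^q = (y^{q+1})^q = y^{q+1}$ as well. Hence, in characteristic $2$, using $(A\xi^{k})^2 + (B\xi^{k})^2 = \big((A+B)\xi^{k}\big)^2$,
\begin{align*}
  g_r(x)^2 &= (A\xi^{k})^2\, y^{2qr} + (B\xi^{k})^2\, y^{2r} + (u^{q+1}-v^{q+1})^2 x^2 \\
           &= \big((A+B)\xi^{k}\big)^2\, y^{q+1} + (u^{q+1}-v^{q+1})^2 x^2.
\end{align*}
Now $\xi^{k}(A+B)\in\fq$ by \cref{lem:abAB}, and for $x\in\fq$ the norm $y^{q+1} = (\xi^{-k}x+c)^{q+1}$ was already evaluated in \cref{eq:qx} as the quadratic $\xi^{-2k}a^{q}b^{-1}x^2 + \xi^{-k}(a^{q}b^{-1}c + c^q)x + c^{q+1}\in\fqx$. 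Substituting this and simplifying via $\big((A+B)\xi^{k}\big)^2\xi^{-2k} = (A+B)^2$ and $u^{q+1}-v^{q+1} = u^{q+1}+v^{q+1}$ in characteristic $2$, one obtains that on $\fq$ the function $g_r(x)^2$ agrees with the quadratic polynomial $b^{-1}\alpha\, x^2 + \xi^{k}b^{-1}\beta\, x + \gamma$, where $\alpha,\beta$ are as in the statement and $\gamma\in\fq$. Since $b\ne 0$, the leading coefficient vanishes iff $\alpha = 0$ and the linear coefficient vanishes iff $\beta = 0$.

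It then remains only to read off the classification by applying \cref{deg2mto1} to $b^{-1}\alpha\, x^2 + \xi^{k}b^{-1}\beta\, x + \gamma$ over the even-order field $\fq$: the \mfq possibilities are exactly $m=1$ with $\alpha = 0\ne\beta$; $m=1$ with $\alpha\ne 0 = \beta$; $m=2$ with $\alpha\ne 0\ne\beta$; and $m=q$ with $\alpha=\beta=0$ (the case of \cref{deg2mto1} with $m=2$, nonzero leading coefficient, and $q$ odd cannot occur). Each of $m\in\{1,2,q\}$ divides $q$, so \cref{thm:main-mto1} turns these into precisely the four stated conditions for $f_r(x)$ to be \mfqtwo. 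I do not anticipate a genuine obstacle here: the argument is structurally identical to that of \cref{thm:r=q+1}, the only new ingredient being the use of squaring to linearise the effect of the exponent $(q^2+q)/2$. The one point requiring care is the scalar bookkeeping (the powers $\xi^{\pm k}$, the factor $a^{q}b^{-1}$ inherited from \cref{eq:qx}, and the various squares) so that the resulting leading and linear coefficients match $\alpha$ and $\beta$ up to the nonzero scalars $b^{-1}$ and $\xi^{k}b^{-1}$.
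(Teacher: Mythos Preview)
Your proposal is correct and follows essentially the same route as the paper: both square $g_r$ (using that $x\mapsto x^2$ permutes $\fq$), exploit $2r=q^2+q$ together with $y^{q^2}=y$ to reduce the exponent to $q+1$, plug in the expansion \cref{eq:qx} of $(\xi^{-k}x+c)^{q+1}$, and read off the degree-$2$ polynomial $b^{-1}\alpha\,x^2+\xi^{k}b^{-1}\beta\,x+\gamma$ before applying \cref{deg2mto1}. Your write-up is, if anything, a bit more explicit about why $\xi^{k}(A+B)\in\fq$ and why the exponent collapses, but the argument is the same.
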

\begin{proof}
Since $x^2$ permutes $\fq$, $g_r(x)$ is \mfield{m}{\fq} 
\ifa so does $g_r^2(x)$. For $x \in \fq$,  
\[
(\xi^{-k} x + c)^{2r} 
  = (\xi^{-k} x + c)^{2qr} 
  = (\xi^{-k} x + c)^{q+1}.
\]
Therefore,
\begin{align*}
  g_r^2(x) 
  & = (A^2 + B^2) \xi^{2k} (\xi^{-k} x + c)^{q+1} + (u^{q+1} + v^{q+1})^2 x^2 \\
  & = b^{-1} \alpha x^2 + \xi^{k} b^{-1} \beta x + \xi^{2k} (A+B)^2 c^{q+1} \in \fqx,
\end{align*}
where we used \cref{eq:qx} in the last identity. 
Then the result follows from \cref{deg2mto1}.
\end{proof}

\cref{thm:q2+qdiv2} generalizes 
\cite[Theorem~3.2~(1)]{WangWuL17} and \cite[Theorem~4.15]{WuY24}
where $m = a = 1$ and $u=0$. 
It also generalizes \cite[Proposition~10]{XuFZ19}
where $m =  a = b = 1$ and $c^q \ne c$.

\begin{theorem}\label{r=q+t+1}
Let $a, b, c, u, v \in \f_{q^2}$ satisfy $a^{q+1} = b^{q+1}$, 
$a v \ne b u$,  and $(a^q/b)^t A = - B$, where $t$ is a power of 
the characteristic of $\f_{q^2}$. Let  
\[
  \alpha = a^q c - b c^q \qtq{and}
  \beta  = B \alpha^t (a^q c + b c^q)  + a^{qt} b (u^{q+1} - v^{q+1}).
\]
If $r = q + t + 1$ and $1 \le m \le q$, 
then $f_{r}(x)$ is \mfqtwo \ifa one of the following holds:
\begin{enumerate}[\upshape(1)]
  \item $m = 1$, $\alpha = 0$, and $u^{q+1} \ne v^{q+1}$;
  \item $m = 1$, $\alpha \ne 0$, $\beta = 0$, and $q$ is even;  
  \item $m = 2$, $\alpha \ne 0$, $\beta \ne 0$, and $q$ is even;  
  \item $m = q$, $\alpha = 0$, and $u^{q+1} = v^{q+1}$.
\end{enumerate}
\end{theorem}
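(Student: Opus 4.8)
The plan is to invoke \cref{thm:main-mto1}, which reduces the statement to: for $1 \le m \le q$, $f_r(x)$ is \mfqtwo iff $m \mid q$ and $g_r(x)$ is \mfq, and then to show that on $\fq$ the polynomial $g_r(x)$ with $r = q+t+1$ coincides with an explicit quadratic, after which \cref{deg2mto1} finishes the job. Throughout I would use the relations $A\xi^k = (B\xi^k)^q$ from \cref{lem:abAB}, the identity $\xi^{(1-q)k} = a^q/b$, and $abB \ne 0$ from \cref{lem:abAB2}, together with the fact that $z \mapsto z^t$ is the Frobenius automorphism (so $(y+z)^t = y^t + z^t$ and $x^{qt} = x^t$ for $x \in \fq$), and the hypothesis written as $A = -(b/a^q)^t B$.

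First I would rewrite the first two terms of $g_r$ as a relative trace: since $(\xi^{-k}x+c)^{qr} = \big((\xi^{-k}x+c)^r\big)^q$ and $A\xi^k = (B\xi^k)^q$, they add up to $\tr_{q^2/q}\big(B\xi^k(\xi^{-k}x+c)^r\big)$. Then, because $r = q+t+1$, I factor $(\xi^{-k}x+c)^r = (\xi^{-k}x+c)^{q+1}(\xi^{-k}x+c)^t$, where the first factor is the polynomial $Q(x) \in \fqx$ already computed in \cref{eq:qx} and the second is $\xi^{-kt}x^t + c^t$. Since $Q(x) \in \fq$ for $x \in \fq$, this yields $g_r(x) = Q(x)\,\tr_{q^2/q}\big(B\xi^{k(1-t)}x^t + B\xi^k c^t\big) + (u^{q+1}-v^{q+1})x$, so everything comes down to evaluating that trace.

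The key computation is the following. For the $x^t$-part, applying $\tr_{q^2/q}(z) = z + z^q$ and using $x^{qt}=x^t$, $B^q\xi^{qk}=A\xi^k$, and $\xi^{-qkt} = (a^q/b)^t\xi^{-kt}$ collapses $\tr_{q^2/q}\big(B\xi^{k(1-t)}x^t\big)$ to $\xi^{k(1-t)}\big(B + (a^q/b)^t A\big)x^t$, which is $0$ \emph{precisely} because $(a^q/b)^t A = -B$. This cancellation is the only place the hypothesis is used, and carrying the powers of $\xi^{k}$ through correctly is the main thing one has to be careful about. For the constant part, substituting $A = -(b/a^q)^t B$ and then invoking the Frobenius identity $(a^qc)^t - (bc^q)^t = (a^qc - bc^q)^t = \alpha^t$ gives $\tr_{q^2/q}\big(B\xi^k c^t\big) = \xi^k B a^{-qt}\alpha^t$. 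Hence $g_r(x) = \xi^k B a^{-qt}\alpha^t\,Q(x) + (u^{q+1}-v^{q+1})x$ on $\fq$, and expanding $Q(x)$ via \cref{eq:qx} shows $g_r$ is a quadratic whose leading coefficient is $\xi^{-k}a^{q(1-t)}b^{-1}B\alpha^t$, linear coefficient is $a^{-qt}b^{-1}\beta$, and constant term is $\xi^k B a^{-qt}\alpha^t c^{q+1}$; in particular the leading coefficient vanishes iff $\alpha = 0$ and the linear one iff $\beta = 0$.

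Finally I would read off the behaviour from \cref{deg2mto1}. Since $g_r$ has degree $\le 2$ it is \onetoone, \twotoone, or \mfield{q}{\fq}, so together with the constraint $m \mid q$ the only possibilities are $m = 1$, $m = q$, and, when $q$ is even, $m = 2$. If $\alpha = 0$ then $g_r(x) = (u^{q+1}-v^{q+1})x$, which is \onetoone when $u^{q+1}\ne v^{q+1}$ and \mfield{q}{\fq} when $u^{q+1}=v^{q+1}$, giving cases (1) and (4). If $\alpha \ne 0$ the leading coefficient is nonzero since $abB\ne 0$, so \cref{deg2mto1} gives: $g_r$ is \onetoone iff $q$ is even and $\beta = 0$ (case (2)); $g_r$ is \twotoone iff $q$ is even and $\beta\ne 0$ (case (3)); and if $q$ is odd then $g_r$ is \twotoone but $2\nmid q$, so no admissible $m$ exists. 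Assembling these subcases yields exactly the four listed conditions.
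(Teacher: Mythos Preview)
Your proof is correct and follows essentially the same route as the paper: both apply \cref{thm:main-mto1}, exploit the decomposition $(\xi^{-k}x+c)^{q+t+1}=Q(x)(\xi^{-k}x+c)^t$ together with the hypothesis $(a^q/b)^tA+B=0$ to collapse $g_r$ to a quadratic on $\fq$, and then invoke \cref{deg2mto1}. Your organization via the relative trace and pulling $Q(x)\in\fq$ outside it makes the vanishing of the $x^t$-contribution a bit more transparent than the paper's bare expansion, but the mechanism and the resulting quadratic coefficients are the same.
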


\begin{proof}
Let $\gamma = (a^q/b)^t A + B$.
Recall that $\xi^{(1-q)k} = a^q/b$ and $a^q A = b B^q$. Then
\begin{align*}
    g_r(x) 
    & =  \xi^{-(t+q)k} \gamma x^{t+2}  
        + \xi^{-tk} (a^q b^{-1} c + c^q) \gamma x^{t+1} \\
    &\phantom{{}={}} + \xi^{(1-t)k} c^{q+1} \gamma x^{t} 
        + (a^{qt} b)^{-1} (\xi^{-k} a \alpha^t x^2 + \beta x) + \delta \\
    &= (a^{qt} b)^{-1} (\xi^{-k} a \alpha^t x^2 + \beta x) + \delta,
\end{align*}
where $\delta = \xi^{k} c^{q+1} (Ac^{qt} + Bc^{t})$.
Then the result follows from \cref{thm:main-mto1} and \cref{deg2mto1}. 
\end{proof}

\cref{r=q+t+1} generalizes \cite[Theorem~4.5]{WuY24} 
and \cite[Theorem~5.9]{2to1-YuanZW21}, 
where $m = 1$ or $2$, $a = b = 1$, $u = 0$, and $v \in \fqstar$.

\section{Degree 3}\label{sec:deg3}

This section discusses the case $g_r(x)$ or $g_r^3(x)$ 
behaves like a polynomial of degree~$3$. 
This case will happen if $r=3$, $3q$, $q+2$, 
$2q+1$, $(2q^2+q)/3$, or $(q^2+2q)/3$.

\begin{theorem}\label{thm:r=3}
Let $q \ge 7$ and $a, b, c, u, v \in \f_{q^2}$ satisfy 
$a^{q+1} = b^{q+1}$ and $a v \ne b u$. Let  
\begin{align*}
  \alpha = a^{2q} B^q + b^2 B,  \quad
  \beta  = 3 (a^q c - b c^q),       \quad
  \gamma = v^{q+1} - u^{q+1},   \quad
  \delta = \beta^2 B - a^{2q} \gamma. 
\end{align*}
If $r = 3$ and $1 \le m \le q$, then $f_{r}(x)$ is \mfqtwo \ifa 
one of the following holds:
\begin{enumerate}[\upshape(1)]
  \item $m = 1$, $\alpha = 0$, $\beta = 0$, and $\gamma \ne 0$;
  \item $m = 1$, $2 \mid q$, $\alpha = 0$, $\beta \neq 0$, and $\delta = 0$;
  \item $m = 1$, $3 \mid q$, $\alpha \neq 0$, and
        $(\alpha \gamma)^{\frac{q-1}{2}} \neq a b^{-1}$;
  \item $m = 1$, $q \equiv 2 \pmod{3}$, $\alpha \neq 0$,  and 
        $\beta^2 B^{q+1} = 3 \alpha \gamma$;
  \item $m = 2$, $2 \mid q$, $\alpha = 0$,  $\beta \ne 0$, 
        and $\delta \ne 0$;
  \item $m = 3$, $3 \mid q$, $\alpha \neq 0$, and 
        $(\alpha \gamma)^{\frac{q-1}{2}} = a b^{-1}$;
  \item $m = q$, $\alpha = 0$, $\beta = 0$, and $\gamma = 0$.
\end{enumerate}
\end{theorem}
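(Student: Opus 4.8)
The plan is to apply \cref{thm:main-mto1} and then reduce everything to the classifications recalled in \cref{Sec:mto1}. By \cref{thm:main-mto1}, for $1\le m\le q$ the polynomial $f_{3}$ is \mfqtwo if and only if $m\mid q$ and
\[
g_{3}(x)=A\xi^{k}(\xi^{-k}x+c)^{3q}+B\xi^{k}(\xi^{-k}x+c)^{3}+(u^{q+1}-v^{q+1})x
\]
is \mfq, so it suffices to analyse $g_{3}$ on the subfield. First I would make $g_{3}$ explicit on $\fq$: since $x^{q}=x$ for $x\in\fq$, we have $A\xi^{k}(\xi^{-k}x+c)^{3q}=\bigl(B\xi^{k}(\xi^{-k}x+c)^{3}\bigr)^{q}$ because $A\xi^{k}=(B\xi^{k})^{q}$ by \cref{lem:abAB}; expanding $(\xi^{-k}x+c)^{3}$ and using $\xi^{(q-1)k}=b/a^{q}$ and $a^{q}A=bB^{q}$ from \cref{lem:abAB,lem:abAB2}, one collects $g_{3}(x)=a_{3}x^{3}+a_{2}x^{2}+a_{1}x+a_{0}\in\fqx$, where $a_{3}=\xi^{-2k}b^{-2}\alpha$, $a_{2}=3\xi^{-k}b^{-1}(bBc+a^{q}B^{q}c^{q})$, $a_{1}=3(Bc^{2}+B^{q}c^{2q})-\gamma$, and $a_{0}$ is an explicit constant; all four coefficients lie in $\fq$ since $g_{3}\in\fqx$.

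Next I would split on whether $\alpha=0$. If $\alpha=0$, then $g_{3}$ has degree $\le 2$. Using $\alpha=0$ one checks $bBc+a^{q}B^{q}c^{q}=(bB/a^{q})(a^{q}c-bc^{q})$, so $a_{2}=\xi^{-k}(B/a^{q})\beta$ and hence $a_{2}=0\iff\beta=0$. When in addition $q$ is even, the characteristic-$2$ identity $a^{2q}c^{2}-b^{2}c^{2q}=(a^{q}c+bc^{q})^{2}$ gives $a_{1}=a^{-2q}\delta$, so $a_{1}=0\iff\delta=0$, and \cref{deg2mto1} then produces exactly cases (1), (2), (5), (7) (with (2) and (5) requiring $q$ even). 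If $q$ is odd, then $a_{2}=0$ forces $\beta=0$ (which is automatic when $3\mid q$), whence $a_{1}=-\gamma$ and $g_{3}$ is affine, yielding cases (1) and (7); otherwise $g_{3}$ is a genuine quadratic, hence \twotoone on $\fq$ by \cref{deg2mto1}, but then $2\nmid q$ contradicts $m\mid q$, so $f_{3}$ is \mtoone for no admissible $m$ — this explains why the list contains no case with $\alpha=0$, $\beta\ne0$, $q$ odd.

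If $\alpha\ne0$, then $g_{3}$ is a genuine cubic. When $3\mid q$ the coefficient $3$ vanishes, so $a_{2}=0$ and $a_{1}=-\gamma$; by \cref{deg3-1to1,deg3-3to1}, $g_{3}$ is \onetoone iff $(-a_{3}a_{1})^{(q-1)/2}\ne1$ and \thrtoone iff $(-a_{3}a_{1})^{(q-1)/2}=1$. Since $-a_{3}a_{1}=\xi^{-2k}b^{-2}\alpha\gamma$, the relations $\xi^{(q-1)k}=b/a^{q}$ and $a^{q+1}=b^{q+1}$ give $(-a_{3}a_{1})^{(q-1)/2}=a^{q}b^{-q}(\alpha\gamma)^{(q-1)/2}$, and $a^{q}b^{-q}(\alpha\gamma)^{(q-1)/2}\ne 1\iff(\alpha\gamma)^{(q-1)/2}\ne b^{q}a^{-q}=ab^{-1}$; this turns the two conditions into cases (3) and (6). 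When $3\nmid q$, the hypothesis $q\ge7$ and \cref{deg3-2to1} exclude \twotoone. For $q\equiv2\pmod3$, \cref{deg3-1to1} gives \onetoone iff $a_{2}^{2}=3a_{3}a_{1}$, while \cref{deg3-3to1} gives no \thrtoone case (its case (1) needs $3\mid q$, case (2) needs $q\equiv1\pmod3$, and case (3) needs $q=5<7$); expanding $a_{2}^{2}=3a_{3}a_{1}$, clearing denominators, factoring out $BB^{q}$ and using $9(a^{q}c-bc^{q})^{2}=\beta^{2}$ rewrites it as $\beta^{2}B^{q+1}=3\alpha\gamma$, which is case (4). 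For $q\equiv1\pmod3$, \cref{deg3-1to1} shows $g_{3}$ is never \onetoone, and \cref{deg3-3to1} allows it to be \thrtoone only when $3\nmid q$, incompatible with $m=3\mid q$; so no case arises. Collecting the subcases, together with the constraint $m\mid q$, gives precisely the list (1)--(7).

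The routine but delicate part — where I expect the real work to be — is the bookkeeping in the explicit expansion of $g_{3}$ and, above all, the two algebraic equivalences $a_{1}=0\iff\delta=0$ and $a_{2}^{2}=3a_{3}a_{1}\iff\beta^{2}B^{q+1}=3\alpha\gamma$. The first holds only in characteristic $2$, because there $a^{2q}c^{2}-b^{2}c^{2q}$ collapses to the perfect square $(a^{q}c+bc^{q})^{2}$; it is exactly this phenomenon that forces the parity hypotheses in cases (2) and (5). Both equivalences are to be verified by direct manipulation with the identities $A\xi^{k}=(B\xi^{k})^{q}$, $a^{q}A=bB^{q}$, $a^{q}B=bA^{q}$, $A^{q+1}=B^{q+1}$, $\xi^{(q-1)k}=b/a^{q}$, and $a^{q+1}=b^{q+1}$ supplied by \cref{lem:abAB,lem:abAB2}.
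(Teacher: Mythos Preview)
Your proposal is correct and follows essentially the same approach as the paper: apply \cref{thm:main-mto1}, expand $g_3(x)=\xi^{-2k}b^{-2}\alpha\,x^3+\xi^{-k}b^{-1}\beta'\,x^2+\gamma'\,x+\text{const}$, and then invoke the low-degree classifications \cref{deg2mto1,deg3-1to1,deg3-2to1,deg3-3to1}. The only difference is organizational --- you split first on $\alpha=0$ versus $\alpha\ne0$ and then on the characteristic, while the paper runs through $m\in\{1,2,3,q\}$ --- but the two key algebraic reductions (that $\gamma'=a^{-2q}\delta$ in characteristic~$2$, and that $\beta'^2=3\alpha\gamma'\iff\beta^2B^{q+1}=3\alpha\gamma$) are exactly the same.
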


\begin{proof}
A direct computation gives that 
\[
  B \xi^{k} (\xi^{-k} x + c)^3
  = \xi^{-2k} B x^3 + 3 \xi^{-k} Bc x^2 + 3Bc^2 x + \xi^{k} Bc^3.
\]
Recall that $A \xi^{k} = (B \xi^{k})^q$ and $\xi^{(1-q)k} = a^q/b$. 
Then for any $x \in \fq$ we get
\begin{align*}
  A \xi^{k} (\xi^{-k} x + c)^{3q}
  &= (B \xi^{k} (\xi^{-k} x + c)^3)^q \\
  &= \xi^{-2qk} B^q x^3 + 3 \xi^{-qk} (Bc)^q x^2 
        + 3(Bc^2)^q x + (\xi^{k} Bc^3)^q \\
  &= \xi^{-2k} \xi^{2(1-q)k} B^q x^3 + \xi^{-k} 3 \xi^{(1-q)k} (Bc)^q x^2 
        + 3(Bc^2)^q x + (\xi^{k} Bc^3)^q  \\
  &= \xi^{-2k} (a^q/b)^2 B^q x^3 + \xi^{-k} 3 (a^q/b) (Bc)^q x^2 
        + 3(Bc^2)^q x + (\xi^{k} Bc^3)^q. 
\end{align*}
Then for any $x \in \fq$, 
\[
  g_r(x) = \xi^{-2k}b^{-2} \alpha x^3 + \xi^{-k}b^{-1} \beta' x^2 
        + \gamma' x + \xi^{k} Bc^3 + (\xi^{k} Bc^3)^q  \in \fqx,
\]
where
\[
  \beta'  = 3(acB)^q + 3bcB  \qtq{and}
  \gamma' = 3(Bc^2)^q + 3Bc^2  + u^{q+1} - v^{q+1}. 
\]
If $\alpha = 0$, i.e., $B^q = - a^{-2q} b^2 B$, then
\[
  \beta' 
  = a^{-q} b B \beta
    \qtq{and} 
  \gamma' 
   =  a^{-2q} B (a^q c + b c^q) \beta - \gamma.
\]
Since $abAB \ne 0$, we get $\beta' = 0$ \ifa $\beta = 0$. 
When $\alpha \ne 0$, 
\[
  (\xi^{-2k}b^{-2})^{\frac{q-1}{2}} 
  = \xi^{(1-q)k} b^{1-q} 
  = (a^q/b)(b/b^q) 
  = a^q / b^q 
  = b/a, 
\]
and $\beta'^2 = 3 \alpha \gamma'$ \ifa 
$\beta^2 B^{q+1} = 3 \alpha \gamma$.

By \cref{thm:main-mto1}, for $1 \le m \le q$,  
$f_r(x)$ is \mfqtwo \ifa $m \mid q$ and $g_r(x)$ is \mfq. 
Since $\deg(g_r) \le 3$, we get $m \in \{ 1, 2, 3, q \}$.

Case 1: $m = 1$. When $\alpha = 0$, by \cref{deg2pps}, $f_{r}(x)$ permutes $\f_{q^2}$ \ifa (i) $\beta' = 0$ and $\gamma' \ne 0$, or (ii) $\beta' \ne 0$, $\gamma' = 0$, and $q$ is even.  
When $\alpha \ne 0$, by \cref{deg3-1to1}, $f_{r}(x)$ permutes $\f_{q^2}$ \ifa 
(iv) $3 \mid q$ and $(-\xi^{-2k}b^{-2}\alpha\gamma')^{\frac{q-1}{2}} \ne 1$, or (v) $q \equiv 2 \pmod{3}$ and $\beta'^2 = 3 \alpha \gamma'$.
That is, $f_{r}(x)$ permutes $\f_{q^2}$ \ifa one of (1), (2), (3), and (4) holds. 

Case 2: $m = 2$ and $q$ is even. When $\alpha = 0$, by \cref{deg2-2to1}, $f_r(x)$ is  \mfield{2}{\fqtwo} \ifa $\beta' \gamma' \ne 0$. When $\alpha \ne 0$, 
by \cref{deg3-2to1}, $f_r(x)$ is not \mfield{2}{\fqtwo}.
That is, $f_{r}(x)$ is  \mfield{2}{\fqtwo} \ifa (5) holds.

Case 3: $m = 3$ and $3 \mid q$. When $\alpha = 0$, by \cref{deg2mto1}, $f_r(x)$ is not \mfield{3}{\fqtwo}. When $\alpha \ne 0$, by \cref{deg3-3to1}, $f_r(x)$ is \mfield{3}{\fqtwo} \ifa
$(-\xi^{-2k}b^{-2}\alpha\gamma')^{\frac{q-1}{2}} = 1$. That is, $f_{r}(x)$ is \mfield{3}{\fqtwo} \ifa (6) holds.

Case 4: $m = q$. $f_r(x)$ is \mfield{q}{\fqtwo} \ifa $g_r(x)$ is a constant, i.e., $\alpha = \beta' = \gamma' = 0$. That is, (7) holds.
\end{proof}

According to the relationship between the exponents $r$ and $q r$, 
we obtain the next result by switching the roles of $A$ and $B$ 
in the proof of \cref{thm:r=3}.

\begin{theorem}\label{thm:r=3q}
Let $q \ge 7$ and $a, b, c, u, v \in \f_{q^2}$ satisfy 
$a^{q+1} = b^{q+1}$ and $a v \ne b u$. Let
\begin{gather*}
  \alpha = a^{2q} A^q + b^2 A,  \quad
  \beta  = 3 (a^q c - b c^q),   \quad
  \gamma = v^{q+1} - u^{q+1},   \quad
  \delta = \beta^2 A - a^{2q} \gamma. 
\end{gather*}
If $r = 3q$ and $1 \le m \le q$, then $f_{r}(x)$ is \mfqtwo \ifa 
one of the seven cases in \cref{thm:r=3} holds.
\end{theorem}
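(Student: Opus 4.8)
The plan is to deduce \cref{thm:r=3q} from the already-established \cref{thm:r=3} by a direct symmetry argument, exploiting the fact that the exponents $r=3$ and $r=3q$ differ only by a factor of $q$, and that raising an element of $\fqtwo$ to the power $q$ merely permutes the two ``halves'' $A\xi^{k}(\cdot)^{3}$ and $B\xi^{k}(\cdot)^{3q}$ of the associated polynomial. First I would record the identity $(\xi^{-k}x+c)^{q^{2}}=\xi^{-k}x+c$, valid for every $x\in\fqtwo$ since $\xi^{-k}x+c\in\fqtwo$; hence $(\xi^{-k}x+c)^{3q^{2}}=(\xi^{-k}x+c)^{3}$. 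Plugging $r=3q$ into the formula of \cref{thm:main-mto1} and using this identity shows that, as a function on the subfield $\fq$,
\[
  g_{3q}(x)=A\xi^{k}(\xi^{-k}x+c)^{3}+B\xi^{k}(\xi^{-k}x+c)^{3q}+(u^{q+1}-v^{q+1})x,
\]
which is exactly the polynomial $g_{3}(x)$ occurring in the proof of \cref{thm:r=3} with the roles of $A$ and $B$ interchanged.

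Next I would check that the exchange $A\leftrightarrow B$ is harmless for every structural fact used in the proof of \cref{thm:r=3}. By \cref{lem:abAB,lem:abAB2} we have both $A\xi^{k}=(B\xi^{k})^{q}$ and $B\xi^{k}=(A\xi^{k})^{q}$, both $a^{q}A=bB^{q}$ and $a^{q}B=bA^{q}$, together with $abAB\neq0$ and $A^{q+1}=B^{q+1}$; and the relation $\xi^{(1-q)k}=a^{q}/b$ does not involve $A$ or $B$ at all. Therefore the computation in \cref{thm:r=3} that rewrites $g_{3}$ on $\fq$ as the cubic $\xi^{-2k}b^{-2}\alpha x^{3}+\xi^{-k}b^{-1}\beta' x^{2}+\gamma' x+\xi^{k}Bc^{3}+(\xi^{k}Bc^{3})^{q}$ carries over verbatim for $g_{3q}$ after replacing $B$ by $A$ throughout: the new leading parameter is $\alpha=a^{2q}A^{q}+b^{2}A$, the auxiliary quantities $\beta',\gamma'$ are the $A$-analogues of the ones in \cref{thm:r=3}, and the equivalences used there remain valid with identical proofs, namely $\beta'=0\Leftrightarrow\beta=0$ when $\alpha=0$; $\beta'^{2}=3\alpha\gamma'\Leftrightarrow\beta^{2}B^{q+1}=3\alpha\gamma$ (which keeps its shape because $A^{q+1}=B^{q+1}$); in characteristic $3$, $\beta=\beta'=0$ and $\gamma'=-\gamma$; and in characteristic $2$, $\gamma'=0\Leftrightarrow\delta=0$ with $\delta=\beta^{2}A-a^{2q}\gamma$.

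Finally, with $g_{3q}$ represented on $\fq$ by that cubic, I would run the same case analysis as in \cref{thm:r=3}: \cref{thm:main-mto1} forces $m\mid q$, hence $m\in\{1,2,3,q\}$, and the four cases $m=1$, $m=2$ (with $q$ even), $m=3$ (with $3\mid q$), $m=q$ are disposed of by \cref{deg2pps,deg3-1to1}, by \cref{deg2-2to1,deg3-2to1}, by \cref{deg3-3to1,deg2mto1}, and by the observation that the cubic is a nonconstant function unless $\alpha=\beta'=\gamma'=0$, respectively. Collecting these yields the seven cases verbatim with $B$ replaced by $A$, as claimed. I do not anticipate a substantive obstacle: the only place needing care is the opening step --- confirming that $(\xi^{-k}x+c)^{3q^{2}}=(\xi^{-k}x+c)^{3}$ on $\fqtwo$ genuinely turns $g_{3q}$ into $g_{3}$ with $A$ and $B$ swapped --- plus the bookkeeping verifying that the parameters $\alpha$ and $\delta$ displayed in \cref{thm:r=3q} are precisely the $A\leftrightarrow B$ images of those in \cref{thm:r=3}, which goes through smoothly since $A^{q+1}=B^{q+1}$.
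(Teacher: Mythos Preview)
Your proposal is correct and follows exactly the paper's own approach: the paper's proof of \cref{thm:r=3q} consists of the single sentence that the result follows by switching the roles of $A$ and $B$ in the proof of \cref{thm:r=3}, and you have spelled out precisely why that swap works, namely because $(\xi^{-k}x+c)^{3q^{2}}=(\xi^{-k}x+c)^{3}$ for $x\in\fq$ turns $g_{3q}$ into $g_{3}$ with $A$ and $B$ interchanged. Your additional bookkeeping (that the needed identities from \cref{lem:abAB,lem:abAB2} are symmetric in $A,B$, and that $A^{q+1}=B^{q+1}$ keeps the condition $\beta^{2}B^{q+1}=3\alpha\gamma$ unchanged) is accurate and more explicit than the paper.
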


\cref{thm:r=3q} generalizes \cite[Proposition~4.28]{nto1-NiuLQL23} 
where $m=2$, $a=b=v=1$, $u=0$, and $q$ is even.

\begin{theorem}\label{thm:r=q+2}
 Let $q \geq 7$, $a, b, c, u, v \in \f_{q^2}$ satisfy 
 $a^{q+1} = b^{q+1}$ and $a v \ne b u$. Let $\alpha = B + B^q$,
\begin{align*}
    \beta &= (2a^q c + b c^q)B + (a^q c + 2b c^q)B^q, \\
    \gamma &= 2 \alpha c^{q+1} + A c^{2q} + A^q c^2 + u^{q+1} - v^{q+1}. 
\end{align*}
If $r = q + 2$ and $1 \le m \le q$, then $f_{r}(x)$ is \mfqtwo \ifa 
one of the following holds:
\begin{enumerate}[\upshape(1)]
  \item $m = 1$, $\alpha = 0$, $a^q c = b c^q$, and $u^{q+1} \ne v^{q+1}$;
  \item $m = 1$, $2 \mid q$, $\alpha = 0$, $a^q c \neq b c^q$, and $\gamma = 0$;  
  \item $m = 1$, $q \equiv 2 \pmod{3}$, $\alpha \ne 0$, 
            and $\beta^2 = 3 a^q b \alpha \gamma$;
  \item $m = 1$, $3 \mid q$, $\alpha \ne 0$, $\beta = 0$, and 
        $(-\alpha \gamma)^{\frac{q-1}{2}} \ne (a b^{-q})^{\frac{q+1}{2}}$;
  \item $m = 2$, $2 \mid q$, $\alpha = 0$,  $a^q c \neq b c^q$, and $\gamma \ne 0$;
  \item $m = 3$, $3 \mid q$,  $\alpha \ne 0$,  $\beta = 0$, 
            and $(-\alpha \gamma)^{\frac{q-1}{2}} = (a b^{-q})^{\frac{q+1}{2}}$;
  \item $m = q$, $\alpha = 0$, $\beta = 0$, and $\gamma = 0$.
\end{enumerate}
\end{theorem}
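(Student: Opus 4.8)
The plan is to follow the template established by the proofs of \cref{thm:r=3} and \cref{thm:r=q+2}'s predecessors, namely to compute explicitly the polynomial $g_r(x)$ from \cref{thm:main-mto1} for the exponent $r = q+2$, show that it is (up to the standard rescalings by powers of $\xi^k$ and $b$) a cubic polynomial over $\fq$, and then invoke the classification results \cref{deg2pps}, \cref{deg2-2to1}, \cref{deg3-1to1}, \cref{deg3-2to1}, \cref{deg3-3to1}, and \cref{deg2mto1} case by case on $m \in \{1,2,3,q\}$.

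\textbf{Step 1: Compute $g_r(x)$.} First I would expand $B\xi^k(\xi^{-k}x+c)^{q+2} = B\xi^k(\xi^{-k}x+c)^q(\xi^{-k}x+c)^2$. Using $(\xi^{-k}x+c)^q = \xi^{-qk}x+c^q$ for $x\in\fq$ and the relation $\xi^{(1-q)k} = a^q/b$ (from \cref{lem:abAB}), this becomes a cubic in $x$ with coefficients expressible in $B$, $c$, $c^q$, and powers of $a^q/b$ and $\xi^{-k}$. Then I would form $A\xi^k(\xi^{-k}x+c)^{q(q+2)}$; since $q(q+2) \equiv q^2+2q \equiv 2q+1 \pmod{q^2-1}$, the exponent $q(q+2)$ modulo $q^2-1$ equals $2q+1$, so this term is $A\xi^k(\xi^{-k}x+c)^{2q+1}$, which one computes as the "$q$-conjugate partner" of the first term — more precisely, using $A\xi^k = (B\xi^k)^q$, it should equal $(B\xi^k(\xi^{-k}x+c)^{q+2})^q$ evaluated on $\fq$. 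Adding the two cubics plus $(u^{q+1}-v^{q+1})x$ and simplifying with $a^qA = bB^q$, $a^qB=bA^q$, $A^{q+1}=B^{q+1}$ from \cref{lem:abAB2}, I expect the leading coefficient to collapse to a scalar multiple of $\alpha = B+B^q$, the $x^2$-coefficient to a scalar multiple of $\beta$, the $x^1$-coefficient to a scalar multiple of $\gamma$, and a constant term. The key simplification to watch for is that when $\alpha = B+B^q = 0$ (i.e.\ $B^q = -B$, equivalently $B$ lies in a certain additive structure), the cubic degenerates to a quadratic, and one must check whether the resulting quadratic coefficient is nonzero — this should be governed by the condition $a^q c = b c^q$ versus $a^q c \neq b c^q$, explaining the split between cases (1),(7) and (2),(5).

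\textbf{Step 2: Case analysis.} With $g_r(x) = (\text{unit})\alpha x^3 + (\text{unit})\beta x^2 + \gamma' x + (\text{const})$ in hand (where $\gamma'$ is $\gamma$ possibly up to a unit and a correction when $\alpha=0$), I would split on $m$. For $m=q$: $g_r$ constant $\iff \alpha=\beta=\gamma=0$, giving (7). For $m=1$ with $\alpha=0$: $g_r$ is linear (degree $\le 2$ with zero quadratic term forced by the structure), apply \cref{deg2pps} — the nonvanishing of the linear coefficient splits into $a^qc=bc^q$ with $u^{q+1}\ne v^{q+1}$ (case 1) and the even-$q$, $a^qc\ne bc^q$, $\gamma=0$ branch (case 2). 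For $m=1$ with $\alpha\ne0$: apply \cref{deg3-1to1}, whose two branches ($3\mid q$ with a non-square condition; $q\equiv 2\pmod 3$ with the discriminant-type equation $b^2=3ac$) produce cases (4) and (3) respectively — here the translation $f(x-b/3a)$ inside \cref{deg3-1to1} and the precise form of the quadratic character condition give the exponents $(q-1)/2$ and $(q+1)/2$ on $\alpha\gamma$ and $ab^{-q}$. For $m=2$ (necessarily $q$ even): if $\alpha=0$ use \cref{deg2-2to1} giving case (5); if $\alpha\ne0$ use \cref{deg3-2to1} (no cubic is \twotoone for $q\ge7$), so nothing new — this is where the hypothesis $q\ge 7$ is used. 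For $m=3$ (necessarily $3\mid q$): if $\alpha=0$, \cref{deg2mto1} shows a quadratic/linear is never \thrtoone; if $\alpha\ne0$, \cref{deg3-3to1} case (1) gives case (6), the square condition being the complement of the one in case (4).

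\textbf{Main obstacle.} The genuinely delicate part is Step 1: correctly tracking the powers of $\xi^{-k}$ and $b^{-1}$ through the $q$-th power operation and verifying that the $q$-conjugate of the $(q+2)$-term really is the $(2q+1)$-term on $\fq$, so that the two contributions combine into coefficients lying in $\fq$ with exactly the claimed forms $\alpha, \beta, \gamma$. In particular, one must be careful that $\beta = (2a^qc+bc^q)B + (a^qc+2b c^q)B^q$ (note the asymmetry in the coefficients $2$ and $1$, coming from the fact that the $(q+2)$ exponent is $q+1+1$, contributing one "$q$-slot" and two "ordinary slots") is what emerges, and that the normalization $\beta^2 = 3a^qb\,\alpha\gamma$ (rather than $\beta^2 = 3\alpha\gamma$) appears in case (3) because of the unit factors $\xi^{-2k}b^{-2}$ on the cubic coefficient and $\xi^{-k}b^{-1}$ on the quadratic one, whose ratio squared is $b/a^q$ type. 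Once the coefficients are pinned down, the case analysis is a mechanical application of the degree-$\le 4$ classification lemmas already recorded in \cref{Sec:mto1}, exactly parallel to the proof of \cref{thm:r=3}.
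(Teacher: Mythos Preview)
Your approach is essentially identical to the paper's: compute $g_r(x)$ explicitly as a cubic in $\fq[x]$ via $B\xi^k(\xi^{-k}x+c)^{q+2}$ and its $q$-conjugate (the paper writes this as $(B\xi^k C(x))^q$ rather than passing through the exponent $2q+1$, but it is the same), obtain $g_r(x)=\xi^{-2k}(a^q/b)\alpha x^3+\xi^{-k}b^{-1}\beta x^2+\gamma x+\eta$, and then run the degree-$\le 3$ classification case by case exactly as in the proof of \cref{thm:r=3}.

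One small slip: in your $\alpha=0$ sub-case you write ``$g_r$ is linear (degree $\le 2$ with zero quadratic term forced by the structure)''. That is not correct --- when $\alpha=0$ (i.e.\ $B^q=-B$) the quadratic coefficient is $\xi^{-k}b^{-1}\beta$ with $\beta=(a^qc-bc^q)B$, which need not vanish. The split between cases (1),(7) and (2),(5) is governed by whether this \emph{quadratic} coefficient $\beta$ vanishes (equivalently $a^qc=bc^q$), not by the linear coefficient. You do identify the correct branch conditions immediately afterward, so this is a wording issue rather than a real gap; just be sure in the write-up that the $\alpha=0$ analysis treats $g_r$ as a genuine quadratic and invokes \cref{deg2mto1} on both the $\beta=0$ and $\beta\ne0$ sub-sub-cases. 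The paper also records the auxiliary fact (via \cref{lem:abAB2}) that under $\alpha=0$ and $a^qc=bc^q$ one has $Ac^{2q}+A^qc^2=0$, which reduces $\gamma$ to $u^{q+1}-v^{q+1}$ and explains why case (1) reads the way it does.
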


\begin{proof}
Let $C(x) = (\xi^{-k} x + c)^{q+2}$.
Then for $x \in \fq$ we obtain 
  \begin{align}
    C(x)  & = (\xi^{-k} x + c)^q (\xi^{-k} x + c)^2 \notag \\
    & = (\xi^{-qk} x + c^q) (\xi^{-2k} x^2 + 2 \xi^{-k} c x + c^2) \notag \\
    &= \xi^{-qk-2k} x^3 + (2\xi^{-qk-k} c + \xi^{-2k} c^q) x^2 + 2 \xi^{-k} c^{q+1} x  \label{eq:c1x}\\   
    &\phantom{{}={}}  + \xi^{-qk} c^2 x + c^{q+2}, 
    \label{eq:c2x} \\[6pt]
  B \xi^{k} C(x) 
  &= \xi^{-qk-k} Bx^3 + (2\xi^{-qk} c + \xi^{-k} c^q) Bx^2 + 2 c^{q+1} Bx \label{eq:bc1x} \\ 
  &\phantom{{}={}} + \xi^{k-qk} B c^2 x  + \xi^{k}Bc^{q+2}.   \label{eq:bc2x} 
\end{align}
By \cref{lem:abAB}, $(B \xi^{k})^q = A \xi^{k}$, and so
\begin{equation}\label{eq:bcxq}
\begin{aligned}
  (B \xi^{k} C(x))^q 
  &= (B \xi^{k} \cref{eq:c1x})^q + (B \xi^{k} \cref{eq:c2x})^q  
  = \cref{eq:bc1x}^q + A \xi^{k} \cref{eq:c2x}^q \\
  &= \xi^{-k-qk} B^q x^3 + (2\xi^{-k} c^q + \xi^{-qk} c) B^q x^2 + 2 c^{q+1} B^q x \\
  &\phantom{{}={}} + A c^{2q} x + \xi^{k} A c^{2q+1}.
\end{aligned}
\end{equation}
Note that $\xi^{k-qk} = a^q b^{-1}$ and $a^q B = b A^q$. 
Then a direct calculation yields that
\begin{align*}
  g_r(x) 
  & = B \xi^{k} C(x) + (B \xi^{k} C(x))^q 
      + (u^{q+1} - v^{q+1}) x  \\
  & = \xi^{-2k}(a^q/b)\alpha x^3 + \xi^{-k} b^{-1} \beta x^2 
        + \gamma x + \eta \in \fqx,
\end{align*}
where $\eta = \xi^{k} B c^{q+2} + \xi^{k} A c^{2q+1}$. 
By \cref{lem:abAB2}, we have $abAB \ne 0$, $a^q A = b B^q$, 
and $a^q B = b A^q$. Thus, if $B^q = -B$, then 
$\beta = (a^q c - b c^q) B$ and 
\[
  A c^{2q} + A^q c^2 
  = \frac{bB^q}{a^q} c^{2q} + \frac{a^q B}{b} c^2
  = \Big(\frac{a^q c^2}{b}  - \frac{bc^{2q}}{a^q} \Big) B.
\]
Hence $\beta = 0$ \ifa $a^q c = b c^q$, 
and $A c^{2q} + A^q c^2 = 0$ \ifa $a^q c = \pm b c^q$.
The remainder proof is similar to the proof of \cref{thm:r=3}.
\end{proof}

Switching the roles of $A$ and $B$ in \cref{thm:r=q+2} yields the next result.

\begin{theorem}\label{r=2q+1}
 Let $q \geq 7$, $a, b, c, u, v \in \f_{q^2}$ satisfy 
 $a^{q+1} = b^{q+1}$ and $a v \ne b u$. Let $\alpha = A + A^q$,
\begin{align*}
    \beta &= (2a^q c + b c^q)A + (a^q c + 2b c^q)A^q, \\
    \gamma  &= 2 \alpha c^{q+1} + B c^{2q} + B^q c^2 + u^{q+1} - v^{q+1}.
\end{align*}
If $r = 2q+1$ and $1 \le m \le q$, then $f_{r}(x)$ is \mfqtwo \ifa 
one of the seven cases in \cref{thm:r=q+2} holds.
\end{theorem}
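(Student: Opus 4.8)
The plan is to obtain \cref{r=2q+1} from \cref{thm:r=q+2} by the substitution $A \leftrightarrow B$; the only real content is to check that this substitution is legitimate at the level of the associated polynomial produced by \cref{thm:main-mto1}. First I would apply \cref{thm:main-mto1} with $r = 2q+1$: for $1 \le m \le q$, $f_{r}(x)$ is \mfqtwo \ifa $m \mid q$ and
\[
  g_{r}(x) = A \xi^{k} (\xi^{-k} x + c)^{q(2q+1)} + B \xi^{k} (\xi^{-k} x + c)^{2q+1} + (u^{q+1} - v^{q+1}) x
\]
is \mfq. Since $q(2q+1) = 2q^{2}+q \equiv q+2 \pmod{q^{2}-1}$ and $(q+2)q = q^{2}+2q \equiv 2q+1 \pmod{q^{2}-1}$, for every $x \in \fq$ (so that $\xi^{-k}x + c \in \fqtwo$) we have $(\xi^{-k}x+c)^{q(2q+1)} = C(x)$ and $(\xi^{-k}x+c)^{2q+1} = C(x)^{q}$, where $C(x) = (\xi^{-k}x+c)^{q+2}$ is exactly the polynomial used in the proof of \cref{thm:r=q+2}. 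Using $(A\xi^{k})^{q} = B\xi^{k}$ from \cref{lem:abAB}, this yields, for $x \in \fq$,
\[
  g_{r}(x) = A \xi^{k} C(x) + \bigl(A \xi^{k} C(x)\bigr)^{q} + (u^{q+1}-v^{q+1}) x,
\]
which is precisely the quantity analyzed in \cref{thm:r=q+2} with every occurrence of $B$ replaced by $A$.

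Next I would verify that each structural identity invoked in the proof of \cref{thm:r=q+2} is invariant under interchanging $A$ and $B$: the scalar relation $\xi^{(1-q)k} = a^{q}/b$ involves neither, while $abAB \ne 0$, $a^{q}A = bB^{q}$, $a^{q}B = bA^{q}$, $(A\xi^{k})^{q} = B\xi^{k}$ and $(B\xi^{k})^{q} = A\xi^{k}$ all hold and are symmetric by \cref{lem:abAB,lem:abAB2}. Consequently the identical computation (expand $C(x)$, take $q$-th powers, collect terms) gives, for $x \in \fq$,
\[
  g_{r}(x) = \xi^{-2k}(a^{q}/b)\,\alpha\, x^{3} + \xi^{-k} b^{-1} \beta\, x^{2} + \gamma\, x + \eta \in \fqx,
\]
where $\alpha = A + A^{q}$, $\beta = (2a^{q}c + bc^{q})A + (a^{q}c + 2bc^{q})A^{q}$, $\gamma = 2\alpha c^{q+1} + Bc^{2q} + B^{q}c^{2} + u^{q+1} - v^{q+1}$, and $\eta$ is a constant whose value is immaterial; these are exactly the parameters listed in \cref{r=2q+1}. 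I would also record the simplifications when $\alpha = 0$, i.e.\ $A^{q} = -A$: then $\beta = (a^{q}c - bc^{q})A$, so $\beta = 0$ \ifa $a^{q}c = bc^{q}$ (since $A \ne 0$), and $Bc^{2q} + B^{q}c^{2} = \bigl(a^{q}c^{2}/b - bc^{2q}/a^{q}\bigr)A$ by $a^{q}B = bA^{q}$, which in turn vanishes when $a^{q}c = bc^{q}$, leaving $\gamma = u^{q+1} - v^{q+1}$.

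Finally I would run the same case distinction as in the proofs of \cref{thm:r=3} and \cref{thm:r=q+2}: because $\deg g_{r} \le 3$ the only possibilities are $m \in \{1,2,3,q\}$, and for each I would apply \cref{deg2pps} together with \cref{deg3-1to1} (for $m = 1$), \cref{deg2-2to1} together with \cref{deg3-2to1} (for $m = 2$), \cref{deg3-3to1} (for $m = 3$), or the requirement that $g_{r}$ be constant (for $m = q$), to the cubic displayed above, distinguishing the subcases $\alpha = 0$ and $\alpha \ne 0$ and the value of $q \bmod 3$. Since that cubic is literally the image under $A \leftrightarrow B$ of the one appearing in \cref{thm:r=q+2}, the seven conditions obtained coincide with the seven cases of \cref{thm:r=q+2}, which is the assertion. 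I do not expect a genuine obstacle; the one delicate point — and the step I would check most carefully — is the bookkeeping behind the two displays above, i.e.\ that the exponents are reduced correctly modulo $q^{2}-1$ and that the constants $\alpha$, $\beta$, $\gamma$ coming out of the swapped computation match those in the statement verbatim.
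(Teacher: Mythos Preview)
Your proposal is correct and follows exactly the approach the paper indicates: the paper's entire proof is the single sentence ``Switching the roles of $A$ and $B$ in \cref{thm:r=q+2} yields the next result,'' and you have supplied the bookkeeping that justifies this swap---reducing the exponents $q(2q+1)\equiv q+2$ and $2q+1\equiv (q+2)q \pmod{q^2-1}$ to identify $g_r(x)$ with $A\xi^k C(x) + (A\xi^k C(x))^q + (u^{q+1}-v^{q+1})x$, and then checking that the identities from \cref{lem:abAB,lem:abAB2} used in the computation of \cref{thm:r=q+2} are symmetric in $A$ and $B$.
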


\cref{thm:r=q+2} generalizes and unifies 
\cite[Theorem~3~(1)]{TuZJ1531}, \cite[Theorem~2]{TuZLH1534},
and \cite[Corollaries~1 and~5]{WuY22} 
where $m = 1$, $u = 0$, and~$q$ is a power of~$2$ or~$3$.
\cref{r=2q+1} generalizes three results in
\cite{LiCao2389,WuY24,nto1-NiuLQL23}.
In fact, \cite[Proposition~1]{LiCao2389} is in the case 
$m=1$, $2 \mid q$, $a = b = 1$, and $u + v \in \fqstar$.
\cite[Theorem~3.6]{WuY24} and \cite[Proposition~4.24]{nto1-NiuLQL23} 
are in the case $m=1$ or $3$, $3 \mid q$, $a = - b = v = 1$, and $u = 0$.


\begin{theorem}\label{thm:r=(2q+3)/3}
Let $q = 3^n \ge 9$ and $a, b, c, u, v \in \f_{q^2}$ 
satisfy $a^{q+1} = b^{q+1}$ and $a v \ne b u$. Let 
\begin{align*}
  \alpha &= a^q(a^q A^3 + b B^3) + b^2(u^{q+1} - v^{q+1})^3, \\
  \beta  &= (a^q c - b c^q) (a^q A^3 - b B^3), \\
  \gamma &= (a^q c - b c^q) (B^3 c- A^3 c^q).
\end{align*}
If $r = (2q^2+q)/3$ and $1 \le m \le q$, then $f_{r}(x)$ is \mfqtwo \ifa 
one of the following holds:
\begin{enumerate}[\upshape(1)]
  \item $m = 1$, $\alpha = 0$, $\beta = 0$, and $a^q c \ne b c^q$;
  \item $m = 1$, $\alpha \neq 0$, $\beta = 0$, and 
$(-\alpha \gamma)^{(q-1)/2} \neq a^q b^{(3q-5)/2}$;
  \item $m = 3$, $\alpha \ne 0$, $\beta = 0$, and $(-\alpha \gamma)^{(q-1)/2} = a^q b^{(3q-5)/2}$;
  \item $m = q$, $\alpha = 0$, $\beta = 0$, and $\gamma = 0$.
\end{enumerate}
\end{theorem}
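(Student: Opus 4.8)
\emph{Proof plan.} The strategy follows the proofs of \cref{thm:r=3,thm:r=q+2}, with one new twist: since $q=3^n$, cubing is a bijection of $\fq$, so $g_r$ is \mfq \ifa $g_r^3$ is, and it is $g_r^3$ rather than $g_r$ that behaves like a cubic on $\fq$. First I would put $g_r^3$ into closed form on $\fq$. Recall from \cref{thm:main-mto1} that
\[
g_r(x) = A\xi^k(\xi^{-k}x+c)^{qr} + B\xi^k(\xi^{-k}x+c)^{r} + (u^{q+1}-v^{q+1})x ;
\]
raising to the cube and using Frobenius additivity in characteristic $3$,
\[
g_r(x)^3 = (A\xi^k)^3(\xi^{-k}x+c)^{3qr} + (B\xi^k)^3(\xi^{-k}x+c)^{3r} + (u^{q+1}-v^{q+1})^3 x^3 .
\]
Here $3r=2q^2+q$ and $3qr=2q^3+q^2$, and since $y^{q^2}=y$ for $y\in\fqtwo$ we get $y^{3r}=y^{q+2}$ and $y^{3qr}=y^{2q+1}$. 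Hence, with $y=\xi^{-k}x+c$, for all $x\in\fq$
\[
g_r(x)^3 = (A\xi^k)^3 y^{2q+1} + (B\xi^k)^3 y^{q+2} + (u^{q+1}-v^{q+1})^3 x^3 ,
\]
and since $y^q=\xi^{-qk}x+c^q$ when $x\in\fq$, the quantities $y^{2q+1}=(y^q)^2y$ and $y^{q+2}=y^q y^2$ are explicit cubics in $x$.

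Substituting these and simplifying with $A\xi^k=(B\xi^k)^q$, $\xi^{(1-q)k}=a^q/b$ and $a^qA=bB^q$ (\cref{lem:abAB,lem:abAB2}) together with $2\equiv-1\pmod 3$, I expect the terms to collect into
\[
g_r(x)^3 = b^{-2}\alpha\, x^3 + b^{-2}\xi^{k}\beta\, x^2 + b^{-1}\xi^{2k}\gamma\, x + (\text{constant}) \qquad (x\in\fq),
\]
where the factorizations $\beta=(a^qc-bc^q)(a^qA^3-bB^3)$ and $\gamma=(a^qc-bc^q)(B^3c-A^3c^q)$ emerge precisely from the identity $2\equiv-1$ in characteristic $3$. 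Establishing this identity is the routine-but-delicate computation at the heart of the proof.

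Given the closed form, \cref{thm:main-mto1} reduces the theorem to: $f_r$ is \mfqtwo \ifa $m\mid q$ and the displayed cubic is \mfq. As its degree is $\le3$ and $m$ divides $q=3^n$, only $m\in\{1,3,q\}$ is possible; $m=q$ forces the cubic to be constant, i.e.\ $\alpha=\beta=\gamma=0$, giving case~(4). For $m\in\{1,3\}$ I split on $\alpha$. If $\alpha\ne0$ the cubic has degree exactly $3$; since $q=3^n\ge9$ gives $q\equiv0\pmod3$ and $q\ge7$, a nonzero $x^2$-coefficient is incompatible with being \onetoone or \thrtoone by \cref{deg3-1to1,deg3-3to1}, and the cubic is never \twotoone by \cref{deg3-2to1}, so $\beta=0$; then \cref{deg3-1to1,deg3-3to1} say that $b^{-2}\alpha x^3+b^{-1}\xi^{2k}\gamma x+(\text{const})$ is \onetoone (resp.\ \thrtoone) \ifa $\bigl(-b^{-3}\xi^{2k}\alpha\gamma\bigr)^{(q-1)/2}\ne1$ (resp.\ $=1$). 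Using $(\xi^{2k})^{(q-1)/2}=\xi^{(q-1)k}=b/a^q$ this rewrites as $(-\alpha\gamma)^{(q-1)/2}\ne a^qb^{(3q-5)/2}$ (resp.\ $=$), which are cases~(2) and~(3). If instead $\alpha=0$ the cubic has degree $\le2$; by \cref{deg2mto1} (recall $q$ is odd) it is never \thrtoone, it is \twotoone exactly when $\beta\ne0$ but then $2\nmid q$ rules out an \mtoone $f_r$, and it is \onetoone \ifa $\beta=0$ and its $x$-coefficient is nonzero. To close case~(1), note that when $\alpha=\beta=0$ the factorization of $\beta$ forces $a^qc=bc^q$ or $a^qA^3=bB^3$: in the first case $\gamma=0$ as well, so the cubic is constant; in the second, $\gamma=b^{-1}A^3(a^qc-bc^q)^2$, which is nonzero \ifa $a^qc\ne bc^q$ since $A\ne0$. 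Thus, under $\alpha=\beta=0$, the cubic is \onetoone \ifa $a^qc\ne bc^q$, which is case~(1).

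The hard part will be the bookkeeping of the closed-form computation: correctly reducing the large exponents $3r$ and $3qr$ modulo $q^2-1$, tracking the numerous powers of $\xi^{k}$, and verifying that the characteristic-$3$ cancellations produce exactly the factored quantities $\alpha,\beta,\gamma$ of the statement. Once that identity is pinned down, the remainder is a direct appeal to the degree-$\le3$ classification recorded in \cref{Sec:mto1}, parallel to \cref{thm:r=3,thm:r=q+2}.
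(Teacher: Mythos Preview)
Your proposal is correct and follows essentially the same route as the paper: cube $g_r$ (using that $x^3$ permutes $\fq$), reduce the exponents $3r$ and $3qr$ modulo $q^2-1$ to $q+2$ and $2q+1$, expand to the cubic $b^{-2}\alpha x^3+\xi^{k}b^{-2}\beta x^2+\xi^{2k}b^{-1}\gamma x+\text{const}$, and then invoke the degree-$3$ classification from \cref{Sec:mto1}. The paper shortens the expansion step by reusing the already-computed formulas for $B\xi^{k}C(x)$ and $(B\xi^{k}C(x))^q$ with $C(x)=(\xi^{-k}x+c)^{q+2}$ from the proof of \cref{thm:r=q+2}, and then simply says the remaining case analysis is ``similar to the proof of \cref{thm:r=3}''; your write-up spells that analysis out more fully, including the verification that under $\alpha=\beta=0$ one has $\gamma\ne0$ \ifa $a^qc\ne bc^q$.
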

\begin{proof}
Since $x^3$ permutes $\fq$, the polynomial $g_r(x)$ 
is \mfield{m}{\fq} \ifa so does $g_r^3(x)$.
Note that $B^q = (a^q/b) A$ and $\xi^{k-qk} = a^q/b$. 
Write $C(x) = (\xi^{-k} x + c)^{q+2}$. Then for $x \in \fq$, 
\begin{align*}
  g_r^3(x) 
  &= B^2\xi^{2k}(B\xi^k C(x)) + B^{2q}\xi^{2qk}(B\xi^k C(x))^q + (u^{q+1} - v^{q+1})^3 x^3  \\
  &= B^2\xi^{2k}(\cref{eq:bc1x} + \cref{eq:bc2x}) 
    + A^{2}\xi^{2k} \cref{eq:bcxq} + (u^{q+1} - v^{q+1})^3 x^3 \\ 
  &= b^{-2}\alpha x^3 + \xi^k b^{-2}\beta x^2 
    + \xi^{2k} b^{-1}\gamma x 
    + \xi^{3k} (A^3c^{2q+1} + B^3 c^{q+2}) \in \fqx.
\end{align*}
The remainder proof is similar to the proof of \cref{thm:r=3}.
\end{proof}

Switching the roles of $A$ and $B$ in \cref{thm:r=(2q+3)/3} yields the following result.

\begin{theorem}\label{thm:r=(q+3)/3}
Let $q = 3^n \ge 9$ and $a, b, c, u, v \in \f_{q^2}$ 
satisfy $a^{q+1} = b^{q+1}$ and $a v \ne b u$. Let 
\begin{align*}
    \alpha &= a^q(b A^3 + a^q B^3) + b^2(u^{q+1} - v^{q+1})^3, \\
    \beta  &= (a^q c - b c^q) (a^q B^3 - b A^3), \\
    \gamma &= (a^q c - b c^q) (A^3 c - B^3 c^q).
\end{align*}
If $r = (q^2 + 2q)/3$ and $1 \le m \le q$, then $f_{r}(x)$ is \mfqtwo \ifa 
one of the four cases in \cref{thm:r=(2q+3)/3} holds.
\end{theorem}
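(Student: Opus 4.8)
\noindent
The plan is to obtain this theorem directly from \cref{thm:r=(2q+3)/3}, exploiting the relation between the exponent $r_1 = (2q^2+q)/3$ of that theorem and the exponent $r_2 = (q^2+2q)/3$ of the present one. First I would verify the congruence $r_2 \equiv q r_1 \pmod{q^2-1}$. Since $q = 3^n$, both $r_1$ and $r_2$ lie in $\n$, and $\gcd(3, q^2-1) = 1$ because $q^2-1 \equiv 2 \pmod 3$. Now $3 r_2 = q^2 + 2q \equiv 1 + 2q \pmod{q^2-1}$ and $3 q r_1 = 2q^3 + q^2 \equiv 2q + 1 \pmod{q^2-1}$, so $3 r_2 \equiv 3 q r_1$, whence $r_2 \equiv q r_1 \pmod{q^2-1}$. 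As $y^{q^2} = y$ for every $y \in \f_{q^2}$, this yields the two identities $y^{r_2} = (y^{r_1})^{q}$ and $y^{q r_2} = y^{r_1}$, valid for all $y \in \f_{q^2}$, the case $y = 0$ being trivial since both exponents are positive.

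Next I would substitute $y = \xi^{-k} x + c$ into the polynomial $g_{r_2}(x)$ furnished by \cref{thm:main-mto1}. Using the two identities above,
\[
  g_{r_2}(x) = A \xi^{k}\, y^{r_1} + B \xi^{k}\, (y^{r_1})^{q} + (u^{q+1} - v^{q+1}) x, \qquad y = \xi^{-k} x + c,
\]
which is precisely the polynomial $g_{r_1}(x)$ appearing in the proof of \cref{thm:r=(2q+3)/3}, but with the roles of $A$ and $B$ interchanged. Since \cref{lem:abAB2} supplies the fully symmetric identities $a^{q} A = b B^{q}$, $a^{q} B = b A^{q}$, $A^{q+1} = B^{q+1}$ and $a b A B \neq 0$, every algebraic manipulation used in that proof remains valid after the swap $A \leftrightarrow B$. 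In particular, because $3 r_1 \equiv q+2 \pmod{q^2-1}$ and $x^3$ permutes $\fq$, the cube $g_{r_2}^{3}(x)$ is a polynomial of degree $\le 3$ whose coefficients are obtained from those computed for $g_{r_1}^{3}(x)$ in the proof of \cref{thm:r=(2q+3)/3} by interchanging $A$ and $B$; they are governed by exactly the quantities $\alpha$, $\beta$, $\gamma$ displayed in the present theorem, which are the images under $A \leftrightarrow B$ of those in \cref{thm:r=(2q+3)/3}, while the constant $a^{q} b^{(3q-5)/2}$ occurring in the conditions involves neither $A$ nor $B$ and is unchanged.

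Finally I would invoke \cref{thm:main-mto1}: $f_{r_2}(x)$ is \mtoone on $\f_{q^2}$ iff $m \mid q$ and $g_{r_2}(x)$ is \mtoone on $\fq$, and the latter is equivalent to $g_{r_2}^{3}(x)$ being \mtoone on $\fq$. Applying the classification of polynomials of degree $\le 3$ (\cref{deg2mto1,deg3-1to1,deg3-2to1,deg3-3to1}) to $g_{r_2}^{3}(x)$ exactly as in the proofs of \cref{thm:r=3,thm:r=(2q+3)/3} then yields the four stated equivalences.

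The only delicate points are the verification of the exponent congruence $r_2 \equiv q r_1 \pmod{q^2-1}$, where the hypothesis $q = 3^n$ is essential both to ensure $\gcd(3, q^2-1) = 1$ and to guarantee $r_1, r_2 \in \n$, and the bookkeeping check that the substitution $A \leftrightarrow B$ sends the triple $(\alpha, \beta, \gamma)$ of \cref{thm:r=(2q+3)/3} to the triple $(\alpha, \beta, \gamma)$ of the present statement. Both are routine, so no new difficulty arises beyond what is already treated in \cref{thm:r=(2q+3)/3}.
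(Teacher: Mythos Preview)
Your proposal is correct and follows essentially the same approach as the paper, which simply states that the result is obtained by switching the roles of $A$ and $B$ in \cref{thm:r=(2q+3)/3}. You have supplied the underlying justification---the congruence $r_2 \equiv q r_1 \pmod{q^2-1}$---that the paper leaves implicit (cf.\ the analogous remark preceding \cref{thm:r=2q}).
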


\cref{thm:r=(2q+3)/3} generalizes 
\cite[Proposition~1]{DZheng17}, \cite[Theorem~4.2~(4)]{WangWuL17}, 
and \cite[Corollary~6]{WuY22}.
\cref{thm:r=(q+3)/3} generalizes \cite[Theorem~4.1]{WangWuL17},
\cite[Proposition~3]{LLi18}, and \cite[Theorem~3.7]{Gupta18}. 
Indeed, these six results are in the special case that
$m = 1$, $a = -b = v = 1$, and $u = 0$ or~$1$.

\section{Degree 4}\label{sec:deg4}

This section deals with the case $g_r(x)$ or $g_r^2(x)$ 
behaves like a polynomial of degree~$4$. 
This case will happen if $r = 4$, $4q$, or $(q^2 + q + 2)/2$.
For simplicity of calculation, we assume $q$ is even.

\begin{theorem}\label{thm:r=4}
Let $q = 2^n \geq 4$ and $a, b, c, u, v \in \f_{q^2}$ satisfy
$a^{q+1} = b^{q+1}$ and $a v \ne b u$. Let
\begin{align*}
    \alpha & = a^{3q} B^{q} + b^3 B \qtq{and}
    \beta = u^{q+1} + v^{q+1}.
\end{align*}
If $r = 4$ and $1 \le m \le q$, then $f_{r}(x)$ is \mfield{m}{\fqtwo} 
\ifa one of the following holds:
\begin{enumerate}[\upshape(1)]
  \item $m = 1$, $\alpha = 0$, and $\beta \ne 0$;
  \item $m = 1$, $\alpha \ne 0$, and $\beta = 0$;
  \item $m = 1$, $\alpha \beta \neq 0$, $n$ is even, 
        and $(\beta / \alpha)^{\frac{q-1}{3}} \neq b / a$.
  \item $m = 2$, $\alpha \beta \ne 0$, and $n$ is odd;
  \item $m = 4$, $\alpha \beta \ne 0$, $n$ is even, 
        and $(\beta / \alpha)^{\frac{q-1}{3}} = b / a$;
  \item $m = q$, $\alpha = 0$, and $\beta = 0$.
\end{enumerate}
\end{theorem}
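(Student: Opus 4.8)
The plan is to invoke \cref{thm:main-mto1}: $f_4$ is \mfqtwo \ifa $m \mid q$ and the associated polynomial $g_4$ is \mfq. Since $q = 2^n$, the only admissible values are $m \in \{1, 2, 4, q\}$, and the hypothesis $q \ge 4$ (i.e.\ $n \ge 2$) makes $m = 4$ admissible. So everything reduces to determining the $m$-to-$1$ behaviour of $g_4$ on $\fq$, and the first step is to compute $g_4$ there explicitly. Because $q$ is even, $(\xi^{-k}x + c)^4 = \xi^{-4k}x^4 + c^4$, hence for $x \in \fq$ we have $(\xi^{-k}x + c)^{4q} = \xi^{-4qk}x^4 + c^{4q}$. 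Feeding this into the formula for $g_4$ and using $A\xi^k = (B\xi^k)^q$ and $\xi^{(1-q)k} = a^q/b$ from \cref{lem:abAB}, one finds
\[
g_4(x) = \xi^{-3k} b^{-3}\alpha\, x^4 + \beta\, x + \delta \qquad (x \in \fq),
\]
where $\alpha = a^{3q}B^q + b^3 B$, $\beta = u^{q+1} + v^{q+1}$, and $\delta = \tr_{q^2/q}(\xi^k B c^4)$. Here $\beta = \nm_{q^2/q}(u) + \nm_{q^2/q}(v) \in \fq$ and $\delta \in \fq$, so evaluating at $x = 1$ forces $\lambda_0 := \xi^{-3k}b^{-3}\alpha \in \fq$ as well, in accordance with $g_4 \in \fqx$.

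Since translating by the constant $\delta$ does not affect the $m$-to-$1$ property, $g_4$ is \mfq \ifa the linearized binomial $L(x) := \lambda_0 x^4 + \beta x$ is. As $L$ is an $\f_2$-linear endomorphism of $\fq$, \cref{thm:EndKer} (or, equivalently, \cref{thm:mto1_bxqs-cxqt} applied over the prime field with $s = 2$ and $t = 0$) shows that $L$ is \mfield{\# \ker(L)}{\fq} with empty exceptional set, where, when $\lambda_0 \ne 0$, $\ker(L) = \{0\} \cup \{x \in \fqstar : x^3 = \beta/\lambda_0\}$. A short case split computes $\# \ker(L)$: if $\alpha = 0$ then $L(x) = \beta x$, which is \onetoone for $\beta \ne 0$ and identically zero for $\beta = 0$ (so $g_4$ is constant and $f_4$ is \mfield{q}{\fqtwo}); if $\alpha \ne 0$ and $\beta = 0$ then $L(x) = \lambda_0 x^4$ is \onetoone; and if $\alpha\beta \ne 0$ then $\# \ker(L) = 1 + N$, where $N$ counts the roots of $x^3 = \beta/\lambda_0$ in $\fqstar$. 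Since $x^3$ is \mfield{(3, q-1)}{\fqstar} by \cref{xnFq}, we get $N = 1$ when $n$ is odd, whereas for $n$ even $N = 3$ if $\beta/\lambda_0$ is a cube in $\fqstar$ and $N = 0$ otherwise, so that $\# \ker(L)$ equals $2$, $4$, or $1$, respectively.

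The last step is to express the cube condition in the stated form. From $\lambda_0 = \xi^{-3k}b^{-3}\alpha$ one gets $(\beta/\lambda_0)^{(q-1)/3} = (\beta/\alpha)^{(q-1)/3}(\xi^k b)^{q-1}$, and since $a^{q+1} = b^{q+1}$ gives $(b/a)^{q+1} = 1$, we have $(\xi^k b)^{q-1} = \xi^{(q-1)k}b^{q-1} = (b a^{-q})b^{q-1} = (b/a)^q = (b/a)^{-1}$. Hence $\beta/\lambda_0$ is a cube in $\fqstar$, i.e.\ $(\beta/\lambda_0)^{(q-1)/3} = 1$, \ifa $(\beta/\alpha)^{(q-1)/3} = b/a$. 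Putting the cases together with \cref{thm:main-mto1} yields exactly items (1)--(6). I expect the only real obstacle to be bookkeeping: carefully tracking which quantities lie in $\fq$ rather than $\fqtwo$ while reducing $g_4$, and converting ``$\beta/\lambda_0$ is a cube'' into the clean identity $(\beta/\alpha)^{(q-1)/3} = b/a$; once the degenerate cases $\alpha = 0$ and $\beta = 0$ are peeled off, the remainder is a routine kernel count.
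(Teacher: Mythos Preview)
Your proof is correct and follows essentially the same approach as the paper: reduce via \cref{thm:main-mto1} to the linearized binomial $g_4(x) = \xi^{-3k}b^{-3}\alpha\, x^4 + \beta x + \delta$ on $\fq$, then classify its $m$-to-$1$ behaviour by counting roots of $x^3 = \beta/\lambda_0$, with the identity $(\xi^{-3k}b^{-3})^{(q-1)/3} = b/a$ converting the cube condition into the stated form. The paper packages the kernel count by a direct citation of \cref{thm:mto1_bxqs-cxqt}, whereas you spell out the case split (and the degenerate cases $\alpha = 0$, $\beta = 0$) a bit more explicitly; the only cosmetic slip is that your early claim ``the only admissible values are $m \in \{1,2,4,q\}$'' is not a consequence of $m \mid q$ alone but of the subsequent kernel analysis, which you do supply.
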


\begin{proof}
A computation similar to that in the proof of \cref{thm:r=3} gives that
\[
    g_r(x) = \xi^{-3k} b^{-3} \alpha x^4 + \beta x + \gamma,
\]
where $\gamma = \xi^k B c^4 + (\xi^k B c^4)^q$. Note that
\[
  (\xi^{-3k}b^{-3})^{\frac{q-1}{3}} 
  = \xi^{(1-q)k} b^{1-q} 
  = (a^q/b)(b/b^q) 
  = a^q / b^q 
  = b/a. 
\]
If $\alpha \beta \neq 0$, then 
$\nm_{2^n / 2^{(2, n)}} (\beta / (\xi^{-3k} b^{-3} \alpha) ) = 1$
\ifa $n$ is odd, or $n$ is even and 
$(\beta / \alpha)^{\frac{q-1}{3}} = b / a$. 
Then the result follows from \cref{thm:main-mto1,thm:mto1_bxqs-cxqt}. 
\end{proof}

Switching the roles of $A$ and $B$ in \cref{thm:r=4} yields the next result.

\begin{theorem}\label{thm:r=4q}
Let $q = 2^n \geq 4$ and $a, b, c, u, v \in \f_{q^2}$ satisfy
$a^{q+1} = b^{q+1}$ and $a v \ne b u$. Let
\begin{align*}
    \alpha & = a^{3q} A^{q} + b^3 A \qtq{and}
    \beta = u^{q+1} + v^{q+1}.
\end{align*}
If $r = 4 q$ and $1 \le m \le q$, then $f_{r}(x)$ is \mfield{m}{\fqtwo} 
\ifa one of the six cases in \cref{thm:r=4} holds.
\end{theorem}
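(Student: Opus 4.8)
The plan is to reduce $g_{4q}$ directly to the linearized polynomial already analyzed in the proof of \cref{thm:r=4}, with the letters $A$ and $B$ interchanged. By \cref{thm:main-mto1}, for $x \in \fq$ we have
\[
g_{4q}(x) = A\xi^{k}(\xi^{-k}x + c)^{4q^{2}} + B\xi^{k}(\xi^{-k}x + c)^{4q} + (u^{q+1} - v^{q+1})x.
\]
Since the variable $x$ ranges over the subfield $\fq$, the element $\xi^{-k}x + c$ lies in $\fqtwo$, so $(\xi^{-k}x + c)^{q^{2}} = \xi^{-k}x + c$ and hence $(\xi^{-k}x + c)^{4q^{2}} = (\xi^{-k}x + c)^{4}$. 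Substituting, for $x \in \fq$,
\[
g_{4q}(x) = A\xi^{k}(\xi^{-k}x + c)^{4} + B\xi^{k}(\xi^{-k}x + c)^{4q} + (u^{q+1} - v^{q+1})x,
\]
which is precisely the expression for $g_{4}(x)$ from the proof of \cref{thm:r=4} after the exchange $A \leftrightarrow B$.

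Next I would note that the computation in the proof of \cref{thm:r=4} rests only on the identity $A\xi^{k} = (B\xi^{k})^{q}$, the fact $abAB \neq 0$, and $\xi^{(1-q)k} = a^{q}/b$. By \cref{lem:abAB}(4) the companion identity $B\xi^{k} = (A\xi^{k})^{q}$ also holds, and by \cref{lem:abAB2} each of $a$, $b$, $A$, $B$ is nonzero; thus every ingredient is symmetric under interchanging $A$ and $B$. Running the same calculation with $B$ replaced by $A$ throughout therefore yields, for $x \in \fq$,
\[
g_{4q}(x) = \xi^{-3k}b^{-3}\bigl(a^{3q}A^{q} + b^{3}A\bigr)x^{4} + (u^{q+1} + v^{q+1})x + \gamma = \xi^{-3k}b^{-3}\alpha\,x^{4} + \beta\,x + \gamma
\]
for some constant $\gamma \in \fq$, with $\alpha$ and $\beta$ as in the statement.

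Finally, as in the proof of \cref{thm:r=4}, I would use $(\xi^{-3k}b^{-3})^{(q-1)/3} = b/a$ to rewrite the norm condition $\nm_{2^{n}/2^{(2,n)}}\bigl(\beta/(\xi^{-3k}b^{-3}\alpha)\bigr) = 1$ as the dichotomy ``$n$ odd, or $n$ even and $(\beta/\alpha)^{(q-1)/3} = b/a$'', and then invoke \cref{thm:main-mto1} together with \cref{thm:mto1_bxqs-cxqt} (adding the constant $\gamma$ does not affect the many-to-one type, and $\xi^{-3k}b^{-3}\alpha\,x^{4} + \beta\,x$ is a linearized binomial over $\fq$) to read off the six cases. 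I do not expect a genuine obstacle here: this is the same exponent/conjugation duality between $r$ and $qr$ already used for \cref{thm:r=2q} and \cref{thm:r=3q}, and the only step meriting care is $(\xi^{-k}x + c)^{4q^{2}} = (\xi^{-k}x + c)^{4}$, valid precisely because $x$ lives in $\fq$ and not in $\fqtwo$.
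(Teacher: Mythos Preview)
Your proposal is correct and follows exactly the approach the paper takes: the paper's entire proof is the single sentence ``Switching the roles of $A$ and $B$ in \cref{thm:r=4} yields the next result,'' and you have spelled out precisely why that substitution is legitimate. One small remark: the identity $(\xi^{-k}x+c)^{q^{2}}=\xi^{-k}x+c$ holds for every $x\in\fqtwo$ (since $\xi^{-k}x+c\in\fqtwo$), not only for $x\in\fq$, so your closing comment slightly misattributes the reason---but this does not affect the argument.
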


We next consider the case $r = (q^2 + q + 2)/2$. First we need the following lemma. 

\begin{lemma}[{\cite[$\S$71]{Dickson1896}}]\label{x4a1=0}
  Let $f(x) = x^4 + a_3 x^3 + a_2 x^2 + a_1 x \in \ftwonx$ with $n \geq 3$. 
  If $f(x)$ is \mfield{1}{\ftwon}, then $a_3 = 0$.
\end{lemma}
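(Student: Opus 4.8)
The plan is to argue by contradiction for $n \ge 3$: assume $a_3 \ne 0$ and that $f$ permutes $\ftwon$, and extract a contradiction. First I would translate injectivity into a trace condition through a collision analysis. For $x \ne y$ set $s = x + y \in \ftwonstar$; since $x^3 + y^3 = s(s^2 + xy)$ in characteristic~$2$, the equation $f(x) = f(y)$ becomes, after dividing by $s$,
\[
 a_3\, xy = s^3 + a_3 s^2 + a_2 s + a_1,
\]
so that $xy = t(s)$ with $t(s) = (s^3 + a_3 s^2 + a_2 s + a_1)/a_3$ (here $a_3 \ne 0$ is used). As $x,y$ are precisely the two roots of $z^2 + s z + t(s)$ and $s \ne 0$, \cref{lem:ax2+bx+c} tells me that a collision with this value of $s$ exists if and only if $\trtnt(t(s)/s^2) = 0$. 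Hence $f$ permutes $\ftwon$ if and only if $\trtnt(t(s)/s^2) = 1$ for every $s \in \ftwonstar$.

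Next I would simplify this condition. Expanding
\[
 \frac{t(s)}{s^2} = \frac{s}{a_3} + 1 + \frac{a_2}{a_3 s} + \frac{a_1}{a_3 s^2}
\]
and using $\trtnt(w^2) = \trtnt(w)$ to rewrite $a_1/(a_3 s^2)$ as $\sqrt{a_1/a_3}\,/s$ (with $\sqrt{\,\cdot\,}$ the square root in $\ftwon$), the whole requirement collapses to
\[
 \trtnt\!\Big( c s + \frac{d}{s} \Big) = 1 + \trtnt(1)
 \qquad \text{for all } s \in \ftwonstar,
\]
where $c = 1/a_3 \ne 0$ and $d = a_2/a_3 + \sqrt{a_1/a_3}$. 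In short, $\trtnt(cs + d/s)$ would have to be constant on $\ftwonstar$.

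Finally I would show this constancy is impossible once $c \ne 0$ and $n \ge 3$. If $d = 0$ the condition reads $\trtnt(cs) \equiv \text{const}$ on $\ftwonstar$; but for $n \ge 2$ the kernel of the nonzero functional $w \mapsto \trtnt(cw)$ contains nonzero elements while its complement is nonempty, so $\trtnt(cs)$ takes both values, a contradiction. If $d \ne 0$, constancy would force the Kloosterman sum $\sum_{s \in \ftwonstar} (-1)^{\trtnt(cs + d/s)}$ to have absolute value $2^n - 1$, whereas the Weil bound (see \cite{FF}) bounds it by $2\cdot 2^{n/2}$; since $2^n - 1 > 2\cdot 2^{n/2}$ exactly when $n \ge 3$, this is again a contradiction. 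Either way $a_3 \ne 0$ is untenable, so $a_3 = 0$.

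The hard part is the case $d \ne 0$: purely elementary consequences of the constancy condition are not enough. For instance, summing $\trtnt(t(s)/s^2)$ over $s \in \ftwonstar$ yields $\trtnt(1)$ modulo~$2$, which contradicts the required parity $2^n - 1 \equiv 1$ only when $n$ is even; the case $n$ odd genuinely needs a nontrivial estimate for an exponential sum, and the Weil bound for Kloosterman sums supplies it uniformly while also explaining why $n \ge 3$ is the precise threshold.
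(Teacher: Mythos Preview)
Your argument is correct. The paper does not prove this lemma at all; it simply quotes it from Dickson's 1896 classification, so there is no in-paper proof to compare against. Your collision analysis---reducing injectivity of $f$ to the constancy of $\trtnt(cs + d/s)$ on $\ftwonstar$, then dispatching $d=0$ by the surjectivity of the nonzero linear form $s\mapsto\trtnt(cs)$ and $d\ne 0$ by the Weil bound $|K|\le 2\cdot 2^{n/2}$ for the Kloosterman sum---is valid and self-contained, and it neatly explains why $n\ge 3$ is the exact threshold. One remark: Dickson's 1896 proof necessarily predates the Weil estimates, so his original route was elementary (via Hermite's criterion on the reduced power sums $\sum_{x}f(x)^t$); your approach is therefore genuinely different---more conceptual and uniform, at the cost of invoking a deeper analytic input for the $d\ne 0$ case.
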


Now we characterize when $f_r(x)$ is a \onetoone or \twotoone mapping. 
\begin{theorem}\label{thm:r=(q+1)q/2+1}
Let $q = 2^n \geq 4$ and $a, b, c, u, v \in \f_{q^2}$ satisfy
$a^{q+1} = b^{q+1}$ and $a v \ne b u$. Let
\begin{align*}
    \alpha & = B + B^q, \quad
    \beta = a^q c + b c^q, \quad
    \gamma = u^{q+1} + v^{q+1}, \\
    \delta &= c^{q+1} \alpha^2 + a^{q}b^{-1}(A c^q + B c)^2
            + \gamma^2.
\end{align*}
If $r = (q^2 + q + 2)/2$, then $f_{r}(x)$ 
is \mfield{1}{\fqtwo} \ifa one of the following holds:
\begin{enumerate}[\upshape(1)]
  \item $\alpha = 0$, $\beta = 0$,   and $\gamma \ne 0$;
  \item $\alpha = 0$, $\beta \ne 0$, and $\delta = 0$;
  \item $\alpha \ne 0$, $\beta = 0$, and $\gamma = 0$.
\end{enumerate}
If $r = (q^2 + q + 2)/2$, then $f_r(x)$ is \mfield{2}{\fqtwo} 
\ifa one of the following holds:
\begin{enumerate}[resume*]
  \item $\alpha = 0$, $\beta \ne 0$, and $\delta \ne 0$;
  \item $\alpha \ne 0$, $\beta = 0$, and $\gamma \ne 0$;
  \item $\alpha \ne 0$, $\beta \ne 0$, 
        $b\delta = \alpha \beta (Ac^q + Bc)$, and $n$ is odd.
\end{enumerate}
\end{theorem}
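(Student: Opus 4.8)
The plan is to follow the template of the other theorems in this section. By \cref{thm:main-mto1}, since only $m\in\{1,2\}$ are at issue (and then $m\mid q$ automatically), $f_r(x)$ is \mfield{m}{\fqtwo} \ifa $g_r(x)$ is \mfield{m}{\fq}, and since $q$ is even, $x^2$ permutes $\fq$, so equivalently $g_r^2(x)$ is \mfield{m}{\fq}. First I would compute $g_r^2(x)$ explicitly. Writing $y=\xi^{-k}x+c$ and using $y^{q^2}=y$, the exponents collapse to $y^{2r}=y^{q+1}y^2$ and $y^{2qr}=y^{q+1}y^{2q}$, so in characteristic~$2$
\[
  g_r^2(x)=\xi^{2k}\,y^{q+1}\bigl(Ay^q+By\bigr)^2+\gamma^2x^2 .
\]
Using $a^qA=bB^q$ from \cref{lem:abAB2} one checks $Ay^q+By=\xi^{-k}\alpha x+(Ac^q+Bc)$ with $\alpha=B+B^q$, while $y^{q+1}$ is the quadratic in~$x$ given by \cref{eq:qx}. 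Multiplying out, $g_r^2(x)\in\fqx$ is an explicit quartic whose coefficient of $x^4$ is a nonzero scalar times $\alpha^2$, whose coefficient of $x^3$ is a nonzero scalar times $\beta\alpha^2$, whose coefficient of $x^2$ is exactly $\delta$, and whose coefficient of $x$ is a nonzero scalar times $\beta(Ac^q+Bc)^2$.

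The second step is the algebraic input that makes the cases align with the statement. Since $x\mapsto\xi^k(Ax^q+Bx)$ is a trace map onto $\fq$, the element $\xi^k(Ac^q+Bc)$ lies in $\fq$. The key identity is that \emph{whenever $\beta=0$} (that is, $a^qc=bc^q$), \cref{lem:abAB2}(3) yields $Ac^q=B^qc$, hence $Ac^q+Bc=c\alpha$; substituting this into the definition of $\delta$ collapses it to $\delta=\gamma^2$. I would record this before the case analysis, as it is exactly what forces the $\beta=0$ branches to be governed by $\gamma$ rather than $\delta$. (Companion remark: when $\alpha=0$ one has $A=Bb/a^q$, so $Ac^q+Bc=B\beta/a^q$, vanishing precisely when $\beta=0$.)

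The third step is a short case analysis on $(\alpha,\beta)$. If $\alpha=0$, then $g_r^2(x)$ has degree $\le2$: for $\beta=0$ it equals $\gamma^2x^2+\mathrm{const}$, and for $\beta\ne0$ it equals $\delta x^2+(\text{nonzero})x+\mathrm{const}$; \cref{deg2pps,deg2-2to1} then produce exactly cases (1), (2), (4) and confirm no other \onetoone or \twotoone behaviour occurs with $\alpha=0$. If $\alpha\ne0$, then $g_r^2(x)$ has degree $4$; normalising to $x^4+a_3x^3+a_2x^2+a_1x$ gives $a_3$ a nonzero multiple of $\beta$, $a_2$ a nonzero multiple of $\delta$, and $a_1$ a nonzero multiple of $\beta(Ac^q+Bc)^2$. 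When $\beta=0$ this is $x^4+a_2x^2$ with $a_2$ a nonzero multiple of $\delta=\gamma^2$, which under the permutation $z=x^2$ becomes the $\f_2$-linear map $z^2+a_2z$, so by \cref{thm:EndKer} (equivalently \cref{deg4-2to1-p=2}(1) together with $x^4$ being a PP) it is \onetoone iff $\gamma=0$ and \twotoone iff $\gamma\ne0$, giving cases (3) and (5). When $\beta\ne0$ we have $a_3\ne0$, so \cref{x4a1=0} rules out \onetoone-ness for $n\ge3$, and among the three possibilities of \cref{deg4-2to1-p=2} only its case (3) survives, namely $n$ odd and $a_2^2=a_1a_3$; taking the unique square root in characteristic~$2$ rewrites $a_2^2=a_1a_3$ as $b\delta=\alpha\beta(Ac^q+Bc)$, which is case (6). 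Collecting these proves the theorem.

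The part I expect to be most delicate is the bookkeeping that converts the normalised-quartic conditions ($a_3=0$, $a_1=0$, $a_2=0$, $a_2^2=a_1a_3$) back into the stated $\alpha,\beta,\gamma,\delta$ conditions, with $\delta=\gamma^2$ for $\beta=0$ as the linchpin. A secondary issue is the edge case $q=4$ ($n=2$), where \cref{x4a1=0} does not apply: there one checks directly that when $\alpha\ne0$ and $\beta\ne0$ the reduced quartic is not a permutation of $\f_4$, a short finite computation that goes through because $a_3$, a nonzero multiple of $\beta$, is nonzero.
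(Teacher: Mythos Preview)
Your proposal is correct and follows essentially the same route as the paper: reduce to $g_r^2(x)$ via the permutation $x^2$, compute it as an explicit quartic in $x$ (your factorization $g_r^2(x)=\xi^{2k}y^{q+1}(Ay^q+By)^2+\gamma^2x^2$ is a tidy variant of the paper's direct expansion of $(\xi^{-k}x+c)^{q+3}$, but yields the same coefficients $e_4,\dots,e_1$), and then run the identical case split on $(\alpha,\beta)$ using \cref{deg2pps,deg2-2to1,x4a1=0,deg4-2to1-p=2}. For the edge case $q=4$ the paper is slightly more explicit than your ``short finite computation'': it observes that $x^4=x$ on $\f_4$ reduces $g_r^2$ to a cubic with leading coefficient $e_3\ne0$, and since $4\equiv1\pmod 3$ no degree-$3$ polynomial permutes $\f_4$ by \cref{deg3-1to1}.
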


\begin{proof}
Since $x^2$ permutes $\fq$, the polynomial $g_{r}(x)$ is \mfq \ifa $g_{r}^2(x)$ is \mfq.
Note that $2r \equiv q+3 \pmod{q^2-1}$. Then for $x \in \fq$,  
\begin{align}
  S(x) & := (\xi^{-k} x + c)^{q+3}  \notag\\
  &= (\xi^{-k} x + c)^q (\xi^{-k} x + c)^3 \notag\\
  &= (\xi^{-qk} x + c^q) (\xi^{-3k} x^3 
     + \xi^{-2k} c x^2 + \xi^{-k} c^2 x + c^3 ) \notag\\
  &=   \xi^{-qk-3k}                          x^4 
    + (\xi^{-qk-2k} c   + \xi^{-3k} c^q )  x^3 
                        + \xi^{-2k} c^{q+1}  x^2  \label{s1x}\\ &\phantom{{}={}}
    +  \xi^{-qk-k}  c^2                      x^2 
    + (\xi^{-qk}    c^3 + \xi^{-k}  c^{q+2}) x 
                        +           c^{q+3},    
    \label{s2x} \\[6pt]
  B^2 \xi^{2k} S(x) 
  &=   \xi^{-qk-k}                        B^2 x^4 
    + (\xi^{-qk}  c   + \xi^{-k} c^{q}  ) B^2 x^3 
                       +         c^{q+1}  B^2 x^2  \label{bs1x} \\ &\phantom{{}={}}
    + \xi^{k-qk}                     B^2 c^2 x^2
    + \xi^{k} (\xi^{k-qk} c + c^{q}) B^2 c^2 x 
    + \xi^{2k} B^2 c^{q+3}.  \label{bs2x} 
\end{align}
By \cref{lem:abAB}, $A \xi^{k} = (B \xi^{k})^q$, 
and so $A^2 \xi^{2k} = B^{2q} \xi^{2qk}$. Hence
\begin{align*}
  A^2 \xi^{2k} S(x)^q 
  &= B^{2q} \xi^{2qk} \cref{s1x}^q + A^2 \xi^{2k}  \cref{s2x}^q 
  = \cref{bs1x}^q + A^2 \xi^{2k} \cref{s2x}^q \\
  &=   \xi^{-k-qk} B^{2q} x^4 
    + (\xi^{-k} c^q + \xi^{-qk} c) B^{2q} x^3 
                         + c^{q+1} B^{2q} x^2 \\ &\phantom{{}={}}
    + \xi^{k-qk}  A^2 c^{2q}              x^2  
    + \xi^{k} (c^{q} + \xi^{k-qk} c) A^2 c^{2q} x 
                        + \xi^{2k} A^2 c^{3q+1}.
\end{align*}
Note that $\xi^{k-qk} = a^q b^{-1}$. 
Then for $x \in \fq$, a direct calculation yields that
\begin{align*}
  g_r^2(x) 
  &= B^2 \xi^{2k} S(x) + A^2 \xi^{2k} S(x)^q 
      + (u^{q+1} - v^{q+1})^2 x^2 \\
  &= e_4 x^4 + e_3 x^3 
     + e_2 x^2 + e_1 x + e_0 \in \fqx,
\end{align*}
where 
\begin{align*}
  e_4 &= \xi^{-qk-k} (B + B^q)^2,  \\
  e_3 &= \xi^{-k} b^{-1} (a^q c + b c^q) (B + B^q)^2, \\
  e_2  &= c^{q+1} (B + B^q)^2 + a^{q}b^{-1}(A c^q + B c)^2
            + (u^{q+1} - v^{q+1})^2, \\
  e_1 &= \xi^{k} b^{-1} (a^q c + b c^q) (A c^q + B c)^2, \\
  e_0 &= \xi^{2k} B^2 c^{q+3} + \xi^{2k} A^2 c^{3q+1}.
\end{align*}
By \cref{lem:abAB2}, $A c^q = B^q c$ \ifa $a^q c = b c^q$.
In particular, if $B^q = B$, then $e_1 = 0$ \ifa $a^q c = b c^q$. 
Next we characterize the permutation property of $g_r^2(x)$.

If $B^q = B$, then $e_4 = e_3 = 0$, and so 
$g_r^2(x) = e_2 x^2 +  e_1 x + e_0$.
By \cref{deg2pps}, $g_r^2(x)$ is \mfield{1}{\fq} \ifa 
exactly one of $e_1$ and $e_2$ is 0, i.e., (1) or (2) holds.

If $B^q \ne B$ and $a^q c = b c^q$, 
then $e_4 \ne 0$, $e_3 = e_1 = 0$, and 
\begin{align*}
  e_2 & = c^{q+1} (B + B^q)^2 + a^{q}b^{-1}(B^q c + B c)^2            
            + (u^{q+1} - v^{q+1})^2   \\
      & = c^{q+1} (B + B^q)^2 + a^{q}b^{-1}c^2(B^q + B)^2 
            + (u^{q+1} - v^{q+1})^2   \\
      & = c^{q+1} (B + B^q)^2 + c^{q+1}(B^q + B)^2 
            + (u^{q+1} - v^{q+1})^2   \\
      &= (u^{q+1} - v^{q+1})^2.
\end{align*}
Hence $g_r^2(x) - e_0 = e_4 x^4 + e_2 x^2 
= (e_4 x^2 + (u^{q+1} - v^{q+1})^2 x) \circ x^2$. 
Thus $g_r^2(x)$ is \mfield{1}{\fq} \ifa 
$e_4 x^2 + (u^{q+1} - v^{q+1})^2 x$ does, 
i.e., $u^{q+1} = v^{q+1}$.

If $B^q \neq B$ and $a^q c \ne b c^q$, 
then $e_4 \ne 0$ and $e_3 \ne 0$. 
When $q \geq 8$, $g_r^2(x)$ is not \mfield{1}{\fq} by \cref{x4a1=0}. 
When $q=4$, $g_r^2(x) = e_3 x^3 + e_2 x^2 + (e_4 + e_1) x + e_0$ 
for $x \in \f_4$, which is not \mfield{1}{\f_4} by \cref{deg3-1to1}.  
This completes the first part of the proof. 

The proof of the characterization of these \twotoone mappings  
follows from \cref{deg2-2to1,deg4-2to1-p=2}.
\end{proof}

\cref{thm:r=(q+1)q/2+1} is a generalization and unification of \cite[Theorem~3~(2)]{TuZJ1531}
(where $m =  a = b = v = 1$, $u = 0$, and $c + c^q = 0$ or~$1$) and
\cite[Corollary~2]{WuY22} (where $m =  a = 1$ and $u=0$).

\section{Linearized binomials}\label{sec:binomials}

This section treats the case $g_r(x)$ behaves like a linearized binomial, 
which happens if $r = p^s$ or $p^s + 1$.

\begin{theorem}\label{r=ps}
Let $a, b, c, u, v \in \f_{q^2}$ satisfy 
$a^{q+1} = b^{q+1}$ and $a v \ne b u$. 
Let $q = p^n$, $r = p^s$, and $d = p^{(s, n)} - 1$, 
where $n \geq 1$, $s \geq 0$, 
and $p$ is the characteristic of $\f_{q^2}$. Write  
\[
  \alpha = (a^q/b)^{r} A + B
  \qtq{and} 
  \beta = u^{q+1} - v^{q+1}.
\]
If $1 \le m \le q$, then $f_{r}(x)$ is \mfqtwo \ifa 
one of the following holds:
\begin{enumerate}[\upshape(1)]
  \item $m = 1$, $\alpha = 0$, and $\beta \ne 0$;
  \item $m = 1$, $\alpha \neq 0$, and 
    $(-\beta/\alpha)^{\frac{q-1}{d}}
    \neq (a^q/b)^{\frac{p^s-1}{d}}$;
  \item $m = p^{(s, n)}$, $\alpha \neq 0$, and 
    $(-\beta/\alpha)^{\frac{q-1}{d}}
    = (a^q/b)^{\frac{p^s-1}{d}}$;
  \item $m = q$, $\alpha = 0$, and $\beta = 0$;
\end{enumerate}
\end{theorem}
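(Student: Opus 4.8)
The plan is to invoke \cref{thm:main-mto1} and then analyse the associated polynomial $g_r(x)$ directly, using that raising to the power $r = p^s$ is additive in characteristic $p$. By \cref{thm:main-mto1}, for $1 \le m \le q$ the map $f_r(x)$ is \mfqtwo if and only if $m \mid q$ and
\[
  g_r(x) = A \xi^{k} (\xi^{-k} x + c)^{q r} + B \xi^{k} (\xi^{-k} x + c)^{r} + \beta x
\]
is \mfq, where $\beta = u^{q+1} - v^{q+1}$. Since each value of $m$ arising below (namely $1$, $p^{(s,n)}$, and $q$) divides $q = p^n$, the divisibility condition holds automatically, so it remains only to settle the \mtoone behaviour of $g_r$ on $\fq$.

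First I would reduce $g_r$ to a linearized binomial. For $x \in \fq$ one has $x^q = x$, hence $(\xi^{-k}x+c)^{qr} = (\xi^{-qk}x+c^q)^{p^s} = \xi^{-qkp^s} x^{p^s} + c^{qp^s}$ and $(\xi^{-k}x+c)^{r} = \xi^{-kp^s} x^{p^s} + c^{p^s}$ by additivity of $y \mapsto y^{p^s}$. Collecting terms and using $\xi^{(1-q)k} = a^q/b$ (established in the proof of \cref{thm:main_gen}), the coefficient of $x^{p^s}$ equals $\xi^{k(1-p^s)}\bigl((a^q/b)^{p^s}A + B\bigr) = \xi^{k(1-p^s)}\alpha$, so that for all $x \in \fq$
\[
  g_r(x) = \xi^{k(1-p^s)}\alpha\, x^{p^s} + \beta x + \delta, \qquad \delta = \xi^{k}(A c^{qp^s} + B c^{p^s}).
\]
Since $g_r \in \fqx$, the coefficient $\xi^{k(1-p^s)}\alpha$ and the constant $\delta$ lie in $\fq$; and because adding a constant does not affect the \mtoone property, $g_r$ is \mfq if and only if $L(x) := \xi^{k(1-p^s)}\alpha\, x^{p^s} + \beta x$ is \mfq.

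Next I would apply \cref{thm:mto1_bxqs-cxqt} to $L(x)$, read with base field $\f_p$: that is, with ground field size $p$, extension degree $n$, and exponents $s$ and $0$. When $\alpha = 0$, $L(x) = \beta x$, which is \mfield{1}{\fq} if $\beta \ne 0$ (case (1)) and constant, i.e.\ \mfield{q}{\fq}, if $\beta = 0$ (case (4)). When $\alpha \ne 0$, \cref{thm:mto1_bxqs-cxqt} gives that $L$ is \mfield{1}{\fq} if $\nm_{p^n/p^{(s,n)}}\bigl(-\beta/(\xi^{k(1-p^s)}\alpha)\bigr) \ne 1$ and \mfield{p^{(s,n)}}{\fq} otherwise. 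The last task is to rewrite this norm, which equals $\bigl(-\beta/(\xi^{k(1-p^s)}\alpha)\bigr)^{(q-1)/d}$, in the stated form; the key identity is $\xi^{k(1-p^s)(q-1)/d} = (a^q/b)^{(p^s-1)/d}$, which follows by writing $p^s - 1 = d e$ with $e = (p^s-1)/d \in \z$ (an integer since $p^{(s,n)} - 1 \mid p^s - 1$), whence $\xi^{k(1-p^s)(q-1)/d} = \xi^{-d e k (q-1)/d} = (\xi^{(1-q)k})^{e} = (a^q/b)^{(p^s-1)/d}$. Hence $\nm(\cdot) \ne 1$ becomes $(-\beta/\alpha)^{(q-1)/d} \ne (a^q/b)^{(p^s-1)/d}$, which is case (2), and equality is case (3). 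Combining the four cases through \cref{thm:main-mto1} completes the proof.

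The work here is bookkeeping rather than conceptual: the one place demanding care is the exponent arithmetic that converts $\nm_{p^n/p^{(s,n)}}$ into the clean conditions in (2)--(3), where the divisibility $p^{(s,n)} - 1 \mid p^s - 1$ — which makes $(p^s-1)/d$ an integer and lets the powers of $\xi$ collapse onto a power of $a^q/b$ — is the crucial ingredient; one should also keep in mind that \cref{thm:mto1_bxqs-cxqt} must be applied with base field $\f_p$ rather than $\fq$.
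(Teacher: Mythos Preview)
Your proof is correct and follows essentially the same approach as the paper: reduce via \cref{thm:main-mto1}, use additivity of $y\mapsto y^{p^s}$ to write $g_r(x)=\xi^{(1-r)k}\alpha\,x^{p^s}+\beta x+\delta$, and then apply \cref{thm:mto1_bxqs-cxqt} over the prime field together with the identity $(\xi^{(1-r)k})^{(q-1)/d}=(a^q/b)^{(p^s-1)/d}$. Your explicit handling of the $\alpha=0$ case and the divisibility $p^{(s,n)}-1\mid p^s-1$ is slightly more detailed than the paper's version, but the argument is the same.
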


\begin{proof}
Recall that $\xi^{(1-q)k} = a^q/b$. Then 
$A \xi^{k} \xi^{-qkr} + B \xi^{k} \xi^{-kr} 
= \xi^{(1-r)k} \alpha$ and so
\begin{align*}
    g_r(x) 
    & =   A \xi^{k} (\xi^{-qk} x + c^{q})^{r}
        + B \xi^{k} (\xi^{-k}  x + c)^{r}  
        + \beta x \\
    & = \xi^{(1-r)k} \alpha x^{r} + \beta x + \xi^k(Ac^{q r} + Bc^{r}).   
\end{align*}
Note that $(\xi^{(1-r)k})^{\frac{q-1}{d}} 
= (\xi^{(q-1)k})^\frac{1-r}{d} = (a^q/b)^{\frac{r-1}{d}}$. 
Then the result follows from \cref{thm:mto1_bxqs-cxqt}. 
\end{proof}

\cref{r=ps} generalizes \cite[Theorem~4.12]{WuY24} 
where $m = a = 1$, $u = 0$, and $2 \mid q$.
It is also a generalization of \cref{thm:r=4,thm:r=4q} 
where $q$ is even and $r = 4$ or $4q$.

\begin{theorem}\label{r=ps+1}
Let $a, b, c, u, v \in \f_{q^2}$ satisfy $a^{q+1} = b^{q+1}$ and $a v \ne b u$. 
Let $q = p^n$, $r = p^s + 1$, $d = p^{(s, n)} - 1$, and $(a^q/b)^{r} A = - B$, 
where $n \geq 1$, $s \geq 0$, and $p$ is the characteristic of $\f_{q^2}$. Write 
\[
  \alpha = B(c - a^{-q}b c^q)  \qtq{and} 
  \beta  = B(c - a^{-q}b c^q)^{p^s} + u^{q+1} - v^{q+1}.
\]
If $1 \le m \le q$, then $f_{r}(x)$ is \mfqtwo \ifa 
one of the four cases in \textup{\cref{r=ps}} holds.
\end{theorem}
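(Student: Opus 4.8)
The plan is to run the same argument as in the proof of \cref{r=ps}. By \cref{thm:main-mto1}, $f_r(x)$ is $m$-to-$1$ on $\f_{q^2}$ if and only if $m \mid q$ and the associated polynomial $g_r(x)$ is $m$-to-$1$ on the subfield $\fq$; so it suffices to show that, under the hypothesis $(a^q/b)^{r}A = -B$, the polynomial $g_r$ agrees on $\fq$ with a linearized binomial over $\fp$ shifted by a constant, and that this binomial has exactly the shape $\xi^{(1-p^s)k}\alpha\,x^{p^s} + \beta x$ with the $\alpha,\beta$ of the statement. The four cases will then follow from \cref{thm:mto1_bxqs-cxqt} exactly as in the proof of \cref{r=ps}.

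First I would use that for $x \in \fq$ one has $(\xi^{-k}x + c)^{q} = \xi^{-qk}x + c^{q}$, together with $\xi^{(1-q)k} = a^q/b$ and the additivity of the $p^s$-th power map. Writing $r = p^s + 1$ and expanding
\[
  (\xi^{-k}x + c)^{r} = (\xi^{-kp^s}x^{p^s} + c^{p^s})(\xi^{-k}x + c)
\]
and similarly $(\xi^{-qk}x + c^{q})^{r}$, I substitute into
\[
  g_r(x) = A\xi^{k}(\xi^{-qk}x + c^{q})^{r} + B\xi^{k}(\xi^{-k}x + c)^{r} + (u^{q+1} - v^{q+1})x
\]
and collect powers of $x$. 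The coefficient of $x^{p^s+1}$ works out to a nonzero scalar times $A(a^q/b)^{r} + B$, and this vanishes precisely because of the hypothesis $(a^q/b)^{r}A = -B$; equivalently $A = -(b/a^q)^{r}B$.

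Next I would substitute $A = -(b/a^q)^{r}B$ into the coefficients of $x^{p^s}$ and of $x$. Using the Frobenius identity $(z - w)^{p^s} = z^{p^s} - w^{p^s}$, the coefficient of $x^{p^s}$ collapses to $\xi^{(1-p^s)k}\,B(c - a^{-q}bc^{q}) = \xi^{(1-p^s)k}\alpha$, and the coefficient of $x$ collapses to $B(c - a^{-q}bc^{q})^{p^s} + u^{q+1} - v^{q+1} = \beta$, while the constant term lies in $\fq$ (as \cref{thm:main-mto1} already guarantees $g_r \in \fqx$). Thus, for $x \in \fq$,
\[
  g_r(x) = \xi^{(1-p^s)k}\alpha\, x^{p^s} + \beta x + \text{const},
\]
which is exactly the normal form of the $g_r$ encountered in the proof of \cref{r=ps} (with $r$ there replaced by $p^s$), only with the present $\alpha$ and $\beta$. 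Since a constant shift does not affect the many-to-one behaviour, and since $(\xi^{(1-p^s)k})^{(q-1)/d} = (\xi^{(q-1)k})^{(1-p^s)/d} = (a^q/b)^{(p^s-1)/d}$, applying \cref{thm:mto1_bxqs-cxqt} to the binomial $\xi^{(1-p^s)k}\alpha\,x^{p^s} + \beta x$ over the prime subfield $\fp$ of $\fq$ when $\alpha \neq 0$ (so that the index becomes $p^{(s,n)}$ and $d = p^{(s,n)} - 1$), and observing that $g_r$ is affine when $\alpha = 0$, shows that $g_r$ is $m$-to-$1$ on $\fq$ exactly under the four conditions listed in \cref{r=ps}. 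Together with the reduction of \cref{thm:main-mto1}, this finishes the proof.

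The main obstacle is purely the bookkeeping in the two binomial expansions: one has to verify, via repeated use of the Frobenius identity and the relation $A = -(b/a^q)^{r}B$, that the $x$-coefficient simplifies precisely to $B(c - a^{-q}bc^{q})^{p^s} + u^{q+1} - v^{q+1}$, the cross terms in $c$ and $c^{q}$ combining correctly with the $u^{q+1} - v^{q+1}$ piece. Once the linearized-binomial normal form of $g_r$ on $\fq$ is established, nothing new is needed beyond the argument already given for \cref{r=ps}.
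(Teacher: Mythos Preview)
Your proposal is correct and follows essentially the same route as the paper: expand $g_r(x)$ using the Frobenius additivity of the $p^s$-th power, use the hypothesis $(a^q/b)^{r}A=-B$ to kill the $x^{p^s+1}$ term, and reduce $g_r$ on $\fq$ to the affine form $\xi^{(1-p^s)k}\alpha\,x^{p^s}+\beta x+\text{const}$, after which \cref{thm:mto1_bxqs-cxqt} (with base $\fp$ and extension $\fq=\fpn$) yields exactly the four cases of \cref{r=ps}. The only cosmetic difference is that the paper carries out the coefficient identities one by one (including the intermediate step $A\xi^{k}\xi^{-qk}=-B(b/a^q)^{p^s}$), whereas you package the $x$-coefficient verification into a single Frobenius identity; the content is the same.
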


\begin{proof}
Recall that $\xi^{(1-q)k} = a^q/b$ and $(a^q/b)^{r} A + B = 0$. Then 
\begin{align*}
& A \xi^{k} \xi^{-qkr} + B \xi^{k} \xi^{-kr} 
   = \xi^{(1-r)k} ( (a^q/b)^{r} A + B) = 0, \\ 
& A \xi^{k} \xi^{-qk} = A (a^q/b) = -B(a^q/b)^{1-r} = -B(b/a^q)^{p^s}, \\ 
& A \xi^{k} \xi^{-qkp^s}c^q + B \xi^{k} \xi^{-kp^s}c  
= \xi^{(1-p^s)k}((a^q/b)^{p^s} A c^q + Bc)  \\
& \quad = \xi^{(1-p^s)k}((a^q/b)^{r} A (b/a^q)c^q + Bc)   \\
& \quad = \xi^{(1-p^s)k}( -B(b/a^q)c^q + Bc ) \\
& \quad = \xi^{(2-r)k} \alpha. 
\end{align*}
Therefore,
\begin{align*}
    g_r(x) 
    & =   A \xi^{k} (\xi^{-qk} x + c^{q})^{r}
        + B \xi^{k} (\xi^{-k}  x + c)^{r}  
        + (u^{q+1} - v^{q+1}) x \\
    & = \xi^{(2-r)k} \alpha x^{p^s} + \beta x + \xi^k(Ac^{q r} + Bc^{r}).   
\end{align*}
Note that 
$(\xi^{(2-r)k})^{\frac{q-1}{d}} 
= (\xi^{(q-1)k})^\frac{2-r}{d} 
= (a^q/b)^\frac{r-2}{d}$.
Then the result follows from \cref{thm:mto1_bxqs-cxqt}. 
\end{proof}
 
\cref{r=ps+1} generalizes \cite[Proposition~2]{LLi18}, 
\cite[Propositions~2]{LiCao2389}, and 
\cite[Theorems~3.1 and 3.14]{ChenK+25} 
where $m = a = - b = 1$. 
It also generalizes \cite[Theorem~5.8]{2to1-YuanZW21} 
(where $m = 2$, $a = b = 1$, $u = 0$, and $v \in \fq$) and  
\cite[Propositions~4.22 and~4.26]{nto1-NiuLQL23} 
(where $m = 3$ or $2^s$, $a = -b = v = 1$, and $u = 0$).

\section{Inverses and involutions}\label{sec:inv}

For any \onetoone polynomial mapping $f(x)$ over $\fq$, 
there exists a polynomial $f^{-1}(x) \in \fqx$ such that 
$f^{-1}(f(c)) = c$ for each $c \in \fq$, and $f^{-1}(x)$ 
is unique in the sense of reduction modulo $x^q - x$.
We call $f^{-1}(x)$ the \emph{inverse} of~$f(x)$ on $\fq$. 
A polynomial $f(x) \in \fqx$ is called an \emph{involution} of~$\fq$ 
if $f(f(x)) = x$ for any $x \in \fq$, i.e.,
$f(x)$ permutes $\fq$ and its inverse is itself.
Recently, finding compositional inverses of permutation polynomials 
has attracted a lot of attention; see for example
\cite{ZhengRDP,Zheng7deg,NiuLQW21,Yuan243070,WangInverse24}.  
In this section we study the inverses of \onetoone mappings obtained in this paper, 
which generalized several results in \cite{WuYGL24}.  
Moreover, from those \twotoone mappings obtained in this paper, 
we can also give an explicit construction of involutions associated with them.  

\subsection{Inverses of one-to-one mappings}

\begin{theorem}\label{thm:main-inv_gen}
Let $a, b, c, u, v \in \f_{q^2}$ satisfy $a^{q+1} = b^{q+1}$ 
and $a v \ne b u$. Let $h(x) \in \fqtwox$ and
\begin{equation*}\label{eq:f_gen}
  f(x) = h(a x^q + b x + c) + u x^q + v x.
\end{equation*}
Assume $\xi$ is a primitive element of $\f_{q^2}$ 
and $1 \leq k \leq q+1$ such that $b = \xi^{(q-1)k} a^{q}$. Let 
\[
\bar{\lambda}(x) = \xi^{k}(Ax^q + Bx) \qtq{and}
H(x) = h(\xi^{-k} x + c) + a^{-1} u \xi^{-k} x.
\]
If $f(x)$ is \mfield{1}{\f_{q^2}},
then its inverse on $\fqtwo$ is given by
\[
  f^{-1} (x) =  a A^{-1} H 
  \big(g^{-1}(\bar{\lambda}(x)) \big) - a A^{-1} x,
\]
where $g^{-1}(x)$ is the inverse of $g(x)$ on $\fq$.
\end{theorem}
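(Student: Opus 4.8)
The plan is to perform a diagram chase on the commutative square $\bar\lambda \circ f = g \circ \lambda$ established in the proof of \cref{thm:main_gen}, where $\lambda(x) = \xi^{k}(ax^{q}+bx)$ and $\bar\lambda(x) = \xi^{k}(Ax^{q}+Bx)$ are trace functions mapping $\fqtwo$ onto $\fq$. The crucial auxiliary identity, already recorded in that proof, is $f(x) = H(\lambda(x)) - a^{-1}A x$ with $H(t) = h(\xi^{-k}t+c) + a^{-1}u\xi^{-k}t$; I would re-derive it in one line by writing $ax^{q}+bx = \xi^{-k}\lambda(x)$, so $x^{q} = a^{-1}\xi^{-k}\lambda(x) - a^{-1}b x$ and hence $ux^{q}+vx = a^{-1}u\xi^{-k}\lambda(x) - a^{-1}(bu-av)x$, which upon substitution into $f(x) = h(\xi^{-k}\lambda(x)+c) + ux^{q}+vx$ yields the stated form. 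Here $a \neq 0$ and $A = bu-av \neq 0$ by \cref{lem:abAB2}, so $a^{-1}$ and $A^{-1}$ are legitimate.

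Since $f$ is \onetoone on $\fqtwo$, \cref{thm:main_gen} with $m=1$ shows $g$ is \onetoone on $\fq$ (equivalently, $\bar\lambda\circ f = g\circ\lambda$ is onto $\fq$ and $\lambda$ is onto $\fq$, forcing $g$ onto, hence bijective on the finite set $\fq$); thus $g^{-1}(x)\in\fqx$ exists, and as $\bar\lambda(\fqtwo)=\fq$ the value $g^{-1}(\bar\lambda(x))$ lies in $\fq\subseteq\fqtwo$, so $H\big(g^{-1}(\bar\lambda(x))\big)\in\fqtwo$ is well defined. Now fix an arbitrary $y\in\fqtwo$ and set $x = f^{-1}(y)$, i.e.\ $y = f(x)$. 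Applying $\bar\lambda$ and using commutativity of the diagram gives $\bar\lambda(y) = \bar\lambda(f(x)) = g(\lambda(x))$, hence $\lambda(x) = g^{-1}(\bar\lambda(y))$. Substituting into $y = H(\lambda(x)) - a^{-1}A x$ gives $y = H\big(g^{-1}(\bar\lambda(y))\big) - a^{-1}A x$, and solving for $x$ produces
\[
x \;=\; a A^{-1} H\big(g^{-1}(\bar\lambda(y))\big) - a A^{-1} y .
\]
Since $y$ was arbitrary, this is exactly the claimed formula for $f^{-1}$.

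I do not anticipate a real obstacle: the argument is a routine chase once the identity $f(x)=H(\lambda(x))-a^{-1}Ax$ and the bijectivity of $g$ are in place. The only points meriting a word of care are (i) verifying that each object in the formula is genuinely a polynomial over $\fqtwo$ --- $g^{-1}(x)\in\fqx$, $\bar\lambda(x)=\xi^{k}(Ax^{q}+Bx)\in\fqtwox$, $H(x)\in\fqtwox$ --- so that the composition $aA^{-1}H(g^{-1}(\bar\lambda(x)))-aA^{-1}x$, reduced modulo $x^{q^{2}}-x$, is the polynomial inverse of $f$ in the sense fixed at the start of \cref{sec:inv}; and (ii) keeping the constants $\xi^{\pm k}$, $a$, $b$, $A$ straight through the substitution.
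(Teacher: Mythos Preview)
Your argument is correct and rests on exactly the same two ingredients the paper uses: the commutative relation $\bar\lambda\circ f = g\circ\lambda$ from the proof of \cref{thm:main_gen}, and the rewriting $f(x)=H(\lambda(x))-a^{-1}Ax$ (also recorded there). The only difference is packaging: the paper introduces the pair-valued bijections $\phi(x)=(\lambda(x),\,x-\lambda(x))$ and $\bar\phi(x)=(\bar\lambda(x),\,x-\bar\lambda(x))$, builds a map $\psi$ with $\bar\phi\circ f=\psi\circ\phi$, and reads off $f^{-1}=\phi^{-1}\circ\psi^{-1}\circ\bar\phi$; you instead solve $y=H(\lambda(x))-a^{-1}Ax$ for $x$ directly after recovering $\lambda(x)=g^{-1}(\bar\lambda(y))$. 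Your direct chase is arguably cleaner and avoids the auxiliary two-coordinate maps, while the paper's formulation makes the ``local inverse'' structure more explicit; mathematically the two are equivalent.
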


\begin{proof}
Let $\lambda(x) = \xi^{k}(ax^q + bx)$,
$\phi(x) = (\lambda(x), x - \lambda(x))$,
$\bar{\phi}(x) = (\bar{\lambda}(x), x - \bar{\lambda}(x))$, and
\[
\psi (y, z)  = (g(y), H(y) - g(y) - a^{-1} A (y+z)).
\]
By \cref{eq:Lf=gL}, we have
$\bar{\lambda} \circ f = g \circ \lambda$
and so $\bar{\phi} \circ f = \psi \circ \phi$, i.e.,
the next diagram is commutative:
\begin{equation*}
\xymatrix{
    \f_{q^2} \ar[r]^{f}\ar[d]_{\phi}   & \f_{q^2} \ar[d]^{\bar{\phi}} \\
    \phi(\f_{q^2}) \ar[r]^{\psi}       & \bar{\phi}(\f_{q^2}).
}
\end{equation*}
It is easy to verify that
$\phi^{-1}(y, z) = y + z$ and
\[
  \psi^{-1} (y, z) = (g^{-1}(y), 
  a A^{-1} \big( H(g^{-1}(y)) - y - z \big) - g^{-1}(y)).
\]
Substituting $\bar{\phi}$, $\psi^{-1}$, $\phi^{-1}$ into
$f^{-1} = \phi^{-1} \circ \psi^{-1}  \circ  \bar{\phi}$
gives the desire result.
\end{proof}

This theorem converts the computation of $f^{-1}(x)$  over $\fqtwo$
into the computation of $g^{-1}(x)$ over $\fq$.
Assume $g(x) \in \fqx$ and $\alpha$, $\beta$, $\gamma \in \fq$ 
with $\alpha \ne 0$.
Then $g(x)$ is \onetoone on $\fq$ \ifa so is
$\bar{g}(x) = \alpha g(x+\beta) + \gamma$.
By choosing $\alpha$, $\beta$, $\gamma$ suitably, 
we can obtain $\bar{g}(x)$ in {\it normalized form},
that is, $\bar{g}(x)$ is monic, $\bar{g}(0) = 0$, 
and when the degree~$d$ of~$\bar{g}(x)$
is not divisible by the characteristic of $\fq$,
the coefficient of $x^{d-1}$ is~$0$. 
It suffices to compute the inverse of the associated normalized polynomials $\bar{g}(x)$. 
In the previous sections, all $g_r(x)$'s are of degree $\le 4$ or linearized binomials and their inverses can be completely determined using the results listed in
\cite{Zheng7deg,CoulterH04,Wu-L-bi}. 
Substituting these inverses into \cref{thm:main-inv_gen}, 
we can obtain the inverses of all \onetoone $f_r(x)$'s in this paper. For example,  Theorems~3.6 to~3.9 in \cite{WuYGL24} are special cases of \cref{thm:main-inv_gen}. 

For the convenience of the reader, we include the table of the inverses 
of all normalized PPs of degree $\leq 5$ (\cref{tabel:smalldeg}), 
and an explicit formula of the inverse of a linearized permutation binomial (\cref{L-bi}).  
In particular, we provide the following corollary for $h(x) = x^2$ as an explicit demonstration. 

\begin{corollary}
Let $a, b, c, u, v \in \f_{q^2}$ satisfy 
$a^{q+1} = b^{q+1}$ and $a v \ne b u$. Let
\begin{align*}
    \alpha = a^q B^q + b B, \quad
    \beta  = 2 B c + 2 B^q c^q + u^{q+1} - v^{q+1}, \quad
    C = A c^{2q} + B c^2. 
\end{align*} 
If $h(x) = x^2$ and $f(x)$ in \cref{thm:main-inv_gen} is \mfield{1}{\f_{q^2}},
then its inverse on $\fqtwo$ is given by
\[
  f^{-1} (x) = a A^{-1} \big((\phi(x) + c)^2 
  + a^{-1} u \phi(x) - x \big),
\]
where
\[
\phi(x) =
\begin{cases}
  \beta^{-1} (A x^q + B x - C) 
        & \text{if $\alpha = 0$ and $\beta \ne 0$,}\\
  a \big(b^{-1} \alpha^{-1} (A x^q + B x - C)\big)^{q/2} 
        & \text{if $\alpha \ne 0$, $\beta = 0$, and $q$ is even.}
\end{cases}
\]
\end{corollary}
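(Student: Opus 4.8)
The plan is to specialize \cref{thm:main-inv_gen} to $h(x) = x^2$ and then make the formula for $g^{-1}(x)$ explicit using the degree-$2$ case. First, I would compute $g(x)$ explicitly for $h(x) = x^2$. From the proof of \cref{thm:r=2} we already know that, for $x \in \fq$,
\[
g(x) = \xi^{-k} b^{-1} \alpha x^2 + \beta x + \xi^{k} B c^2 + (\xi^{k} B c^2)^q,
\]
with $\alpha = a^q B^q + b B$ and $\beta = 2Bc + 2B^q c^q + u^{q+1} - v^{q+1}$; this is the associated polynomial in \cref{thm:main-inv_gen}. Since $f(x)$ permutes $\fqtwo$, by \cref{thm:r=2} exactly one of the two cases holds: either $\alpha = 0$ and $\beta \ne 0$, or $\alpha \ne 0$, $\beta = 0$, and $q$ is even. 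In each case $g(x)$ is a genuine degree-$\le 2$ permutation polynomial of $\fq$ whose inverse is elementary to write down.

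The key computational step is inverting $g$ on $\fq$ in each of the two cases. If $\alpha = 0$, then $g(x) = \beta x + g(0)$ is affine, so $g^{-1}(y) = \beta^{-1}(y - g(0))$. If $\alpha \ne 0$ and $q$ is even, then $g(x) = \xi^{-k} b^{-1}\alpha x^2 + g(0)$, and since squaring is a bijection on $\fq$ with inverse $y \mapsto y^{q/2}$, we get $g^{-1}(y) = \bigl( b \xi^{k} \alpha^{-1}(y - g(0))\bigr)^{q/2}$. Next I would substitute $y = \bar\lambda(x) = \xi^{k}(A x^q + B x)$. The main simplification is recognizing that the constant term $g(0) = \xi^{k} B c^2 + (\xi^{k}Bc^2)^q = \xi^k(Ac^{2q} + Bc^2) = \xi^k C$, using $A\xi^k = (B\xi^k)^q$ from \cref{lem:abAB}; this is exactly the quantity $C$ introduced in the corollary. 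So $\bar\lambda(x) - g(0) = \xi^k(Ax^q + Bx - C)$, and the factor $\xi^k$ cancels neatly against the $\xi^{-k}$ coming from $b\xi^k\alpha^{-1}$ in the quadratic case (and against nothing in the affine case, where the $\xi^k$ in $\beta^{-1}\xi^k(\cdots)$ — wait, $\beta \in \fq$ — so actually $g^{-1}(\bar\lambda(x)) = \beta^{-1}\xi^k(Ax^q + Bx - C)$, and one must check this lies in $\fq$; it does, since $\xi^k(Ax^q+Bx) = \bar\lambda(x) \in \fq$). Matching the two cases against the stated piecewise definition of $\phi(x)$ — in the quadratic case $b\xi^k\alpha^{-1} = a\cdot(a^{-1}b\xi^k\alpha^{-1})$ and, raising to the $q/2$ power, one pulls out $a^{q/2}$; I would verify $a^{q/2} = a$ does \emph{not} hold in general, so more care is needed: actually the cleanest route is to write $g^{-1}(\bar\lambda(x)) = a\bigl(a^{-2}b\xi^k\alpha^{-1}\xi^k(Ax^q+Bx-C)\bigr)^{q/2}$ and absorb, but the stated form suggests the identity $(a^{-1}b^{-1}\alpha^{-1})$ appears — I would reconcile this by noting $\xi^{(1-q)k} = a^q/b$ and $a^q A = bB^q$ from \cref{lem:abAB2} to rewrite $\xi^k/b$ in terms of $a$. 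This bookkeeping with the powers of $\xi^k$ and the factor $a$ is where I expect the only real friction.

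Finally, with $\phi(x) := g^{-1}(\bar\lambda(x))$ identified as the stated piecewise expression, I would invoke \cref{thm:main-inv_gen} directly: there $H(x) = h(\xi^{-k}x + c) + a^{-1}u\xi^{-k}x = (\xi^{-k}x+c)^2 + a^{-1}u\xi^{-k}x$, so $H(g^{-1}(\bar\lambda(x))) = (\xi^{-k}\phi(x)\cdot\xi^k \cdots)$ — here I must be careful that the $\phi(x)$ in the corollary's statement is $\xi^{-k}$ times the $g^{-1}(\bar\lambda(x))$ of the theorem, i.e. the corollary's $\phi$ already absorbs $\xi^{-k}$, making $H(g^{-1}(\bar\lambda(x))) = (\phi(x)+c)^2 + a^{-1}u\,\phi(x)$. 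Then
\[
f^{-1}(x) = aA^{-1}H\bigl(g^{-1}(\bar\lambda(x))\bigr) - aA^{-1}x = aA^{-1}\bigl((\phi(x)+c)^2 + a^{-1}u\phi(x) - x\bigr),
\]
which is the claimed formula. The main obstacle is purely the clerical reconciliation of the $\xi^k$ factors between the theorem's normalization and the corollary's $\phi$; once the substitution $\phi_{\text{cor}}(x) = \xi^{-k}g^{-1}(\bar\lambda(x))$ is pinned down and checked to match the two displayed cases via $\xi^{(1-q)k} = a^q/b$, the result is immediate from \cref{thm:main-inv_gen,thm:r=2}.
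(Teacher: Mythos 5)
Your proposal follows the paper's proof essentially verbatim: specialize \cref{thm:main-inv_gen} to $h(x)=x^2$, read off $g(x)=\xi^{-k}b^{-1}\alpha x^2+\beta x+\gamma$ from the proof of \cref{thm:r=2}, observe that $\gamma=\xi^{k}C$ via $A\xi^{k}=(B\xi^{k})^{q}$, invert $g$ in the two admissible cases, and set $\phi(x)=\xi^{-k}g^{-1}(\bar{\lambda}(x))$. The one step you leave hanging --- the power-of-$\xi^{k}$ bookkeeping in the quadratic case --- does close, but not along the route you tentatively write down: $a\bigl(a^{-2}\xi^{2k}b\alpha^{-1}\ell(x)\bigr)^{q/2}=a^{1-q}\bigl(\xi^{2k}b\alpha^{-1}\ell(x)\bigr)^{q/2}$ differs from $g^{-1}(\bar{\lambda}(x))$ by the factor $a^{1-q}$, which is not $1$ in general. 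The correct completion is the other route you name: with $\ell(x)=Ax^{q}+Bx-C$,
\[
\xi^{-k}g^{-1}(\bar{\lambda}(x))
=\xi^{-k}\bigl(\xi^{2k}b\alpha^{-1}\ell(x)\bigr)^{q/2}
=\xi^{(q-1)k}\bigl(b\alpha^{-1}\ell(x)\bigr)^{q/2}
=(a/b^{q})\bigl(b\alpha^{-1}\ell(x)\bigr)^{q/2}
=a\bigl(b^{-1}\alpha^{-1}\ell(x)\bigr)^{q/2},
\]
using $\xi^{(q-1)k}=b/a^{q}=a/b^{q}$ (from $a^{q+1}=b^{q+1}$) and $b^{-q}=(b^{-2})^{q/2}$. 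With that line filled in, your argument coincides with the paper's.
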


\begin{proof}
In the proof of \cref{thm:r=2}, we have shown that  
$g(x) = \xi^{-k}b^{-1} \alpha x^2 + \beta x + \gamma  \in \fqx$,
where $\gamma = \xi^{k} Bc^2 + (\xi^{k} Bc^2)^q$.
By \cref{lem:abAB,lem:abAB2},
$\xi^{(q-1)k} B^q = (b / a^q) B^q = A$.
Write $\bar{\lambda}(x) = \xi^{k}(Ax^q + Bx)$
and $\ell(x) = A x^q + B x - C$. 
Then $\bar{\lambda}(x) - \gamma = \xi^{k} \ell(x)$.
If $\alpha = 0$ and $\beta \ne 0$, 
then $g(x) = \beta x + \gamma$.
Hence $g^{-1}(x) = \beta^{-1}(x-\gamma)$ and so
\begin{equation}\label{beta-ell}
\xi^{-k} g^{-1}(\bar{\lambda}(x))
= \beta^{-1} \ell(x).
\end{equation}
If $\alpha \ne 0$, $\beta = 0$, and $q$ is even, 
then $g(x) = \xi^{-k} b^{-1} \alpha x^2 + \gamma$.
By \cref{tabel:smalldeg}, the inverse $x^2$ on $\fq$ is $x^{q/2}$.
Hence $g^{-1}(x) = (\xi^{k} b \alpha^{-1}(x-\gamma))^{q/2}$  
and so
\begin{equation}\label{alpha-ell}
\begin{split}
    \xi^{-k} g^{-1}(\bar{\lambda}(x))
    & = \xi^{-k} (\xi^{2k} b \alpha^{-1} \ell(x) )^{q/2}
     = \xi^{(q-1)k} (b \alpha^{-1} \ell(x))^{q/2}\\
    & = (a/b^q) (b \alpha^{-1} \ell(x))^{q/2}
     = a (b^{-1} \alpha^{-1} \ell(x))^{q/2}.
\end{split}
\end{equation}
Substituting \eqref{beta-ell} and \eqref{alpha-ell} into
\cref{thm:main-inv_gen} gives the desire result.
\end{proof}

 \begin{table}[htbp] 
    \caption{All normalized \onetoone polynomials of degree $\leq 5$ over $\fq$ and their inverses \cite{Zheng7deg}}
    \label{tabel:smalldeg}
    \vspace{2pt}  
    \centering
    \renewcommand\arraystretch{1.3} 
\begin{threeparttable}
\begin{tabular}{l l l l} 
  \toprule[1pt]
  Normalized \onetoone $g(x)$ over $\fq$   & Inverse $g^{-1}(x)$ & $q$~~($n \geq 1$) \\
  \midrule
  $x$    & $x$        & any $q$ \\
  $x^2$  & $x^{q/2}$  & $q = 2^n$ \\
  $x^3$  & $x^{(aq-a+1)/3}$~~$(a \equiv 1-q \pmod 3)$
        & $q \not\equiv 1 \pmod 3$  \\
  $x^3 -ax$ ($a$ not a square)
    & $\displaystyle \sum_{i=0}^{n-1}a^{-\frac{3^{i+1} -1}{2}}x^{3^i}$
    & $q = 3^n$    \\
  $x^4$  &  $x^{q/4}$  & $q = 2^n$   \\
  $x^4 \pm 3x$ & $\mp(x^4 -3x)$ & $q=7$ \\
  $x^4 +ax$ ($a$ not a cube)
    & $a^{\frac{q-1}{3}}(1+a^{\frac{q-1}{3}})^{-1} \displaystyle
    \sum_{i=0}^{n-1}a^{-\frac{4^{i+1}-1}{3}}x^{4^i}$
    & $q = 2^{2n}$ \\
  $x^4 +bx^2 +ax$ ($ab \neq 0, S_{n} + aS_{n-2}^2 =1$)\tnote{*}
    & $\displaystyle \sum_{i=0}^{n-1}\big(S_{n-2-i}^{2^{i+1}}+a^{1 -2^{i+1}}S_{i}\big) x^{2^i}$
    & $q = 2^n$  \\
  $x^5$ & $x^{(aq-a+1)/5}$~~$(a \equiv (1-q)^3 \pmod{5})$
        & $q \not\equiv 1 \pmod 5$  \\
  $x^5 +ax$ ($a^2 =2$)  & $x^5 +ax$ & $q=9$\\
  $x^5 -ax$ ($a$ not a fourth power)
    & $a^{\frac{q-1}{4}}(1-a^{\frac{q-1}{4}})^{-1} \displaystyle
    \sum_{i=0}^{n-1}a^{-\frac{5^{i+1} -1}{4}}x^{5^i}$
    & $q = 5^n$   \\
  $x^5 \pm 2x^2$ & $x^5 \mp 2x^2$ & $q=7$\\[3pt]
  $x^5 +ax^3  +3a^2x$ ($a$ not a square)
    & $-a^2x^9 -ax^7 +4x^5 +4a^5x^3 -5a^4x$ & $q=13$ \\[3pt]
  $x^5 + ax^3 +5^{-1}a^2x$ ($a \neq 0$)
    & $\displaystyle \sum_{i=1}^{\lfloor k/2\rfloor}\frac{k}{k-i}\binom{k-i}{i}
    (5^{-1}a)^{5i}x^{k - 2i}$\tnote{\dag}
    & $q \equiv \pm 2 \pmod 5$ \\[12pt]
  $x^5 -2ax^3 +a^2x$ ($a$ not a square)
    & $\displaystyle \sum_{i=0}^{n-1}\sum_{j=0}^{n-1}
         2a^{\frac{q -5^{i+1} -5^{j+1} +1}{4}}
         x^{\frac{q +5^i +5^j -1}{2}}$
    & $q = 5^n$    \\[6pt]
  $x^5 +ax^3 \pm x^2 +3a^2x$ ($a$ not a square)
    & $x^5 \pm (2ax^4 - 2x^2) +a^2x^3 +ax$ & $q=7$ \\[3pt]
  \bottomrule[1pt]
  \end{tabular}
  \begin{tablenotes}
        \footnotesize
    \item[*] 
        The sequence $\{S_{i}\}$ is defined as follows: $S_{-1} =0$, $S_{0} =1$,
        $S_{i} = b^{2^{i-1}}S_{i-1} +a^{2^{i-1}}S_{i-2}$ for $1 \leq i \leq n$.
    \item[\dag] The notation $\lfloor k/2\rfloor$ denotes the largest integer $\leq k/2$ and $k = (3q^2-2)/5$.
  \end{tablenotes}
\end{threeparttable}
\end{table}

\begin{proposition}[\cite{CoulterH04, Wu-L-bi}]\label{L-bi}
Let $L(x) =x^{q^r} -ax$, where $a \in \fqnstar$ and $1 \leq r \leq n-1$.
Then $L(x)$ is \mfield{1}{\fqn} \ifa $\nmqnqd(a) \neq 1$,
where $d = \gcd(n, r)$. In this case, its inverse on $\fqn$ is
\[
L^{-1}(x)=\frac{\nmqnqd(a)}{1-\nmqnqd(a)}
\sum_{i=0}^{n/d -1}a^{-\frac{q^{(i+1)r} -1}{q^r-1}}x^{q^{i r}}.
\]
\end{proposition}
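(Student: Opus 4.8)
First I would dispatch the one-to-one criterion, which is immediate from \cref{thm:mto1_bxqs-cxqt}: taking $b=1$, $c=a$, $s=r$, $t=0$ there gives $s-t=r$, hence $m=q^{(r,n)}=q^{d}$ and $\nm_{q^n/m}(c/b)=\nmqnqd(a)$, so $L(x)$ is \mfield{1}{\fqn} \ifa $\nmqnqd(a)\neq 1$. This is exactly the condition under which the leading coefficient $\nmqnqd(a)/(1-\nmqnqd(a))$ of the proposed inverse is defined, so the two halves of the statement are consistent.

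For the inverse formula, the plan is to verify directly that the polynomial $M(x)$ on the right-hand side of the claimed identity satisfies $M(L(x))=x$ for every $x\in\fqn$; since $L$ permutes $\fqn$, this forces $M\equiv L^{-1}\pmod{x^{q^n}-x}$. Writing $N=n/d$, $c_i=a^{-(q^{(i+1)r}-1)/(q^r-1)}=\prod_{j=0}^{i}a^{-q^{jr}}$ for $0\le i\le N-1$, and $\lambda=\nmqnqd(a)/(1-\nmqnqd(a))$, we have $M(x)=\lambda\sum_{i=0}^{N-1}c_i x^{q^{ir}}$. Since $z\mapsto z^{q^{ir}}$ is additive, substituting $L(x)=x^{q^r}-ax$ gives
\[
M(L(x))=\lambda\Big(\sum_{i=0}^{N-1}c_i x^{q^{(i+1)r}}-\sum_{i=0}^{N-1}c_i a^{q^{ir}}x^{q^{ir}}\Big).
\]
I would then reindex the first sum ($j=i+1$) to obtain $\sum_{j=1}^{N-1}c_{j-1}x^{q^{jr}}+c_{N-1}x^{q^{Nr}}$, and use $c_0 a=1$ together with $c_i a^{q^{ir}}=c_{i-1}$ ($1\le i\le N-1$) to rewrite the second sum as $x+\sum_{i=1}^{N-1}c_{i-1}x^{q^{ir}}$. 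All terms in $x^{q^{jr}}$ with $1\le j\le N-1$ then cancel, leaving $M(L(x))=\lambda\big(c_{N-1}x^{q^{Nr}}-x\big)$. Since $Nr=\operatorname{lcm}(n,r)$ is a multiple of $n$, we have $x^{q^{Nr}}=x$ on $\fqn$, and since $c_{N-1}=\prod_{j=0}^{N-1}a^{-q^{jr}}=\nmqnqd(a)^{-1}$ we obtain $\lambda(c_{N-1}-1)=1$, i.e.\ $M(L(x))=x$.

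The computation is routine; the points I would be most careful about are the two identities that make the constant collapse to $1$: that $Nr=\operatorname{lcm}(n,r)$ is a multiple of $n$ (so $x^{q^{Nr}}=x$ on $\fqn$), and that $\prod_{j=0}^{N-1}a^{q^{jr}}=\nmqnqd(a)$, which holds because $\gcd(n,r)=d$ makes $x\mapsto x^{q^{r}}$ a generator of $\operatorname{Gal}(\fqn/\f_{q^d})$, so $\{x\mapsto x^{q^{jr}}:0\le j\le N-1\}$ runs over that Galois group exactly once. An essentially equivalent alternative would factor the central element $x^{q^n}-x$ in the twisted polynomial ring generated by $x\mapsto x^{q^r}$ over $\fqn$, but I expect the direct telescoping above to be the cleanest; it recovers the formula of Coulter--Henderson \cite{CoulterH04} (and of Wu \cite{Wu-L-bi}).
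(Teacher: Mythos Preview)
Your verification is correct. Note that the paper does not actually give its own proof of this proposition: it is stated as a quoted result from \cite{CoulterH04,Wu-L-bi} and included only so that the reader can plug the formula for $g^{-1}$ into \cref{thm:main-inv_gen}. Your argument---invoking \cref{thm:mto1_bxqs-cxqt} for the bijectivity criterion and then directly checking $M(L(x))=x$ by the telescoping computation---is exactly the standard one behind those references, and all the delicate points (that $c_0a=1$, $c_ia^{q^{ir}}=c_{i-1}$, $x^{q^{Nr}}=x$ on $\fqn$, and $\prod_{j=0}^{N-1}a^{q^{jr}}=\nmqnqd(a)$ because $\{jr\bmod n:0\le j<N\}=\{0,d,\dots,(N-1)d\}$) are handled correctly.
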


\subsection{Involutions from two-to-one mappings}

We next introduce a way of constructing involutions from \twotoone mappings,
which is equivalent to that in {\cite[Theorem~3.1]{2to1-YuanZW21}}.

\begin{definition}\label{defn:ff=I}
Let $f(x) \in \fqx$ be \mfield{2}{\fq} with $q$ even. 
For any $x \neq y \in \fq$ with $f(x) = f(y)$, 
there exists a polynomial $I_f(x) \in \fqx$ such that 
$I_f(x) = y$ and $I_f(y) = x$, and $I_f(x)$ 
is unique in the sense of reduction modulo $x^q - x$.
We call $I_f(x)$ the involution of~$f(x)$ on $\fq$. 
\end{definition}

The involution of a \twotoone $f(x)$ on~$\fq$ 
is a permutation of~$\fq$ without fixed point.

\begin{theorem}\label{thm:invo}
Let $a, b, c, u, v \in \f_{q^2}$ satisfy $a^{q+1} = b^{q+1}$ 
and $a v \ne b u$. Let $h(x) \in \fqtwox$ and
\begin{equation*}\label{eq:f_gen}
  f(x) = h(a x^q + b x + c) + u x^q + v x.
\end{equation*}
Assume $\xi$ is a primitive element of $\f_{q^2}$ 
and $1 \leq k \leq q+1$ such that $b = \xi^{(q-1)k} a^{q}$. Let
\[
\lambda(x) = \xi^{k} (ax^q + bx) \qtq{and}
H(x) = h(\xi^{-k} x + c) +  a^{-1} u \xi^{-k} x.
\]
If $f(x)$ is \mfield{2}{\fqtwo}, then $q$ is even 
and the involution of $f(x)$ on $\fqtwo$ is given by 
\[
  I_{f} (x) = a A^{-1} f (x) 
    + a A^{-1} H \big( I_{g}( \lambda(x) ) \big),
\]
where $I_{g}(x)$ is the involution of $g(x)$ on $\fq$. 
In particular, if $h(x) = x^{r}$ and 
$I_{g}(x) = x + \alpha$, then
\begin{equation}\label{eq:Ifr}
  I_{f}(x) 
  = a A^{-1} (ax^q + bx + c)^{r} 
    + a A^{-1} (ax^q + bx + c + \xi^{-k} \alpha)^{r} 
    + x +  A^{-1} u \xi^{-k} \alpha.
\end{equation}
\end{theorem}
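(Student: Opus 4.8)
The plan is to use the commutative diagram from \cref{thm:main_gen}, but this time with the refinement of \cref{thm:main-inv_gen} that tracks both $\lambda(x)$ and the complementary coordinate $x - \lambda(x)$. Recall that in the proof of \cref{thm:main-inv_gen} we introduced $\phi(x) = (\lambda(x),\, x-\lambda(x))$ and $\psi(y,z) = (g(y),\, H(y) - g(y) - a^{-1}A(y+z))$, so that $\bar\phi \circ f = \psi \circ \phi$ with $\phi$ a bijection onto $\phi(\fqtwo)$ and $\psi$ a bijection onto $\bar\phi(\fqtwo)$. When $f(x)$ is \mfield{2}{\fqtwo}, \cref{thm:main-mto1} (applied with $h$ general via \cref{thm:main_gen}) forces $2 \mid q$, so $q$ is even, and it forces $g(x)$ to be \mfield{2}{\fq}; hence $I_g(x)$ is well-defined by \cref{defn:ff=I}. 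First I would show that for $x \ne x'$ in $\fqtwo$ with $f(x) = f(x')$, the images $\lambda(x)$ and $\lambda(x')$ satisfy $g(\lambda(x)) = g(\lambda(x'))$ via the commutative diagram, and then argue that $\lambda(x) \ne \lambda(x')$: if $\lambda(x) = \lambda(x')$, then $x, x'$ lie in the same fibre $\lambda^{-1}(s)$, on which $f$ is \onetoone (as shown in the proof of \cref{thm:main_gen}, since $f$ restricted to $\lambda^{-1}(s)$ has the form $H(s) - a^{-1}A x$), contradicting $f(x) = f(x')$ with $x \ne x'$. Therefore $\lambda(x') = I_g(\lambda(x))$ by the definition of $I_g$.

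Next I would reconstruct $x'$ from the data $(\lambda(x'),\, f(x'))$. Since $f(x') = f(x)$ and $f$ on the fibre $\lambda^{-1}(\lambda(x'))$ is the affine map $w \mapsto H(\lambda(x')) - a^{-1} A w$, we may solve $f(x) = H(\lambda(x')) - a^{-1}A x'$ to get $x' = a A^{-1}\bigl(H(\lambda(x')) - f(x)\bigr)$. Substituting $\lambda(x') = I_g(\lambda(x))$ yields
\[
  x' = a A^{-1} H\bigl(I_g(\lambda(x))\bigr) - a A^{-1} f(x) = I_f(x),
\]
which, after noting that $-aA^{-1} = aA^{-1}$ in characteristic $2$, matches the claimed formula $I_f(x) = aA^{-1} f(x) + aA^{-1} H(I_g(\lambda(x)))$. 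One must also check the symmetric condition $I_f(x') = x$, which follows by the same computation with the roles of $x$ and $x'$ swapped (using $I_g(I_g(s)) = s$), together with uniqueness modulo $x^q - x$; this makes $I_f$ a genuine involution as in \cref{defn:ff=I}.

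Finally, for the specialization $h(x) = x^r$ and $I_g(x) = x + \alpha$, I would substitute $H(x) = (\xi^{-k}x + c)^r + a^{-1}u\xi^{-k}x$ and $\lambda(x) = \xi^k(ax^q+bx)$ directly. We get $\lambda(x) + \alpha$ as the argument of $H$, and $\xi^{-k}(\lambda(x)+\alpha) = ax^q+bx+\xi^{-k}\alpha$, so $H(I_g(\lambda(x))) = (ax^q+bx+c+\xi^{-k}\alpha)^r + a^{-1}u\xi^{-k}(\lambda(x)+\alpha)$. Combining with $aA^{-1}f(x) = aA^{-1}(ax^q+bx+c)^r + aA^{-1}(ux^q+vx)$ and simplifying the linear terms — here one uses $\lambda(x) = (\xi^{qk}a^qx)^q + \xi^{qk}a^qx$ is a trace, hence $aA^{-1}\cdot a^{-1}u\xi^{-k}\lambda(x)$ combines with the $ux^q + vx$ contribution, and the relations $a^{q+1}=b^{q+1}$, $A = bu - av$ from \cref{lem:abAB2} collapse the $x^q$ and $x$ coefficients to give exactly the coefficient $1$ on $x$ — produces \cref{eq:Ifr}, with the residual constant $A^{-1}u\xi^{-k}\alpha$. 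The main obstacle I anticipate is precisely this last bookkeeping: verifying that the accumulated linear-in-$x$ and linear-in-$x^q$ terms reduce to the single term $x$ (plus the stated constant), which requires careful use of $b = \xi^{(q-1)k}a^q$, the identity $a^q A = bB^q$, and the trace representation of $\lambda$. Everything else is a routine chase through the commutative diagram.
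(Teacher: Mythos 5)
Your proposal is correct and follows essentially the same route as the paper: use the commutative diagram $\bar\lambda\circ f = g\circ\lambda$ together with the fact that $f$ restricted to each fibre $\lambda^{-1}(s)$ is the affine map $w\mapsto H(s)-a^{-1}Aw$ to deduce $\lambda(y)=I_g(\lambda(x))$ for the partner $y$ of $x$, then solve the affine relation for $y=I_f(x)$. The invocation of $\phi,\psi$ from the inverse theorem is harmless extra scaffolding, and your linear-term bookkeeping for \cref{eq:Ifr} (collapsing to $x$ via $A=bu-av$ in characteristic $2$) is right.
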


\begin{proof}
By \cref{thm:main_gen}, $f$ is \mfield{2}{\fqtwo} 
\ifa $q$ is even and $g$ is \mfield{2}{\fq}. 
If $f$ is \mfield{2}{\fqtwo}, then for any $x \in \fqtwo$ 
there exists a unique $y \in \fqtwo$ with $y \neq x$ such that 
\begin{equation}\label{eq:fx=fy2}
    H(\lambda(x)) + a^{-1} A x 
    = f(x) = f(y) =
    H(\lambda(y)) + a^{-1} A y.
\end{equation}
Thus $\lambda(y) \ne \lambda(x)$. By \cref{eq:Lf=gL}, we have 
$\bar{\lambda} \circ f = g \circ \lambda$,
where $\bar{\lambda}(x) = \xi^{k}(Ax^q + Bx)$. Then
\[
    g(\lambda(x)) = \bar{\lambda}(f(x)) 
    = \bar{\lambda}(f(y)) = g(\lambda(y)).  
\]
Hence $\lambda(y) = I_{g}(\lambda(x))$ by \cref{defn:ff=I}. 
Substituting $\lambda(y) = I_{g}(\lambda(x))$ and $y =I_{f}(x)$
into \cref{eq:fx=fy2} yields the desired result.
\end{proof}

This theorem converts the computation of $I_{f}(x)$ 
into the computation of $I_{g}(x)$.
In the previous sections, each \twotoone polynomial 
$g_r(x)$ or $g_r^2(x)$ equals $L(x) + \delta$, 
and thus $I_{g_r} (x) = x + \alpha$,
where $L(x)$ is a \twotoone linearized binomial over $\fq$ 
and $\alpha$ is the non-zero root of $L(x)$ in $\fq$. 
Hence the involution of each \twotoone $f_r(x)$ 
in this paper is given by \cref{eq:Ifr}. 
In addition, the involutions in 
{\cite[Corollaries 5.11 and 5.17]{2to1-YuanZW21}} 
are also special cases of \cref{eq:Ifr},
and the involution in 
{\cite[Theorem~11]{MesYZ231315}} 
is a special case of \cref{thm:invo}.




\vspace{12pt}

\noindent
\textbf{Data availability} No datasets were generated or analysed during the current study.

\noindent
\textbf{Competing interests} The authors declare no competing interests.

\bibliography{jrnlabbr,ff}

\end{document}